\DeclareMathOperator*{\slim}{s--lim}
 \newcommand{\cs}{{\rm the Cauchy-Schwarz inequality }}
\newcommand{\N}{{\mathbb{N}}} 
\newcommand{\R}{{\mathbb{R}}}
 \renewcommand{\c}{{\rm c}}
\newcommand{\e}{{\rm e}} \newcommand{\ess}{{\rm ess}}
 \renewcommand{\i}{{\rm i}}
\renewcommand{\d}{{\rm d}}
\renewcommand{\Re}{{\rm Re}\,}
\DeclarePairedDelimiter\inp\langle\rangle
\newcommand\parb[2][]{#1 \big ( #2#1\big )}
 \renewcommand{\exp}{{\rm exp}}
 \newcommand{\mfor}{\text{ for }}
 \newcommand{\vB}{{\mathcal B}}
 \newcommand{\vH}{{\mathcal H}}
\theoremstyle{plain}
\newtheorem{thm}{Theorem}[section]
\newtheorem{proposition}[thm]{Proposition}
\newtheorem{lemma}[thm]{Lemma} \newtheorem{corollary}[thm]{Corollary}
\theoremstyle{definition} 
 \newtheorem{example}[thm]{Example}
\newtheorem{examples}[thm]{Examples}
 \newtheorem{cond}[thm]{Condition}
 \newtheorem{remark}[thm]{Remark}
\newtheorem{remarks}[thm]{Remarks}
 \newtheorem*{remarks*}{Remarks}
\newtheorem*{remark*}{Remark}
\numberwithin{equation}{section}
\title[Stationary scattering theory on mainifolds, I]{Stationary scattering theory on mainifolds, I}
\thanks{K.I. is supported by JSPS KAKENHI, grant nr. JP25800073, 17K05325. 
E.S. is supported by the Danish Council for Independent Research $|$ Natural Sciences, grant nr. DFF-4181-00042. \\
\quad This manuscript is published in J. Funct.\ Anal.\ with title ``Radiation condition bounds on manifolds with ends''}
\author{K. Ito}
\address[K. Ito]{Graduate School of Mathematical Sciences, The University of Tokyo\\
3-8-1 Komaba, Meguro-ku, Tokyo 153-8914, Japan}
\email{ito@ms.u-tokyo.ac.jp}
\author{E. Skibsted} \address[E. Skibsted]{Institut for Matematiske
  Fag \\
  Aarhus Universitet\\ Ny Munkegade, 8000 Aarhus C, Denmark}
\email{skibsted@math.au.dk}
\begin{document}
\begin{abstract} 
We study spectral theory for the 
Schr\"odinger operator on  manifolds possessing   an  escape
  function. A particular class of examples are manifolds  with
  Euclidean and/or hyperbolic ends. Certain exterior domains  for  possibly
unbounded obstacles are included. 
We prove Rellich's theorem, the limiting absorption principle,
radiation condition bounds and the Sommerfeld uniqueness result,
striving to extending and refining
previously known spectral results on
manifolds.
The proofs are given by an extensive use of
commutator arguments. These arguments have a classical spirit (essentially) not
involving energy cutoffs or microlocal analysis and require, 
presumably,  minimum regularity and decay properties of perturbations.
This paper has interest of its own right, but it also serves as a
basis for  the stationary scattering theory 
 developed fully in the sequel \cite{IS3}.
\end{abstract}

\maketitle
\tableofcontents

\section{Introduction}\label{sec:intr} 

Let $(M,g)$ be a connected Riemannian manifold.
In this paper we study the spectral theory for the geometric Schr\"odinger operator
\begin{align*}
H=H_0+V; \quad
H_0=-\tfrac12\Delta=\tfrac12p_i^*g^{ij}p_j,\ p_i=-\mathrm i\partial_i,
\end{align*}
on the Hilbert space ${\mathcal H}=L^2(M)$. 
In the last expression we formally move to local coordinates.
(Here and henceforth we do not assume the existence of a global frame
and we tacitly move back and forth to local coordinates.)
The potential $V$ is
real-valued and bounded, and the self-adjointness of $H$ is realized
by the Dirichlet boundary condition. 
Assuming a kind of \textit{end} structure on $M$,  
we prove in this paper Rellich's theorem, the limiting absorption
principle, radiation condition bounds and 
the Sommerfeld uniqueness result.
 Our assumptions are general enough to cover
for example  manifolds with  finitely many ends of
  (mixed) Euclidean and hyperbolic
types studied recently by Kumura  \cite{K4}. 
Another example is a `scattering manifold' as  introduced by Melrose \cite{Me}.
Our theory  also covers  certain exterior domains  for  possibly
unbounded and non-smooth obstacles in a manifold. 
For the Euclidean
model  certain 
unbounded regular exterior domains were studied previously by Constantin \cite{Co}, Minskii
\cite {Min} and Il'in \cite{Il1, Il2, Il3}. 

 To prove the above results we invent a commutator argument with some weight inside.
This commutator argument has a classical spirit, to some extent
resembling \cite {Min, Sa, K4} (see \cite { K4} for a
  more extensive list of references). In particular we are not going to
use Mourre theory \cite{Mo, JP, Do}. Rather our  `conjugate operator'
$A$ is the generator of a semigroup of  (a kind of) radial translations 
and not a group of radial dilations   which for a limited class
  of metrics fits into standard Mourre theory \cite{ Do},
and the commutator includes an appropriate weight
depending on the context (see Lemmas
    \ref{lem:14.10.4.1.17ffaabb}, \ref{lem:14.10.4.1.17ffaabbqqq},
    \ref {lem:14.10.4.1.17ffaa} and \ref{lem:14.10.22.18.8ff} for such
     contexts). For a
    recent comprehensive study of related issues although with a
    different focus we refer to \cite{RT}. This study
    is restricted to an asymptotically conic manifold (i.e. 
    Euclidean type)  with a 
short-range potential. Although commutator arguments also play an
important role in  \cite{RT} they appear rather different from ours.

Our paper extensively 
employs explicit commutator computations of differential operators and
to a limited degree tools from
  functional analysis (primarily semigroup theory). Similarly microlocal analysis
  is virtually absent in this paper.  As an advantage our
assumptions on regularity and decay properties of perturbations
appear rather weak.

Based on the results of  this paper,  
in \cite{IS3}
we develop fully  the stationary scattering theory in
 a similar  but somewhat  more restrictive framework, 
and  in particular we provide  
a complete characterization of asymptotics for appropriate generalized eigenfunctions
at infinity. The radiation condition bounds of the
  present paper (see
  Corollary \ref{thm:radiation-conditions}) appear to a large extent
  as new results in a  manifold setting and will  play a crucial role
  for this study.

\subsection{Setting and results}\label{subsec:result}

\subsubsection{Basic setting}

We shall  study manifolds for which there exist ends in a somewhat disguised form.
\begin{cond}\label{cond:12.6.2.21.13}
Let $(M,g)$ be a connected Riemannian manifold  of dimension $d\ge 1$.
There exist a function $r\in C^\infty(M)$ with image $r(M)=[1,\infty)$
and constants 
$c>0$ and $r_0\ge 2$ such that:
\begin{enumerate}
\item\label{item:12.4.7.19.40} 
The gradient vector field $\omega=\mathop{\mathrm{grad}} r\in\mathfrak X(M)$ is
forward complete in the sense that the integral curve of $\omega$ is defined
for any initial point $x\in M$ and any non-negative time parameter $t\ge 0$.
\item\label{item:12.4.7.19.40b}
The bound $|\mathrm dr|=|\omega|\ge c$ holds on $\{x\in M\,|\, r(x)> r_0/2\}$. 
\end{enumerate}
\end{cond}

We call each component of the open subset $E=\{x\in M\,|\, r(x)>r_0\}$ 
an \emph{end} of $M$,
and the function $r$ may model a distance function there.
The last interpretation is supported by a part of \eqref{eq:13.9.28.13.28} below too.
The set  $E$ is obviously the continuous disjoint union of $r$-spheres 
\begin{align*}
S_R=\{x\in M\,|\,r(x)=R\};\quad R> r_0,
\end{align*} 
which are submanifolds of $M$ due to 
(\ref{item:12.4.7.19.40b}) of Condition~\ref{cond:12.6.2.21.13} 
and the implicit function theorem.
Then we can canonically construct the \textit{spherical coordinates} on $E$
along the vector field $\omega$,
however, since these coordinates are not used in this paper,
we do not give  the construction here. Note that spherical coordinates  will be important in our sequel paper \cite{IS3}.

Let us impose more assumptions on the geometry of $E$ in terms of the radius function $r$.
Choose $\chi\in C^\infty(\mathbb{R})$ such that 
\begin{align}
\chi(t)
=\left\{\begin{array}{ll}
1 &\mbox{ for } t \le 1, \\
0 &\mbox{ for } t \ge 2,
\end{array}
\right.
\quad
\chi'\le 0,
\label{eq:14.1.7.23.24}
\end{align}
and set 
\begin{align*}
\eta=1-\chi(2r/r_0),\quad
\tilde\eta =\eta |\mathrm dr|^{-2}
=\bigl(1-\chi(2r/r_0)\bigr)|\mathrm dr|^{-2}
.
\end{align*}
We introduce the `radial differential operator':
\begin{align}
A
=\mathop{\mathrm{Re}} p^r
=\tfrac12\bigl( p^r+( p^r)^*\bigr);\quad 
p^r=-\mathrm i \nabla^r,\ 
\nabla^r=\nabla_{\omega}=g^{ij}(\nabla_i r)\nabla_j,
\label{eq:13.9.23.2.24}
\end{align}
and also the tensor $\ell$ 
and the associated differential operator $L$: 
\begin{align}
\ell=g-\tilde\eta\,\mathrm dr\otimes \mathrm dr,\quad 
L=p_i^*\ell^{ij}p_j
.
\label{eq:13.9.23.5.54}
\end{align}
As we can see with ease,
the tensor $\ell$ may be identified with the pull-back of $g$
to the $r$-spheres, and $L$ with the spherical part of $-\Delta$.
We remark that the tensor $\ell$ clearly satisfies 
\begin{align}
0\le \ell\le g,\quad
\ell^{\bullet i}(\nabla r)_i=(1-\eta)\mathrm dr,
\label{eq:14.12.23.13.49}
\end{align}
where the first bounds of \eqref{eq:14.12.23.13.49} are
understood as quadratic form estimates on fibers
of the tangent bundle of $M$.
The quantities  of \eqref{eq:13.9.23.5.54} will play a
major role in this paper.

Recall a local expression of the Levi--Civita connection $\nabla$:
If we denote the Christoffel symbol by 
$\Gamma^k_{ij}=\tfrac12g^{kl}(\partial_i g_{lj}+\partial_j g_{li}-\partial_lg_{ij})$,
then for any smooth function $f$ on $M$
\begin{align}
(\nabla f)_i&=(\nabla_if)=(\mathrm d f)_i=\partial_if,\quad
(\nabla^2 f)_{ij}
=\partial_i\partial_j f-\Gamma^k_{ij}\partial_kf.
\label{eq:14.3.5.0.14}
\end{align}
Note that $\nabla^2f $ is the geometric Hessian of $f$.

\begin{cond}\label{cond:12.6.2.21.13a}
There exist constants $\sigma,\tau,C>0$ such that globally on $M$
\begin{subequations}
\begin{align}
r\Bigl(\nabla^2r-
\tfrac12 \eta |\mathrm dr|^{-4}(\nabla^r |\mathrm dr|^2)\mathrm dr\otimes\mathrm dr\Bigr)
\ge 
\tfrac12\sigma |\mathrm dr|^2\ell-Cr^{-\tau} g
\label{eq:13.9.5.3.30}
\end{align}
as quadratic forms on fibers of the tangent bundle of $M$, 
and
\begin{align}
\begin{split}
|\mathrm dr|^2&\leq C,\quad
\bigl|\nabla^r|\mathrm dr|^2\bigr|\leq Cr^{-1-\tau/2}
,\quad 
\Delta r\le C,\quad
\bigl|\ell^{\bullet i}\nabla_i\Delta r\bigr|\leq Cr^{-1-\tau/2}.
\end{split}\label{eq:13.9.28.13.28}
\end{align}
\end{subequations}
\end{cond}

Condition~\ref{cond:12.6.2.21.13a} 
says that the ends are geometrically growing. 
For any $R>r_0$ we let
$\iota_R\colon S_R\hookrightarrow M$ 
be the inclusion.
In case where $r$ is an exact distance function, i.e.\ $|\mathrm dr|=1$ on $E$,
the Hessian $\nabla^2r$ has no radial components in the spherical coordinates,
and $(\nabla^2r)_{|S_R}$ can be identified with the pull-back $\iota_R^*(\nabla^2r)$,
which is exactly the second fundamental form of $S_R$,
and $(\Delta r)_{|S_R}$ with the mean curvature $\mathop{\mathrm{tr}}[\iota_R^*(\nabla^2 r)]$ 
of $S_R$. 
In general under Condition~\ref{cond:12.6.2.21.13a} 
the radial components of $\nabla^2r$ do not necessarily vanish, 
but we may still somehow regard the quantity
$$\nabla^2r
-\tfrac12 \eta |\mathrm dr|^{-4}(\nabla^r |\mathrm dr|^2)\mathrm dr\otimes\mathrm dr$$ 
as the second fundamental form with negligible error, cf.\ Lemma~\ref{lem:151209},
and hence the bound \eqref{eq:13.9.5.3.30}  implies  that
the ends are growing, bounding the minimum  curvature
of $S_R$ below: 
For any $\sigma'\in (0,\sigma)$ there exists $R_{\sigma'}\ge r_0$ such that 
for all $R\ge R_{\sigma'}$
\begin{align*}
R\iota_R^*(\nabla^2r)
=R\iota_R^*\Bigl(\nabla^2r
-\tfrac12 \eta |\mathrm dr|^{-4}(\nabla^r |\mathrm dr|^2)\mathrm dr\otimes\mathrm dr\Bigr)
\ge \tfrac12\sigma'|\mathrm dr|^2\iota_R^*g.
\end{align*}
The bounds in (\ref{eq:13.9.28.13.28})
together with \eqref{eq:13.9.5.3.30}, 
see also Lemma~\ref{lem:151209}, are 
connected to the regularity properties of the mean curvature of $S_R$,
and, in particular, we can  bound the maximum curvature above, 
since $\iota_R^*(\nabla^2r)$ is strictly positive for $R>r_0$ large enough. 
We also remark that in agreement with conventions one could reasonably call the radius
  function $r$ an {\it escape function}, 
  since as a consequence of the convexity property \eqref{eq:13.9.5.3.30}
  the complete geodesics with $r$ globally large enough are non-trapped.
One benefit of our 
indirect description of the geometry of $(M,g)$ is 
that it is obviously stable under small perturbations.
When it is difficult to compute an exact distance function,
we may choose a more useful distance-like function to verify the conditions.

Finally we impose a long-range type condition on the
potential $V$.
More precisely, taking into account a metric quantity related to the volume growth
of the ends, we formulate it in terms of an \textit{effective potential} $q$
defined by
\begin{align}
q=
V+\tfrac18\tilde\eta\bigl[(\Delta r)^2+2\nabla^r\Delta r\bigr].
\label{eq:14.12.10.22.48}
\end{align}
This  quantity naturally appears.
In fact, using \eqref{eq:13.9.23.5.54} and the expressions
\begin{align}
A&=p^r-\tfrac{\mathrm i}2(\Delta r)
=(p^r)^* +\tfrac{\mathrm i}2(\Delta r),
\label{eq:12.6.27.0.43}
\end{align}
we can rewrite the Schr\"odinger operator $H$ in the form
\begin{align}
\begin{split}
H=\tfrac 12A\tilde\eta A+\tfrac 12 L
+q+\tfrac14(\nabla^r\tilde\eta)(\Delta r)
.
\end{split}\label{eq:15091010}
\end{align}

\begin{cond}\label{cond:10.6.1.16.24}
The potential $V$ is a real-valued function belonging to $L^\infty(M)$.
Moreover, there exists a splitting by real-valued functions:
  \begin{align*}
  q=q_1+q_2;\quad q_1\in C^1(M)\cap L^\infty(M),\  q_2\in L^\infty(M),
  \end{align*}
such that for some $\rho',C>0$ the following bounds hold globally on $M$:
\begin{align}\label{eq:60}
\nabla^r q_1\le Cr^{-1-\rho'},\quad
|q_2|\le Cr^{-1-\rho'}.
\end{align}
\end{cond}

A setting similar to Conditions~\ref{cond:12.6.2.21.13}--\ref{cond:10.6.1.16.24}
is used in \cite{IS2}. See also \cite{K2,K3,IS1}.
In Subsection~\ref{sec:13.9.5.0.34} 
we shall discuss concrete models of manifolds satisfying 
Conditions~\ref{cond:12.6.2.21.13}--\ref{cond:10.6.1.16.24}
along with the additional Conditions~\ref{cond:12.6.2.21.13b} and
\ref{cond:12.6.2.21.13bbb} stated below.
The models include manifolds with asymptotically Euclidean and/or hyperbolic ends,
and their conic regions. Some more general regions with unbounded obstacles
are included too.
We remark that in this  paper only derivatives of $r$ of order at most four
are used quantitatively. Throughout our presentation we use the
convention that $c$ is used for a `small'  positive 
constant while $C$ is used for a `big'  positive constant, however their
particular values  not being important.  
On the other hand the parameters $\sigma$, $\tau$ and $\rho$ appearing 
in Condition~\ref{cond:12.6.2.21.13bbb} 
are intimately related to scattering
properties of quantum particles on the model manifold. This will be
demonstrated  in \cite{IS3}.

Now let us explain  the self-adjoint realizations of $H$ and $H_0$.
Since $(M,g)$ can be incomplete, the operators $H$ and $H_0$ are not necessarily 
essentially self-adjoint on $C^\infty_{\mathrm{c}}(M)$.
We realize $H_0$ as a self-adjoint operator by imposing the 
Dirichlet boundary condition, i.e.\ 
$H_0$ is the unique self-adjoint
operator associated with the closure of the quadratic form
\begin{align*}
\langle H_0\rangle_\psi=\langle \psi, -\tfrac12 \Delta \psi\rangle,\quad \psi
\in C^\infty_{\mathrm{c}}(M).
\end{align*}
We denote the form closure and the self-adjoint realization by the same symbol $H_0$. 
Define the associated Sobolev spaces $\mathcal H^s$ by
\begin{align}\mathcal H^s=(H_0+1)^{-s/2}{\mathcal H}, \quad s\in\mathbb{R}.
\label{eq:13.9.5.1.2}
\end{align}
Then $H_0$ may be understood as a
closed quadratic form on $Q(H_0)=\mathcal H^1$.
Equivalently, $H_0$ makes sense also as a bounded operator 
$\mathcal H^1\to\mathcal H^{-1}$, whose action coincides with 
that for distributions.  
By the definition of the Friedrichs extension 
the self-adjoint realization of $H_0$ 
is the restriction of such distributional
$H_0\colon \mathcal H^1\to\mathcal H^{-1}$ to the domain:
\begin{align*} 
\mathcal D(H_0)=\{\psi\in\mathcal H^1\,|\, H_0\psi\in \mathcal H\}\subseteq \mathcal H.
\end{align*}
Since $V$ is real-valued and bounded  by Condition~\ref{cond:10.6.1.16.24}, 
we can realize the self-adjoint operator $H=H_0+V$ simply as
\begin{align*}
H=H_0+V,\quad \mathcal D(H)=\mathcal D(H_0).
\end{align*}

In contrast to (\ref{eq:13.9.5.1.2}) 
we introduce the weighted Hilbert space 
$\mathcal H_s$ for $s\in\mathbb R$ by 
\begin{align*}
\mathcal H_s=r^{-s}{\mathcal H}.
\end{align*}
We also denote the locally $L^2$-space by 
\begin{align*}
\mathcal H_{\mathrm{loc}}=L^2_{\mathrm{loc}}(M).
\end{align*}
We consider the $r$-balls $B_R=\{ r(x)<R\}$ and  
the characteristic functions 
\begin{align}
\begin{split}
F_\nu=F(B_{R_{\nu+1}}\setminus B_{R_\nu}),\ R_\nu=2^{\nu},\ \nu\ge 0,
\end{split}\label{eq:13.9.12.17.34}
\end{align}
 where $F(\Omega)$ is used for sharp characteristic
 function of a subset $\Omega\subseteq M$. 
Define the associated Besov spaces $B$ and $B^*$ by
\begin{align}
\begin{split}
B&=\{\psi\in \mathcal H_{\mathrm{loc}}\,|\,\|\psi\|_{B}<\infty\},\quad 
\|\psi\|_{B}=\sum_{\nu=0}^\infty R_\nu^{1/2}
\|F_\nu\psi\|_{{\mathcal H}},\\
B^*&=\{\psi\in \mathcal H_{\mathrm{loc}}\,|\, \|\psi\|_{B^*}<\infty\},\quad 
\|\psi\|_{B^*}=\sup_{\nu\ge 0}R_\nu^{-1/2}\|F_\nu\psi\|_{{\mathcal H}},
\end{split}\label{eq:13.9.7.15.11}
\end{align}
respectively.
We also define $B^*_0$ to be the closure of $C^\infty_{\mathrm c}(M)$ in $B^*$. 
These spaces are sometimes called
Agmon--H\"ormander spaces, due to the fact that 
the  $B$--$B^*$ framework was first applied to the limiting absorption principle 
in  \cite{AH,Ag2}; 
see also \cite[Chapter X\hspace{-.1em}I\hspace{-.1em}V]{Ho2}. However
they appeared earlier \cite{L2} in the  context of Rellich's theorem. 
By noting $R_\nu\approx r$ on $\mathop{\mathrm{supp}}F_\nu$ and the H\"older inequality 
for $\ell^p$ spaces
it is rather easy to see that for any $s>1/2$
\begin{align}\label{eq:19080223}
\mathcal H_s\subsetneq B\subsetneq \mathcal H_{1/2}
\subsetneq\vH
\subsetneq\mathcal H_{-1/2}\subsetneq B^*_0\subsetneq B^*\subsetneq \mathcal H_{-s}.
\end{align}

Using the function $\chi\in C^\infty(\mathbb R)$ of \eqref{eq:14.1.7.23.24},
define $\chi_n,\bar\chi_n,\chi_{m,n}\in C^\infty(M)$ for $n,m\ge 0$ by
\begin{align}
\chi_n=\chi(r/R_n),\quad \bar\chi_n=1-\chi_n,\quad
\chi_{m,n}=\bar\chi_m\chi_n.
\label{eq:11.7.11.5.14}
\end{align}  
 Let us introduce an auxiliary space:
\begin{align*}
\mathcal N=\{\psi\in \mathcal H_{\mathrm{loc}}\,|\, \chi_n\psi\in \mathcal
H^1\mbox{ for all }n\ge 0\}.
\end{align*}
This is a space of  functions that satisfy the
  Dirichlet boundary condition,
possibly with infinite $\mathcal H^1$-norm on $M$.
 Note that under Conditions~\ref{cond:12.6.2.21.13}--\ref{cond:10.6.1.16.24} the manifold $M$
  may be, e.g.\ a half-space in the Euclidean space, and there could be 
a `boundary' even for large $r$, which is `invisible' from inside $M$. 

\subsubsection{Rellich's theorem}
Our first theorem is Rellich's theorem, the absence of $B^*_0$-eigenfunctions 
with eigenvalues above a certain `critical energy'
$\lambda_0\in\mathbb R$ given by 
\begin{align}
\lambda_0
=\limsup_{r\to\infty}q_1
=
\lim_{R\to\infty}
\bigl(\sup\{q_1(x)\,|\, r(x)\ge R\}\bigr)<\infty
.
\label{eq:13.9.30.6.8}
\end{align}
For the Euclidean and the hyperbolic spaces and many other examples 
the critical energy $\lambda_0$ can be computed 
explicitly, see Subsection~\ref{sec:13.9.5.0.34}, and the essential
spectrum $\sigma_\ess(H)=[\lambda_0,\infty)$. 
The latter is usually seen in terms of Weyl sequences, see  \cite{K1}.
\begin{thm}\label{thm:13.6.20.0.10}
Suppose Conditions~\ref{cond:12.6.2.21.13}--\ref{cond:10.6.1.16.24},
and let $\lambda>\lambda_0$.
If a function  $\phi\in  \mathcal H_{\mathrm{loc}}$ satisfies that
\begin{enumerate}
\item\label{item:13.7.29.0.27}
$(H-\lambda)\phi=0$ in the distributional sense, 
\item\label{item:13.7.29.0.26}
$\bar\chi_m\phi\in \mathcal N \cap B_0^*$ for all $m\ge 0$ large enough,
\end{enumerate}
then $\phi=0$ in $M$.
\end{thm}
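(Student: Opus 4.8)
To prove Theorem~\ref{thm:13.6.20.0.10}, I would proceed as follows. Since $(M,g)$ is connected and $(H-\lambda)\phi=0$ holds distributionally on all of $M$ with $V\in L^\infty$ real, elliptic regularity places $\phi$ in $H^2_{\mathrm{loc}}$, so the Aronszajn unique continuation principle for $-\tfrac12\Delta+(V-\lambda)$ applies on $M$. Hence it suffices to prove that $\phi$ vanishes on one end, say on $\{r>R\}$ for a single large $R$, and then propagate $\phi=0$ to all of $M$ from this non-empty open set. To treat the end I would use a commutator identity for $H$ and (a regularization of) the operator $B_\theta=\operatorname{Re}(\theta A)$, where $A$ is the radial operator \eqref{eq:13.9.23.2.24} and $\theta=\theta(r)\ge0$ with $\theta'\ge0$ is the \emph{weight inside} to be chosen; the semigroup of radial translations along $\omega$, available by the forward-completeness in Condition~\ref{cond:12.6.2.21.13}, is what makes $B_\theta$ and its commutators manageable.

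First I would make the computation rigorous. Fix $m_0$ large enough that $\bar\chi_{m_0}\phi\in\mathcal N\cap B^*_0$, multiply by $\chi_{m_0,n}=\bar\chi_{m_0}\chi_n$ and mollify along the flow of $\omega$ to obtain $\phi_n\in\mathcal D(H)$ that is super-polynomially decaying and satisfies $(H-\lambda)\phi_n=[H,\chi_{m_0,n}]\phi+(\text{mollification error})$, with the right-hand side supported in the two transition annuli $r\sim R_{m_0}$ and $r\sim R_n$. Then
\[
\langle\phi_n,\mathrm i[H,B_\theta]\phi_n\rangle=\langle\phi_n,\mathrm i[H-\lambda,B_\theta]\phi_n\rangle=-2\operatorname{Im}\langle(H-\lambda)\phi_n,B_\theta\phi_n\rangle ,
\]
a sum of boundary contributions. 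The inner ones ($r\sim R_{m_0}$) amount to a fixed finite number $C_{m_0}$, finite because $\phi\in H^1$ on that fixed annulus; the outer ones ($r\sim R_n$) are controlled by $R_n^{-1}\|F_n\phi\|_{\mathcal H}^2$-type quantities and, since $\bar\chi_{m_0}\phi\in B^*_0$, tend to $0$ along a subsequence $n\to\infty$ (the standard fact $\liminf_n R_n^{-1}\|F_n\psi\|_{\mathcal H}^2=0$ for $\psi\in B^*_0$).

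Next comes the positive lower bound on $\mathrm i[H,B_\theta]$, the technical heart. Using the representation \eqref{eq:15091010} with \eqref{eq:12.6.27.0.43}, the convexity bound \eqref{eq:13.9.5.3.30}, the estimates \eqref{eq:13.9.28.13.28}, and Condition~\ref{cond:10.6.1.16.24}, one expects, schematically, a bound
\[
\mathrm i[H,B_\theta]\ \ge\ \tfrac12\theta'\,A\tilde\eta A+c\,\tfrac{\theta}{r}\,L+2(\lambda-\lambda_R)\,\theta\,|\mathrm dr|^2-C\theta\,r^{-1-\varepsilon}\quad\text{on }\{r>R\},
\]
where $\lambda_R=\sup\{q_1(x)\,|\,r(x)\ge R\}\to\lambda_0$ by \eqref{eq:13.9.30.6.8}, the $\tfrac{\theta}{r}L$-term issues from the convexity applied to the $A$-commutator of the spherical part of $H$, the $(\lambda-\lambda_R)$-term from commuting $B_\theta$ with $\lambda-q_1$, and the error from $q_2$, $\nabla^r q_1$ and the geometric remainders. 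Combining this with the identity above and letting $n\to\infty$ along the good subsequence yields, for every admissible $\theta$ supported in $\{r>R\}$, a weighted estimate
\[
\int\theta'|\tilde\eta^{1/2}A\phi|^2+\int c\,\tfrac{\theta}{r}\,\bigl(\text{spherical energy density of }\phi\bigr)+(\lambda-\lambda_R)\int\theta|\mathrm dr|^2|\phi|^2\ \le\ C\int\theta\,r^{-1-\varepsilon}|\phi|^2+C_{m_0},
\]
with the middle term written schematically; all lower-order manipulations here I regard as routine given the conditions.

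Finally the bootstrap and the closing argument. Starting from $\phi\in B^*_0\subset\mathcal H_{-1/2}$ and taking $\theta=\langle r\rangle^{2\alpha}$ times a cutoff, with $\alpha$ first slightly above $-1/2$: since $\lambda-\lambda_0>0$ is a \emph{fixed} gap and $\int\theta\,r^{-1-\varepsilon}|\phi|^2$ is genuinely lower order, the inequality upgrades $\phi$ (together with $\tilde\eta^{1/2}A\phi$ and $L^{1/2}\phi$) to $\mathcal H_\alpha$ with a definite gain; because the gap $\lambda-\lambda_0$ does not deteriorate as $\alpha$ grows, iterating gives $\phi\in\bigcap_{s\in\mathbb R}\mathcal H_s$. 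Choosing then $R$ so large that $Cr^{-1-\varepsilon}\le\tfrac12(\lambda-\lambda_0)c^2$ for $r\ge R$ (with $c$ from Condition~\ref{cond:12.6.2.21.13}, so $|\mathrm dr|^2\ge c^2$ there) and using $\theta=\langle r\rangle^{2\alpha}$ cut off to $\{r>R\}$, the last display forces $\int_{r>2R}\langle r\rangle^{2\alpha}|\phi|^2\le C'R^{2\alpha}$ uniformly in $\alpha$; letting $\alpha\to\infty$ gives $\phi=0$ on $\{r>2R\}$, and unique continuation completes the proof. The main obstacle is twofold: legitimizing the commutator manipulations for a merely $L^2_{\mathrm{loc}}$ solution — the radial mollification adapted to the flow of $\omega$, and the bookkeeping ensuring every error term is either the fixed $C_{m_0}$ or vanishes along an $n$-subsequence via $\phi\in B^*_0$ — and, above all, securing the operator inequality displayed above, i.e.\ choosing the precise symmetrization of $B_\theta$ and the admissible class of weights $\theta$ so that the principal commutator term remains non-negative after the geometric remainders of Condition~\ref{cond:12.6.2.21.13a} are absorbed; this is exactly the role of the commutator lemmas announced in the introduction.
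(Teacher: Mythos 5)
Your high-level outline shares the paper's skeleton (commutator with a weight inside, boundary terms controlled by $B^*_0$ membership, bootstrap to rapid decay, unique continuation), but the core analytic step is wrong as stated, and the gap is exactly where you flag "the technical heart."

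The schematic lower bound you propose,
\[
\mathrm i[H,B_\theta]\ \ge\ \tfrac12\theta'\,A\tilde\eta A+c\,\tfrac{\theta}{r}\,L+2(\lambda-\lambda_R)\,\theta\,|\mathrm dr|^2-C\theta\,r^{-1-\varepsilon},
\]
cannot hold with a constant $C$ that is uniform in the weight when $\theta=\langle r\rangle^{2\alpha}$, which is what your final step requires. Expanding $[H,\mathrm iA]_\Theta$ (the paper's Lemma~\ref{lem:12.7.3.0.45}) produces terms $-\tfrac14|\mathrm dr|^4\Theta'''$ and, after completing a square with $A\Theta'A$, a residual $\tfrac18|\mathrm dr|^4\theta'\theta''\Theta$ where $\Theta=\mathrm e^\theta$. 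For $\Theta=r^{2\alpha}$ one has $\theta=2\alpha\log r$, so $\theta'\theta''\sim-\alpha^2 r^{-3}$ and $\Theta'''/\Theta\sim\alpha^3 r^{-3}$; neither is $O(r^{-1-\varepsilon})$ with a constant independent of $\alpha$. To beat these you would need $r\gtrsim\alpha$, so the cutoff radius $R$ you fix "once and for all" must in fact grow with $\alpha$, and the conclusion $\int_{r>2R}(r/R)^{2\alpha}|\phi|^2\le C'$ uniformly in $\alpha$ does not follow. Thus your bootstrap delivers $\phi\in\bigcap_s\mathcal H_s$ (super-polynomial decay), but "letting $\alpha\to\infty$" does not then force $\phi=0$ on an end; super-polynomial decay is strictly weaker than what the last step needs.

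What the paper does differently, and why it works: it takes the weight $\Theta=\chi_{m,n}\mathrm e^\theta$ with $\theta=2\alpha r+2\beta\int_0^r(1+s/R_\nu)^{-1-\delta}\mathrm ds$, so the $\alpha$-dependent part of the exponent is \emph{linear} in $r$. Consequently the $\alpha$-dependent contributions to $\theta''$, $\theta'''$ vanish, the problematic $\theta'\theta''$ residual after the square completion carries only the small parameter $\beta$ (Lemmas~\ref{lem:14.10.4.1.17ffaabb} and~\ref{lem:14.10.4.1.17ffaabbqqq}), and the commutator estimate is uniform in $\alpha$. The argument is then split into Proposition~\ref{prop:absence-eigenvalues-1b} (super-exponential decay $\bar\chi_{m_0}\mathrm e^{\alpha r}\phi\in B^*_0$ for all $\alpha$, proved by showing the supremum $\alpha_0$ of admissible $\alpha$ cannot be finite, using the $\beta$-gain) and Proposition~\ref{prop:absence-eigenvalues-1bb} (the $\alpha$-uniform bound $\|\bar\chi_m^{1/2}r^{-1/2}\mathrm e^{\alpha(r-4R_m)}\phi\|\le C$ forces $\phi\equiv 0$ on $\{r>R_{m+2}\}$, then unique continuation). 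A secondary point: your displayed equality $\langle\phi_n,\mathrm i[H,B_\theta]\phi_n\rangle=-2\operatorname{Im}\langle(H-\lambda)\phi_n,B_\theta\phi_n\rangle$ is only an inequality in general, because the Dirichlet realization on an incomplete manifold has a boundary contribution; Lemma~\ref{lem:undoing-commutators} supplies the correct signed inequality (and it has the useful sign). Also, the mollification along the flow is unnecessary: the hypothesis $\bar\chi_m\phi\in\mathcal N$ already gives $\chi_{m,n}\phi\in\mathcal D(H)$ directly, which is how the paper regularizes.
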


This is  known as Rellich's theorem on the Euclidean space,
and we extend it to manifolds. 
The space $B_0^*$ is optimal in the sense that (under more restrictive conditions)
we can construct  generalized eigenfunctions belonging to $B^*$, cf.\ \eqref{eq:19080223}
and \cite{IS3}. 
As far as the authors know, the  type of proof for Rellich's theorem 
given in Section~\ref{sec:absence} 
may be considered as new even for  the Euclidean space although the
result in this case  and for some class of potentials already
appears in \cite{Ka}.
In view of the weighted spaces in \eqref{eq:19080223} absence of $L^2$-eigenfunctions
discussed e.g.\ in \cite{IS2}
is clearly  weaker than Theorem~\ref{thm:13.6.20.0.10}.
Let us state it as a corollary.

\begin{corollary}\label{cor:15060822}
The operator $H$ has no eigenvalues above $\lambda_0$:
$\sigma_{\mathrm{pp}}(H)\cap (\lambda_0,\infty)=\emptyset$.
\end{corollary}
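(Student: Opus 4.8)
The plan is to deduce the corollary directly from Theorem~\ref{thm:13.6.20.0.10}. Suppose, for contradiction, that there is an eigenvalue $\lambda\in\sigma_{\mathrm{pp}}(H)\cap(\lambda_0,\infty)$, and pick a corresponding eigenfunction $\psi\in\mathcal D(H)\setminus\{0\}$, so that $H\psi=\lambda\psi$. I would take $\phi=\psi$ in Theorem~\ref{thm:13.6.20.0.10} and verify its two hypotheses; the conclusion $\psi=0$ then contradicts $\psi\neq0$, which proves the corollary.

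Hypothesis~(\ref{item:13.7.29.0.27}) is immediate: the identity $H\psi=\lambda\psi$ holds in $\mathcal H$, and since the self-adjoint operator $H$ is by construction the restriction of the distributional operator $H_0+V$ to $\mathcal D(H)$, the equation $(H-\lambda)\psi=0$ holds in the distributional sense. For hypothesis~(\ref{item:13.7.29.0.26}) note first that $\mathcal D(H)=\mathcal D(H_0)\subseteq\mathcal H^1$, so $\psi\in\mathcal H^1$; since the cutoffs $\chi_n$ and $\bar\chi_m\chi_n$ are smooth, compactly supported, and have gradients bounded by a multiple of $|\mathrm dr|\le C$ (cf.\ \eqref{eq:13.9.28.13.28}), multiplication by them preserves $\mathcal H^1$, whence $\chi_n(\bar\chi_m\psi)\in\mathcal H^1$ for all $n$, i.e.\ $\bar\chi_m\psi\in\mathcal N$ for every $m\ge0$. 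For the Besov membership, since $r\ge1$ we have $\psi\in\mathcal H\subseteq\mathcal H_{-1/2}\subseteq B_0^*$ by the nesting recorded after \eqref{eq:13.9.7.15.11}, and multiplication by the bounded smooth function $\bar\chi_m$ is bounded on $B^*$ (directly from the definition of $\|\cdot\|_{B^*}$) and maps $C^\infty_{\mathrm c}(M)$ into itself, hence leaves $B_0^*$ invariant; thus $\bar\chi_m\psi\in B_0^*$. So hypothesis~(\ref{item:13.7.29.0.26}) holds, in fact for all $m\ge0$.

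With both hypotheses verified, Theorem~\ref{thm:13.6.20.0.10} gives $\psi=0$, the desired contradiction, so $\sigma_{\mathrm{pp}}(H)\cap(\lambda_0,\infty)=\emptyset$. There is essentially no obstacle here: all the analytic content is carried by Theorem~\ref{thm:13.6.20.0.10}, and the only minor points are the elementary mapping properties of the cutoffs on $\mathcal H^1$, $B^*$ and $B_0^*$. The one thing worth keeping in mind is that hypothesis~(\ref{item:13.7.29.0.26}), which in the general Rellich statement must accommodate solutions that need not lie globally in $\mathcal H^1$, is automatic for genuine $L^2$-eigenfunctions, since these belong to $\mathcal D(H)\subseteq\mathcal H^1\subseteq\mathcal N$ and to $\mathcal H\subseteq B_0^*$ from the outset.
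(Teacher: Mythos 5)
Your proposal is correct and is exactly the argument the paper has in mind: the authors simply state that Corollary~\ref{cor:15060822} is ``a direct consequence of Theorem~\ref{thm:13.6.20.0.10},'' and your verification that an $L^2$-eigenfunction $\psi\in\mathcal D(H)\subseteq\mathcal H^1\subseteq\mathcal N$ with $\psi\in\mathcal H\subseteq B_0^*$ satisfies both hypotheses is the standard way to make that precise. No gaps; the routine cutoff stability checks on $\mathcal H^1$ and $B_0^*$ are exactly the right small points to record.
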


Note that in order to verify (\ref{item:13.7.29.0.26}) it suffices to have
$\bar\chi_m\phi\in \mathcal N\cap B_0^*$ for a single value of
$m$. For any function $\phi$ obeying the conditions of
Theorem~\ref{thm:13.6.20.0.10} we have $\chi_{m,n}\phi\in\mathcal
D(H)$ for all $m$ large enough and $n>m$.  See the discussion on our
self-adjoint realization of $H$ above.  We can drop the space
$\mathcal N$ if the $r$-annuli $B_{R_{\nu+1}}\setminus B_{R_\nu}$ are
relatively compact in $M$ for all large $\nu\ge 0$.  Note that an
$r$-ball $B_R$, $R\in \mathbb R$, may be unbounded under
Conditions~\ref{cond:12.6.2.21.13}--\ref{cond:10.6.1.16.24}.  If on
the other hand $M$ is complete and $B_R$ is bounded it follows from
the Hopf--Rinow theorem \cite[Theorem 1.4.8]{Jost} that $B_R$ is
relatively compact. 

\subsubsection{Limiting absorption principle and radiation condition}

Next we discuss the limiting absorption principle
and the radiation condition
related to the resolvent 
\begin{align*}
R(z)=(H-z)^{-1}.
\end{align*}
We first establish a locally uniform bound for the resolvent $R(z)$ as
a map: $B\to B^*$.
Let us impose a compactness condition.
\begin{cond}\label{cond:12.6.2.21.13b}
In addition to Conditions~\ref{cond:12.6.2.21.13}--\ref{cond:10.6.1.16.24},  
there exists an open subset $\mathcal I\subseteq (\lambda_0,\infty)$
such that for any $n\ge 0$ and  compact interval  $I\subseteq \mathcal I$ 
the mapping 
\begin{align*}
  \chi_nP_H(I)\colon \mathcal H\to\mathcal H
\end{align*}
is  compact, where $P_H(I)$ denotes the spectral projection onto $I$ for $H$. 
\end{cond}

Due to Rellich's compact embedding theorem \cite[Theorem X\hspace{-.1em}I\hspace{-.1em}I\hspace{-.1em}I.65]{RS},
`boundedness' of $r$-balls provides a criterion for Condition~\ref{cond:12.6.2.21.13b}:
If $M$ is complete and each $r$-ball $B_R$, $R\geq 1$, is bounded there, 
then Condition~\ref{cond:12.6.2.21.13b} is satisfied for $\mathcal I=(\lambda_0,\infty)$.
More generally, even if $M$ is incomplete, 
it suffices that each $r$-ball $B_R$, $R\geq 1$, is 
isometric to a bounded subset of a complete manifold.
Condition~\ref{cond:12.6.2.21.13b} in fact includes even more general situations 
where $M$ has several ends possibly  with different
critical energies and where $r$-balls are unbounded,
cf.\ \cite{K4}. We shall discuss an example in Subsection~\ref{sec:13.9.5.0.34}.

For  notational simplicity we set for a large $C>0$ 
\begin{align}
  \begin{split}
  h&=\nabla^2r-
\tfrac12 \eta |\mathrm dr|^{-4}(\nabla^r |\mathrm dr|^2)\mathrm dr\otimes\mathrm dr
+2Cr^{-1-\tau}g\\
&\ge\tfrac12\sigma r^{-1}|\mathrm dr|^2 \ell+Cr^{-1-\tau}g\ge 0,  
  \end{split}
\label{eq:15091112}
\end{align}
cf.\ \eqref{eq:13.9.5.3.30}.
We may consider $h$, as well as $\nabla^2r$, 
as the second fundamental form with negligible error. 
For any subset $I\subseteq \mathcal I$ let us denote
\begin{align*}
I_\pm=\{z=\lambda\pm \mathrm i\Gamma\in \mathbb C\,|\,\lambda\in I,\ \Gamma\in (0,1)\},
\end{align*}
respectively. 
We also use the notation 
$\langle T\rangle_\psi=\langle\psi,T\psi\rangle$.

\begin{thm}\label{thm:12.7.2.7.9}
Suppose Condition~\ref{cond:12.6.2.21.13b}
and let $I\subseteq \mathcal I$ be a compact interval.
Then there exists $C>0$ such that 
for any $\phi=R(z)\psi$ with $z\in I_\pm$ and $\psi\in B$
\begin{align}
\|\phi\|_{B^*}+\|p^r\phi\|_{B^*}
+\langle p_i^*h^{ij}p_j\rangle_\phi^{1/2}
+\|H_0\phi\|_{B^*}
\le C\|\psi\|_B.
\label{eq:13.8.22.4.59c}
\end{align}
\end{thm}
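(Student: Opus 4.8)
The plan is to establish the uniform bound \eqref{eq:13.8.22.4.59c} by a Mourre-type commutator argument with the weighted conjugate operator built from $A$ and the weight $\tilde\eta$, exploiting the convexity \eqref{eq:15091112} and the rewriting \eqref{eq:15091010}. First I would record the standard a priori facts: for fixed $z\in I_\pm$ the function $\phi=R(z)\psi$ lies in $\mathcal D(H)\cap\mathcal N$ and satisfies $(H-z)\phi=\psi$, so all the commutator identities below are justified after suitable regularization (multiplying by $\chi_{m,n}$, commuting, and then letting $m\to0$, $n\to\infty$). The heart of the matter is to choose a bounded, $r$-dependent ``comparison function'' $\Theta=\Theta_R(r)$ -- essentially an increasing function that behaves like $\min\{1,(r/R)^{\epsilon}\}$ truncated at large $r$, cf.\ the weights appearing in Lemmas~\ref{lem:14.10.4.1.17ffaabb}--\ref{lem:14.10.22.18.8ff} -- and compute the quadratic form of the commutator $\mathrm i[H,\Theta A + A\Theta]$ (symmetrized), using \eqref{eq:12.6.27.0.43} and \eqref{eq:15091010}. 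The positive contributions one expects are: a term $\gtrsim \langle \Theta' \tilde\eta^{-1}(p^r)^2\rangle_\phi$ coming from the radial kinetic part, a term $\gtrsim \langle \Theta' r^{-1} p_i^*\ell^{ij}p_j\rangle_\phi$ from commuting $A$ through the spherical Laplacian $L$ via the Hessian bound \eqref{eq:15091112}, and -- crucially -- a term $\gtrsim (\lambda-\lambda_0)\langle \Theta' r^{-1}\rangle_\phi$ from the potential, which is where the hypothesis $\lambda>\lambda_0$ and Condition~\ref{cond:10.6.1.16.24} (so that $\nabla^r q_1\le Cr^{-1-\rho'}$ and $q_2$ is short-range) enter. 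The error terms are all $O(r^{-1-\delta})$ and absorbable.

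The second step is to convert this differential inequality into the Besov bound. One takes the commutator identity, pairs it against $\phi$, and uses $\mathrm i\langle[H,\mathcal A]\rangle_\phi = 2\Im\langle \mathcal A\phi, (H-z)\phi\rangle + 2\,(\Im z)\langle \mathcal A\rangle_\phi$ with $\mathcal A=\Theta A+A\Theta$; the sign of $\Im z$ makes the last term have the right sign (for $z\in I_+$ it is $\le 0$ after the symmetrization is arranged, or it is controlled), and the source term $\langle\mathcal A\phi,\psi\rangle$ is bounded by $\|\mathcal A\phi\|_B^*$-type quantities times $\|\psi\|_B$, hence by the left-hand side times $\|\psi\|_B$ plus a small multiple of the left-hand side. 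Choosing the family $\Theta=\Theta_R$ and letting $R\to\infty$ upgrades $\langle\Theta' r^{-1}\rangle_\phi$-bounds into genuine $B^*$-bounds on $\phi$ and on $p^r\phi$, and the Hessian term yields $\langle p_i^*h^{ij}p_j\rangle_\phi$; here one invokes Condition~\ref{cond:12.6.2.21.13b} together with Rellich's Theorem~\ref{thm:13.6.20.0.10} to rule out the ``eigenvalue'' obstruction in the standard contradiction/normalization argument that a uniform bound requires. Finally $\|H_0\phi\|_{B^*}$ is obtained from $H_0\phi=\tfrac12 A\tilde\eta A\phi+\tfrac12 L\phi+(\text{l.o.t.})\phi$ once the first three quantities in \eqref{eq:13.8.22.4.59c} are controlled, using $0\le\ell\le g$ and the bounds \eqref{eq:13.9.28.13.28} on $\Delta r$.

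The main obstacle, I expect, is twofold. First, the commutator computation is genuinely delicate on a manifold with a non-exact distance function: the terms involving $\nabla^r|\mathrm dr|^2$ and the ``weight inside'' $\tilde\eta$ produce cross-terms that must be grouped precisely so that \eqref{eq:13.9.5.3.30} (equivalently \eqref{eq:15091112}) applies, and one needs the regularity bound $|\ell^{\bullet i}\nabla_i\Delta r|\le Cr^{-1-\tau/2}$ from \eqref{eq:13.9.28.13.28} exactly to handle the error from the $\tfrac{\mathrm i}2(\Delta r)$ in \eqref{eq:12.6.27.0.43}. Second, passing from the ``infinitesimal'' commutator estimate with $\Theta'\sim r^{-1}$ weights to a uniform-in-$z$ bound requires the usual bootstrap/compactness step: one assumes for contradiction a sequence $z_k\in I_\pm$ with $\|\phi_k\|_{B^*}=1$ but $\|\psi_k\|_B\to0$, extracts a weak-$\ast$ limit $\phi$ using Condition~\ref{cond:12.6.2.21.13b}, checks $\phi$ solves $(H-\lambda)\phi=0$ with $\phi\in B_0^*$ and the Dirichlet condition, concludes $\phi=0$ by Theorem~\ref{thm:13.6.20.0.10}, and then derives a contradiction with $\|\phi_k\|_{B^*}=1$ from the commutator estimate localized away from a compact set. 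Making the two real boundary values ($z\in I_+$ versus $z\in I_-$) work simultaneously, and ensuring the argument is uniform over the relatively compact $I\subseteq\mathcal I$, is where the care is needed.
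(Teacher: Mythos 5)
Your proposal follows the same overall scheme as the paper: a commutator estimate with an $r$-dependent weight $\Theta$ yielding positivity in the radial, spherical (via the Hessian bound~\eqref{eq:15091112}) and potential directions, followed by a compactness-and-contradiction argument using Condition~\ref{cond:12.6.2.21.13b} and Rellich's Theorem~\ref{thm:13.6.20.0.10}. The two main points worth flagging are structural and technical rather than conceptual.

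First, you place the weight \emph{outside}, working with $\mathcal A=\Theta A+A\Theta$ and the commutator $\mathrm i[H,\mathcal A]$, whereas the paper's Lemma~\ref{lem:14.10.4.1.17ffaa} estimates the ``weight-inside'' form $\operatorname{Im}\bigl(A\Theta(H-z)\bigr)$, i.e.\ $[H,\mathrm iA]_\Theta=\mathrm i(H\Theta A-A\Theta H)$. These are not algebraically identical, and the choice is not cosmetic: because the manifold may be incomplete and the realization is Dirichlet, one cannot freely ``undo'' the commutator on states in $\mathcal D(H)$. The paper addresses this through Lemma~\ref{lem:undoing-commutators}, where the inequality $[H,\mathrm iA]_\Theta\le 2\operatorname{Im}(A\Theta H)$ has the correct sign for the boundary contribution precisely because $\Theta\ge 0$ appears \emph{between} the factors, and the proof uses the flow $T(t)$ and the associated monotonicity (cf.\ the gap \eqref{eq:15.2.8.15.38}). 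Your ``regularize by $\chi_{m,n}$'' step does not by itself resolve this: the issue is not approximation of $\phi$ (which is already in $\mathcal D(H)$) but the sign of a possible boundary term that survives. If you wish to work with $\Theta A+A\Theta$, you would need to state and justify an analogue of Lemma~\ref{lem:undoing-commutators} for that operator; this is doable but is a missing ingredient as written.

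Second, a more minor point: the favorable ``$(\lambda-\lambda_0)$'' term does not literally ``come from the potential'' in the sense of differentiating $V$; it appears when rewriting the kinetic terms via \eqref{eq:15091010} as in \eqref{eq:15.2.18.16.9}, producing $(\lambda-q_1)\Theta'$, with the short-range bounds \eqref{eq:60} on $q_1,q_2$ then absorbing the error terms. Also the paper does not ``let $R\to\infty$'' to produce $B^*$-bounds; rather, \eqref{eq:15.2.15.5.9} is used to recognize $R_\nu\Theta'$ restricted to the annulus $B_{R_{\nu+1}}\setminus B_{R_\nu}$ as comparable to the dyadic weight, and one takes a supremum over $\nu$ (for the $B^*$-norm of $A\phi$) while holding $\nu=0$ to get the Hessian quadratic form. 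These are small discrepancies, and otherwise your Step~II (extraction of a weak limit in $\mathcal H_{-s}$, upgrade to strong convergence using the compactness of $\chi_n f(H)$, $\phi\in\mathcal N\cap B_0^*$, contradiction via Rellich) matches the paper's proof closely.
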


Theorem~\ref{thm:12.7.2.7.9} is  sharper than \cite{K4} in the sense that 
we are working in the $B$--$B^*$ framework  and
that our framework is more general.
Note that it is well known that the standard Mourre commutator argument fails on
the hyperbolic manifold (see \cite { K4} for references),
but our commutator method provides a unified approach for manifolds
with   Euclidean or  hyperbolic ends (or  a mixture).  Kumura's
more classically flavoured method shares the same unified feature.

Absence of singular continuous spectrum is a standard application of
the uniform boundedness of $R(z)$ in an appropriate operator space.
 It is stated as follows.

\begin{corollary}\label{cor:15060823}
The operator $H$ has no singular continuous spectrum on $\mathcal I$:
$\sigma_{\mathrm{sc}}(H)\cap \mathcal I=\emptyset$.
\end{corollary}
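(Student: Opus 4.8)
The statement to prove is Corollary~\ref{cor:15060823}, the absence of singular continuous spectrum on $\mathcal I$, assumed as a consequence of the limiting absorption principle in Theorem~\ref{thm:12.7.2.7.9}.

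\medskip

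\emph{Plan of proof.} The strategy is the standard deduction of absence of singular continuous spectrum from the limiting absorption principle, adapted to the Besov-space framework in place here. The key mechanism is Stone's formula combined with boundedness of the resolvent as a map $B\to B^*$, locally uniformly down to the real axis from both half-planes. First I would fix an arbitrary relatively compact open interval $I\subseteq\mathcal I$ with $\bar I\subseteq\mathcal I$; since $\mathcal I$ is open, it is a countable union of such intervals, so it suffices to show $\sigma_{\mathrm{sc}}(H)\cap I=\emptyset$ for each of them. By Theorem~\ref{thm:12.7.2.7.9} there is a constant $C>0$ with $\|R(z)\psi\|_{B^*}\le C\|\psi\|_B$ for all $z\in I_\pm$ and $\psi\in B$. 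Combined with the dense and continuous inclusions $\mathcal H_s\subsetneq B$ and $B^*\subsetneq\mathcal H_{-s}$ recalled in the excerpt (any $s>1/2$), this gives in particular a uniform bound
\begin{align*}
\|R(z)\|_{\mathcal B(\mathcal H_s,\mathcal H_{-s})}\le C\quad\text{for all }z\in I_\pm.
\end{align*}

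\medskip

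Next I would upgrade the uniform bound to \emph{existence of boundary values}. For $\psi$ in the dense subspace $C^\infty_{\mathrm c}(M)\subseteq B$, the function $z\mapsto \langle\psi, R(z)\psi\rangle$ is holomorphic on $I_+$ (resp.\ $I_-$) and, by the uniform bound, is bounded there; one shows it extends continuously to $\bar I$ by a standard argument: writing $R(z)=R(z_0)+(z-z_0)R(z)R(z_0)$ and using the uniform bound together with the continuity already available away from the spectrum, one deduces that $z\mapsto R(z)\psi\in\mathcal H_{-s}$ is uniformly continuous on $I_\pm$, hence extends to $\overline{I_\pm}$; density of $C^\infty_{\mathrm c}(M)$ in $B$ then promotes this to all $\psi\in B$ with values in $B^*$. (In the present paper this refinement may in fact be part of, or an immediate consequence of, the proof of Theorem~\ref{thm:12.7.2.7.9}; if so one simply invokes it.) In particular the limits $R(\lambda\pm\mathrm i0)\psi=\lim_{\Gamma\downarrow 0}R(\lambda\pm\mathrm i\Gamma)\psi$ exist in $\mathcal H_{-s}$, uniformly for $\lambda\in I$.

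\medskip

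With boundary values in hand, the conclusion follows from Stone's formula. For $\psi\in C^\infty_{\mathrm c}(M)$ and any subinterval $[a,b]\subseteq I$,
\begin{align*}
\langle\psi, P_H((a,b))\psi\rangle
=\lim_{\Gamma\downarrow 0}\frac1{2\pi\mathrm i}\int_a^b\bigl\langle\psi,\bigl(R(\lambda+\mathrm i\Gamma)-R(\lambda-\mathrm i\Gamma)\bigr)\psi\bigr\rangle\,\mathrm d\lambda,
\end{align*}
and by the dominated convergence theorem (the integrand is bounded uniformly in $\Gamma$ by $2C\|\psi\|_{B}^2$ and converges pointwise to $\langle\psi,(R(\lambda+\mathrm i0)-R(\lambda-\mathrm i0))\psi\rangle$) this equals $\tfrac1{2\pi\mathrm i}\int_a^b\langle\psi,(R(\lambda+\mathrm i0)-R(\lambda-\mathrm i0))\psi\rangle\,\mathrm d\lambda$. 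Hence the spectral measure $\langle\psi, P_H(\cdot)\psi\rangle$ restricted to $I$ is absolutely continuous with respect to Lebesgue measure, with a locally bounded density. Since $C^\infty_{\mathrm c}(M)$ is dense in $\mathcal H$, a standard polarization-and-approximation argument shows $\langle\phi, P_H(\cdot)\phi\rangle\restriction_I$ is absolutely continuous for \emph{every} $\phi\in\mathcal H$: indeed for $\phi_n\to\phi$ in $\mathcal H$ with $\phi_n\in C^\infty_{\mathrm c}(M)$, the measures $\mu_{\phi_n}=\langle\phi_n,P_H(\cdot)\phi_n\rangle$ converge in total variation on $I$ to $\mu_\phi$, and absolute continuity is preserved under such limits. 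Therefore $P_H(I)\mathcal H$ contains no nonzero element whose spectral measure has a singular continuous part, i.e.\ $\sigma_{\mathrm{sc}}(H)\cap I=\emptyset$. Letting $I$ exhaust $\mathcal I$ gives $\sigma_{\mathrm{sc}}(H)\cap\mathcal I=\emptyset$.

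\medskip

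\emph{Main obstacle.} The only nontrivial point is the passage from the uniform norm bound in Theorem~\ref{thm:12.7.2.7.9} to the \emph{existence} of the boundary values $R(\lambda\pm\mathrm i0)$ in the weak (weak-$\star$) topology of $B^*$, locally uniformly in $\lambda$; everything after that is soft functional analysis (Stone's formula plus density). In many treatments this continuity statement is established simultaneously with the a priori bound, so the expected resolution is that the proof of Theorem~\ref{thm:12.7.2.7.9}, or a routine companion argument using the resolvent identity and the already-known continuity of $z\mapsto R(z)$ on $\mathbb C\setminus\sigma(H)$, supplies exactly this; no new idea beyond what underlies Theorem~\ref{thm:12.7.2.7.9} should be needed.
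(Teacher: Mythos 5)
Your overall strategy (Stone's formula, the uniform $B\to B^*$ bound of Theorem~\ref{thm:12.7.2.7.9}, and a density argument) is indeed the intended ``standard application'' alluded to in the paper. However, as written the proof has a genuine gap, and you have correctly located it yourself: in the Stone's formula step you invoke dominated convergence with a pointwise limit involving $R(\lambda\pm\mathrm i0)$, i.e.\ you assume the boundary values of the resolvent exist. But this is strictly stronger than what Theorem~\ref{thm:12.7.2.7.9} provides. The uniform bound holds under Condition~\ref{cond:12.6.2.21.13b}, whereas the existence of boundary values (Corollary~\ref{thm:12.7.2.7.9b}) is established only later, under the strictly stronger Condition~\ref{cond:12.6.2.21.13bbb}, via the H\"older continuity that comes from the radiation condition bounds of Theorem~\ref{prop:radiation-conditions}. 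The ``routine companion argument using the resolvent identity and the already-known continuity of $z\mapsto R(z)$ on $\mathbb C\setminus\sigma(H)$'' that you suggest as a resolution does not exist: a bounded holomorphic function on a half-disc need not extend continuously to the boundary, and no resolvent identity promotes a uniform norm bound to boundary continuity. So the existence of boundary values cannot be treated as soft here.

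The fix is that boundary values are not needed at all. For $\psi\in B$ and $(a,b)\subseteq I$, the limit on the right-hand side of Stone's formula exists automatically (it is a fact about spectral measures, independent of any limiting absorption principle), and since $\operatorname{Im}\langle\psi,R(\lambda+\mathrm i\Gamma)\psi\rangle=\Gamma\|R(\lambda+\mathrm i\Gamma)\psi\|^2\ge 0$ is bounded by $C\|\psi\|_B^2$ uniformly for $\lambda+\mathrm i\Gamma\in I_+$ by Theorem~\ref{thm:12.7.2.7.9}, one gets directly
\begin{align*}
\langle\psi,P_H((a,b))\psi\rangle
\le\lim_{\Gamma\downarrow 0}\frac1\pi\int_a^b\operatorname{Im}\langle\psi,R(\lambda+\mathrm i\Gamma)\psi\rangle\,\mathrm d\lambda
\le\frac{C}{\pi}\|\psi\|_B^2\,(b-a).
\end{align*}
Hence the spectral measure of $\psi$ is absolutely continuous on $I$, and your density argument (approximating any $\phi\in\mathcal H$ by $\phi_n\in C^\infty_{\mathrm c}(M)$ and noting $|\langle\phi_n,P_H(E)\phi_n\rangle-\langle\phi,P_H(E)\phi\rangle|\le(\|\phi_n\|+\|\phi\|)\|\phi_n-\phi\|$) concludes as you described. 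Replacing the dominated-convergence step by this inequality makes the proof self-contained under the hypotheses of the corollary.
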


 Corollary~\ref{cor:15060823} applies in  particular to the Dirichlet Laplacian on
 $\R^d\setminus K$ for any compact set $K$, which to our knowledge was
 first  proved in \cite{DaSi}.

The Besov boundedness \eqref{eq:13.8.22.4.59c} 
does not immediately imply the limiting absorption principle. Before
showing it we establish  radiation condition bounds under 
 an additional (minor) regularity and decay condition. The bounds will be
 crucial for our application \cite{IS3}.

\begin{cond}\label{cond:12.6.2.21.13bbb}
In addition to Condition~\ref{cond:12.6.2.21.13b} 
with the same $\tau>0$ appearing there,
there exist splittings $q_1=q_{11}+q_{12}$ and 
$q_2=q_{21}+q_{22}$ 
by real-valued functions
\begin{align*}
q_{11}\in C^2(M)\cap L^\infty(M),\quad
q_{12}, 
q_{21}\in C^1(M)\cap L^\infty(M),\quad
q_{22}\in L^\infty(M)
\end{align*}
and constants $\rho,C>0$
such that for $k=0,1$
\begin{align*}
\bigl|\ell^{\bullet i}\nabla_i|\mathrm dr|^2\bigr|&\le Cr^{-1-\tau/2}&
|\nabla^rq_{11}|&\le Cr^{-(1+\rho/2)/2},&
|\ell^{\bullet i}\nabla_i q_{11}|&\le Cr^{-1-\rho/2},\\
|\mathrm d\nabla^rq_{11}|&\le Cr^{-1-\rho/2},&
|\mathrm dq_{12}|&\le Cr^{-1-\rho/2},&
|(\nabla^r)^k q_{21}|&\le Cr^{-k-\rho},\\
q_{21}\nabla^r q_{11}&\leq Cr^{-1-\rho},&
|q_{22}|&\le Cr^{-1-\rho/2}.&&
\end{align*}
\end{cond}

Our  radiation condition bounds are
stated in terms of the radial derivative $A$ defined in 
\eqref{eq:13.9.23.2.24} 
and an asymptotic complex phase $a$ given below.
Pick a
  smooth decreasing function $r_\lambda\geq r_0$ of
  $\lambda>\lambda_0$ such that
  \begin{align}
    \lambda+\lambda_0-2q_1\geq 0\mfor r\geq r_\lambda/2,
  \label{eq:14.6.29.23.46}
  \end{align}
  and that  $r_\lambda= r_0$ for all $\lambda$ large enough.
Then we set for $z=\lambda \pm \i \Gamma \in\mathcal I\cup\mathcal I_\pm$
\begin{align}
a=a_z=\eta_\lambda\Bigl[|\mathrm dr|\sqrt{2(z-q_1)}\pm\tfrac14(p^rq_{11})\big/(z-q_1)\Bigr];\quad
\eta_\lambda=1-\chi(2r/r_\lambda),\label{eq:13.9.5.7.23}
\end{align}
respectively, where the branch of square root is chosen such that 
$\mathop{\mathrm{Re}}\sqrt w>0$ for $w\in \mathbb C\setminus (-\infty,0]$.
Note that the phase $a=a_\pm$ of \eqref{eq:13.9.5.7.23} 
is an approximate solution to the radial Riccati equation
\begin{align}
\pm p^ra+a^2-2|\mathrm dr|^2(z-q_1)=0
\label{eq:15.3.11.19.35}
\end{align} 
in the sense that it makes the quantity on the left-hand side of 
\eqref{eq:15.3.11.19.35} small for large $r\ge 1$.
The first term in the brackets of \eqref{eq:13.9.5.7.23} alone already gives
an approximate solution to the same equation, however with the 
second term a better approximation is obtained, cf.\ Lemma~\ref{lem:13.9.2.7.18}
and Remark~\ref{rem:15.3.12.14.9}.
Let 
\begin{align}
\beta_c
=\tfrac12\min\{\sigma,\tau,\rho\}>0.\label{eq:10c}
\end{align}

\begin{thm}
  \label{prop:radiation-conditions} 
  Suppose Condition~\ref{cond:12.6.2.21.13bbb} and let $I\subseteq \mathcal I$
be a compact interval.
  Then for all  $\beta\in [0,\beta_c)$
  there exists $C>0$ such that 
  for any $\phi=R(z)\psi$ with $\psi\in r^{-\beta}B$ and $z\in I_\pm$
\begin{align}
\|r^\beta(A\mp a)\phi\|_{B^*}
+\langle p_i^*r^{2\beta}h^{ij}p_j\rangle_{\phi}^{1/2}
&\leq C\|r^\beta\psi\|_B,\label{eq:14cccCff}
\end{align} 
respectively.
\end{thm}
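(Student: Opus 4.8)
\emph{Strategy.} First observe that the case $\beta=0$ already follows from Theorem~\ref{thm:12.7.2.7.9}: since $A=p^r-\tfrac{\i}2(\Delta r)$ with $\Delta r\in L^\infty(M)$ and since $a=a_z$ is bounded on $I_\pm$, the first term in \eqref{eq:14cccCff} is dominated by $\|p^r\phi\|_{B^*}+\|\phi\|_{B^*}$ and the second term literally appears in \eqref{eq:13.8.22.4.59c}. So the content is the \emph{weighted} estimate for $0<\beta<\beta_c$, and the point is that $\|r^\beta A\phi\|_{B^*}$ alone is \emph{not} bounded by $\|r^\beta\psi\|_B$: the oscillating phase $\pm a$ must be subtracted, which is why $a$ is chosen in \eqref{eq:13.9.5.7.23} to solve the radial Riccati equation \eqref{eq:15.3.11.19.35} up to a small error.

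For $0<\beta<\beta_c$ I would run a single Mourre-type commutator estimate ``with a weight inside''. Fix $z=\lambda\pm\i\Gamma\in I_\pm$, put $\phi=R(z)\psi$ — so $(H-z)\phi=\psi$ and $\phi\in\mathcal H^1$, hence $A\phi\in\mathcal H$ — and write $B=A\mp a$. I would regularize the weight, $\gamma=\gamma_\delta=r^{2\beta}(1+\delta r)^{-2\beta}$ with $\delta\in(0,1]$, which is bounded for fixed $\delta$, satisfies $\nabla^r\gamma_\delta=2\beta|\mathrm dr|^2r^{-1}(1+\delta r)^{-1}\gamma_\delta\ge 0$, and increases to $r^{2\beta}$ as $\delta\downarrow 0$; with the truncations $\chi_n$ of \eqref{eq:11.7.11.5.14} the propagation observable would be $\Psi=\Psi_{n,\delta}=\mathrm{Re}\bigl(\chi_n^2\,\gamma\,(A\mp a)\bigr)$, a symmetric first order operator with bounded, radially compactly supported coefficients. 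The starting point is the identity
\[
\langle\i[H,\Psi]\rangle_\phi=-2\,\mathrm{Im}\langle\psi,\Psi\phi\rangle\pm2\Gamma\langle\Psi\rangle_\phi ,
\]
combined with an explicit computation of $\i[H,\Psi]$ from the form \eqref{eq:15091010} of $H$, using $[A,f]=-\i\nabla^rf$, $[A,\tilde\eta A]=-\i(\nabla^r\tilde\eta)A$, the geometric identity for $\i[\tfrac12L,A]$, and the convexity bound \eqref{eq:13.9.5.3.30}. Schematically this produces
\[
\i[H,\Psi]\approx 2\beta\,r^{-1}\gamma\,B^*B+\gamma\,p_i^*h^{ij}p_j+\bigl(\gamma\,B^*\mathcal E+\text{adjoint}\bigr)+\mathcal R ,
\]
where the first two terms are nonnegative (here $\beta>0$ is essential), $\mathcal E=\pm p^ra+a^2-2|\mathrm dr|^2(z-q_1)$ is the Riccati residual of \eqref{eq:15.3.11.19.35} — small thanks to the choice of $a$ and the regularity of $q_{11}$ in Condition~\ref{cond:12.6.2.21.13bbb} — and $\mathcal R$ collects genuinely subordinate contributions: those carrying the extra decay $r^{-\tau},r^{-\tau/2},r^{-\rho},r^{-\rho/2},r^{-\sigma}$ supplied by Conditions~\ref{cond:12.6.2.21.13a} and \ref{cond:12.6.2.21.13bbb}, the cut-off errors $\propto\nabla^r\chi_n^2$, and the regularization error $\propto\delta r(1+\delta r)^{-1}$.

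To close the estimate I would keep the two nonnegative principal terms and move everything else to absolute values. Using $\|r^\beta B\phi\|_{B^*}^2\le C\|r^{\beta-1/2}B\phi\|_{\mathcal H}^2$ and the fact that the principal terms increase (as $n\to\infty$, $\delta\downarrow 0$) to $\|r^{\beta-1/2}B\phi\|_{\mathcal H}^2$ and $\langle p_i^*r^{2\beta}h^{ij}p_j\rangle_\phi$, it is enough to bound the right-hand side of the identity and $|\langle\gamma B^*\mathcal E\rangle_\phi|+|\langle\mathcal R\rangle_\phi|$ by $\epsilon$ times the principal terms plus $C_\epsilon\|r^\beta\psi\|_B^2$. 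The source term $-2\,\mathrm{Im}\langle\psi,\Psi\phi\rangle$, after moving $\Psi$ onto $\psi$, has the form $\langle r^\beta\psi,(\text{bounded})\,r^\beta B\phi\rangle$ plus harmless pieces and is controlled by the $B$--$B^*$ duality. The Riccati and $\mathcal R$ contributions split, via Cauchy--Schwarz with the weight suitably distributed between the two factors, into a part absorbed into the principal terms (owing to an $r^{-\kappa}$ gap, $\kappa\in\{\sigma,\tau,\rho\}$ up to halves) and a part bounded by $\bigl(\textstyle\sum_\nu R_\nu^{2\beta-\kappa}\bigr)\bigl(\|\phi\|_{B^*}^2+\|p^r\phi\|_{B^*}^2+\langle p_i^*h^{ij}p_j\rangle_\phi\bigr)$; the geometric series converges exactly when $\beta<\kappa/2$ — this is precisely where $\beta<\beta_c$ is used — and the bracket is finite by Theorem~\ref{thm:12.7.2.7.9}. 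The term $\pm2\Gamma\langle\Psi\rangle_\phi$ is handled as the corresponding term in the proof of Theorem~\ref{thm:12.7.2.7.9}, using $\Gamma\|\phi\|_{\mathcal H}^2=\mathrm{Im}\langle\psi,\phi\rangle\le\|\psi\|_B\|\phi\|_{B^*}$ together with the sign of $\mathrm{Re}\,a$ inside $\Psi$; and the cut-off errors vanish as $n\to\infty$ because $\chi_n\gamma_\delta$ is bounded and $\phi,A\phi\in\mathcal H$. Absorbing the $\epsilon$-terms and letting $n\to\infty$, then $\delta\downarrow 0$ by monotone convergence, gives \eqref{eq:14cccCff}.

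The main obstacle is the algebra verifying that the two principal terms really dominate. When $\i[H,\Psi]$ is computed, $A$ is rewritten as $B\pm a$ throughout, and $(H-z)\phi=\psi$ is used to trade the remaining second-order terms, the $a$-dependent contributions together with the long-range potential must collapse to the small Riccati residual $\mathcal E$ of \eqref{eq:15.3.11.19.35}; confirming this with the \emph{precise} phase \eqref{eq:13.9.5.7.23} is exactly what forces the $C^2/C^1$ regularity and the product bound $q_{21}\nabla^rq_{11}\le Cr^{-1-\rho}$ in Condition~\ref{cond:12.6.2.21.13bbb}. A secondary point is that $H$ is not a polynomial in $A$ (the coefficient $\tilde\eta$ in \eqref{eq:15091010}), so one must check that pushing $\gamma$ through the commutator indeed produces the coefficient $2\beta r^{-1}$ of the good term rather than spoiling it, and that the limits ($\Gamma$ fixed, then $n\to\infty$, then $\delta\downarrow 0$) go through with estimates uniform in the remaining parameters.
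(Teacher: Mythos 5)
Your proposal follows essentially the same route as the paper: the key is a weighted commutator estimate built on the shifted operator $B=A\mp a$, with a regularized weight of order $r^{2\beta}$, so that after computing the form $\mathop{\mathrm{Im}}\bigl(B^*\,(\text{weight})\,(H-z)\bigr)$ and using the decomposition $H-z=\tfrac12(A+a)\tilde\eta(A-a)+\tfrac12 L+q_{21}+q_3$ (the paper's Lemma~\ref{lem:15.1.15.15.16ff}, collapsing the $a$-dependent and long-range contributions into the small Riccati residual of \eqref{eq:15.3.11.19.35}), one is left with the two nonnegative principal terms plus errors, and the threshold $\beta<\beta_c$ enters exactly where you locate it — through the convergence of a geometric series with exponent $2\beta-\min\{\sigma,\tau,\rho\}$ after invoking Theorem~\ref{thm:12.7.2.7.9} for the $B^*$-bounds on $\phi$ and $p\phi$. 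The technical differences are cosmetic: you regularize with $r^{2\beta}(1+\delta r)^{-2\beta}$ and pass to $\delta\downarrow 0$, while the paper regularizes with $\Theta_\nu^{2\beta}$ from \eqref{eq:15.2.15.5.8} and passes to $\nu\to\infty$ (equivalently, rescales by $R_\nu^{2\beta}$); and you localize the propagation observable with $\chi_n^2$, where the paper works first with $\psi\in C^\infty_{\mathrm c}(M)$ so that $r^\beta(A\mp a)\phi\in B^*$ a priori, and then removes this restriction by density.

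One point you state too strongly: the "starting identity" $\langle\i[H,\Psi]\rangle_\phi=-2\mathop{\mathrm{Im}}\langle\psi,\Psi\phi\rangle\pm 2\Gamma\langle\Psi\rangle_\phi$ is only an identity on $C^\infty_{\mathrm c}(M)$; on $\mathcal D(H)$ it must be replaced by an inequality, because $\phi$ need not satisfy the boundary-vanishing conditions that would make the integrations by parts honest. This is precisely the content of Lemma~\ref{lem:undoing-commutators}, which guarantees $[H,\i A]_\Theta\le 2\mathop{\mathrm{Im}}(A\Theta H)$ on $\mathcal D(H)$, i.e.\ that the boundary correction has a favorable sign. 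Fortunately the inequality is in the direction your argument needs (the distributional commutator, which carries the positive principal terms, is the smaller side), so the argument survives, but this sign is not automatic and should be flagged rather than tacitly assumed. With that caveat inserted, your sketch is essentially the paper's proof.
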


As an application we obtain the limiting absorption principle.

\begin{corollary}\label{thm:12.7.2.7.9b}
Suppose Condition ~\ref{cond:12.6.2.21.13bbb},
and let $I\subseteq \mathcal I$ be a compact interval.
For any $s>1/2$
and $\epsilon\in (0,\min\{s-1/2,\beta_c,(2+\rho)/4\})$ 
there exists $C>0$ such that for $k=0,1$ and any $z,z'\in I_+$ or $z,z'\in I_-$ 
\begin{align}
\|p^k R(z)-p^k R(z')\|_{\mathcal B(\mathcal H_s,\mathcal H_{-s})}\le C|z-z'|^{\min\{\epsilon,1\}}.
\label{eq:14.12.30.21.52}
\end{align}
In particular, the operators $p^k R(z)$, $k=0,1$, attain
uniform limits as $I_\pm \ni z \to \lambda \in I$ in the norm topology of 
${\mathcal B}(\mathcal H_s,\mathcal H_{-s})$, say denoted by 
\begin{align}
p^k R(\lambda\pm\mathrm i0)=\lim_{I_\pm \ni z\to \lambda}p^k R(z), \quad \lambda\in I,
\label{eq:14.12.30.21.53}
\end{align}
respectively. 
These limits $p^k R(\lambda\pm\mathrm i0)\in{\mathcal B}(B,B^*)$,
and $R(\lambda\pm\mathrm i0)\colon B\to \mathcal N\cap B^*$.
\end{corollary}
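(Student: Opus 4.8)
The plan is to obtain both the Hölder estimate \eqref{eq:14.12.30.21.52} and the stated mapping properties from Theorems~\ref{thm:12.7.2.7.9} and \ref{prop:radiation-conditions}. I would first record the ``static'' consequences. By the nestings $\mathcal H_s\hookrightarrow B$ and $B^*\hookrightarrow\mathcal H_{-s}$, valid for $s>1/2$, the bound \eqref{eq:13.8.22.4.59c} already gives a bound on $p^\alpha R(z)$ in $\mathcal B(\mathcal H_s,\mathcal H_{-s})$ that is uniform for $z\in I_\pm$; for $\alpha=1$ one also uses \eqref{eq:15091010} to reconstruct the full first-order part from $p^r\phi$, $H_0\phi$ and the $h$-form, together with $h\ge cr^{-1-\tau}g$ from \eqref{eq:15091112}. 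Since $R(z)\psi\in\mathcal D(H)=\mathcal D(H_0)\subseteq\mathcal N$, also $R(z)\colon B\to\mathcal N\cap B^*$ uniformly. Likewise, the same nestings applied to \eqref{eq:14cccCff} yield, for each fixed $\beta\in[0,\beta_c)$, a uniform bound on $r^\beta(A\mp a)R(z)\colon r^{-\beta}B\to B^*$ and on the weighted $h$-form of $R(z)$.

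For the Hölder continuity I would use the second resolvent identity in sesquilinear form: for $z,z'\in I_+$ (the case $I_-$ being symmetric) and $\psi,\psi'\in\mathcal H_s$,
\begin{align*}
\langle\psi',\bigl(R(z)-R(z')\bigr)\psi\rangle=(z-z')\langle R(\bar z)\psi',R(z')\psi\rangle=(z-z')\langle\phi,\phi'\rangle,
\end{align*}
where $\phi=R(\bar z)\psi'$ and $\phi'=R(z')\psi$ lie in $\mathcal N\cap B^*$ and $\bar z\in I_-$. Since a pairing of two $B^*$-functions is a priori undefined, the crux is to bound $|\langle\phi,\phi'\rangle|$ by $C|z-z'|^{\epsilon-1}\|\psi'\|_{\mathcal H_s}\|\psi\|_{\mathcal H_s}$, for $\epsilon$ in the stated range, using that $\phi$ obeys the $I_-$ radiation condition (phase $a_{\bar z}$, so $A\phi\approx-a_{\bar z}\phi$) while $\phi'$ obeys the $I_+$ one (phase $a_{z'}$, so $A\phi'\approx a_{z'}\phi'$). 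Splitting $\phi=\chi_n\phi+\bar\chi_n\phi$, the near piece is cheap: $F_\nu\chi_n=0$ for $\nu>n$, so $\|\chi_n\phi\|_B\le\sum_{\nu\le n}R_\nu\|\phi\|_{B^*}\le CR_n\|\phi\|_{B^*}$, whence $|\langle\chi_n\phi,\phi'\rangle|\le\|\chi_n\phi\|_B\|\phi'\|_{B^*}\le CR_n\|\psi'\|_{\mathcal H_s}\|\psi\|_{\mathcal H_s}$.

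The far piece is where the radiation conditions do the work. Integrating by parts with the (essentially symmetric) operator $A$, substituting $A\phi\approx-a_{\bar z}\phi$ and $A\phi'\approx a_{z'}\phi'$, and collecting the resulting commutator into a term supported near $r=R_n$, one reaches an identity schematically of the form
\begin{align*}
\langle\bar\chi_n\phi,\bigl(\overline{a_{\bar z}}+a_{z'}\bigr)\phi'\rangle=(\text{boundary term at }r\sim R_n)+(\text{radiation errors}).
\end{align*}
The symbol $\overline{a_{\bar z}}+a_{z'}$ equals $|\mathrm dr|\bigl(\sqrt{2(z-q_1)}+\sqrt{2(z'-q_1)}\bigr)$ modulo lower order, and on $\{r\ge r_\lambda\}$ has real part $\ge c>0$ — this is exactly where $\lambda>\lambda_0$ and the choice of $r_\lambda$ in \eqref{eq:14.6.29.23.46} enter — so it is boundedly invertible there; the radiation errors carry a gain $r^{-\beta}$ from \eqref{eq:14cccCff}, usable for any $\beta$ with $0\le\beta<\min\{\beta_c,s-1/2\}$, since one needs $\psi,\psi'\in r^{-\beta}B$ and $\mathcal H_s\subset r^{-\beta}B$ precisely when $\beta<s-1/2$; and the boundary term, handled with the $h$-form bounds of \eqref{eq:13.8.22.4.59c}, \eqref{eq:14cccCff}, is seen to be $O(R_n^{-\beta})$ as well. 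This gives $|\langle\bar\chi_n\phi,\phi'\rangle|\le C_\beta R_n^{-\beta}\|\psi'\|_{\mathcal H_s}\|\psi\|_{\mathcal H_s}$. Combining the two pieces and choosing $n$ with $R_n\sim|z-z'|^{-1/(1+\beta)}$ balances the growing against the decaying term, so that $|\langle\psi',(R(z)-R(z'))\psi\rangle|\le C|z-z'|^{\beta/(1+\beta)}\|\psi'\|_{\mathcal H_s}\|\psi\|_{\mathcal H_s}$; taking $\beta\uparrow\min\{\beta_c,s-1/2\}$ and using that $x\mapsto x/(1+x)$ is increasing produces the exponent $\min\{(2s-1)/(2s+1),\beta_c/(\beta_c+1)\}$ in \eqref{eq:14.12.30.21.52} for $\alpha=0$; inserting $p$ on both sides and using the first-order bounds of the first step together with the $h$-forms handles $\alpha=1$. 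From \eqref{eq:14.12.30.21.52} the norm limits $p^\alpha R(\lambda\pm\mathrm i0)$ in $\mathcal B(\mathcal H_s,\mathcal H_{-s})$ exist by completeness; they extend to $\mathcal B(B,B^*)$ because $R_\nu^{-1/2}\|F_\nu p^\alpha R(z)\psi\|\le C\|\psi\|_B$ uniformly and $F_\nu p^\alpha R(z)\psi\to F_\nu p^\alpha R(\lambda\pm\mathrm i0)\psi$ in $\mathcal H$, so $\|p^\alpha R(\lambda\pm\mathrm i0)\psi\|_{B^*}\le C\|\psi\|_B$ first for $\psi\in\mathcal H_s$ and then, by density of $\mathcal H_s$ in $B$, for all $\psi\in B$. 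Finally $R(\lambda\pm\mathrm i0)\psi\in\mathcal N$: for each $n$, $H_0(\chi_nR(z)\psi)=\chi_nH_0R(z)\psi+(\text{first order})R(z)\psi$ is bounded in $\mathcal H$, so $\chi_nR(z)\psi$ is bounded in $\mathcal H^1$ and converges weakly there to $\chi_nR(\lambda\pm\mathrm i0)\psi$ as $z\to\lambda$, which therefore lies in $\mathcal H^1$.

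I expect the decisive difficulty to be the far-piece estimate: giving rigorous meaning to the pairing of the two $B^*$-class radiation fields via the integration by parts above, keeping exact track of the boundary remainder at $r\sim R_n$ (which is only borderline controlled by the $B^*$-norms and must be tamed using the $h$-form bounds), and arranging the weights so that a definite power $r^{-\beta}$, $\beta<\min\{\beta_c,s-1/2\}$, survives the mismatch between $r^{-\beta}B$-data and $\mathcal H_s$-data. This bookkeeping, together with the optimization over $n$ and $\beta$, is what pins the Hölder exponent to $\min\{(2s-1)/(2s+1),\beta_c/(\beta_c+1)\}$; the remaining assertions are then routine limiting arguments.
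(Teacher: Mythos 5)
Your proposal shares the key ingredients with the paper's proof---the resolvent identity, a cutoff at scale $R_n$, the radiation condition bounds of Theorems~\ref{thm:12.7.2.7.9} and \ref{prop:radiation-conditions}, and an optimization over $R_n$---but the decomposition is genuinely different and the difference opens a gap in the far-piece estimate. The paper works at the operator level with a \emph{nested double} cutoff. First $R(z)-R(z')$ is compared with $\chi_m R(z)\chi_m-\chi_m R(z')\chi_m$ at cost $O(R_m^{s'-s})$ (a static consequence of Theorem~\ref{thm:12.7.2.7.9}, \emph{not} multiplied by $|z-z'|$), and then the inner difference is expanded via $\chi_m R(z)\bigl[\chi_n(H-z')-(H-z)\chi_n\bigr]R(z')\chi_m$ using \eqref{eq:1} and the phases. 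The crucial point is that every term carrying the radiation-condition factor $(A-a_{z'})$ or $(A+a_{\bar z})^*$ also carries $\chi_n'$, which is supported in the single annulus $R_n<r<2R_n$; so the $r^{-\beta}$ gain becomes a clean $R_n^{-\beta}$ with no infinite sum over dyadic shells.

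In your form-level version the far piece is $\langle\bar\chi_n\phi,\phi'\rangle$, and the radiation errors produced by the integration by parts with $A$---terms of the type $\langle(A+a_{\bar z})\phi,\bar\chi_n\phi'\rangle$---are distributed over all of $\{r\ge R_n\}$. Estimating shell by shell, $\|F_\nu(A+a_{\bar z})\phi\|\lesssim R_\nu^{1/2-\beta}$ and $\|F_\nu\phi'\|\lesssim R_\nu^{1/2}$, so the $\nu$-th contribution is $\lesssim R_\nu^{1-\beta}$ and $\sum_{\nu\ge n}R_\nu^{1-\beta}$ diverges whenever $\beta<1$. Since your constraint $\beta<\min\{\beta_c,s-1/2\}$ does not force $\beta\ge 1$, the Besov and radiation bounds alone give no bound on the far piece that is uniform in $z\in I_\pm$. (For fixed $z$ the pairings are finite because $\phi,\phi'\in\mathcal H$, but that estimate blows up like $|\Im z|^{-1}$.) Likewise, your claim that the boundary term $\langle\phi,[A,\bar\chi_n]\phi'\rangle$ is $O(R_n^{-\beta})$ is unsupported: $[A,\bar\chi_n]\sim R_n^{-1}$ on one annulus and $\phi,\phi'\in B^*$ give only $O(1)$, and the $h$-form bounds give no extra decaying weight on $\phi,\phi'$ themselves. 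Note also that if the far pairing really were $O(R_n^{-\beta})$ uniformly you could take $n=0$ to get $|\langle\phi,\phi'\rangle|=O(1)$ and hence Lipschitz continuity, which is stronger than \eqref{eq:14.12.30.21.52}---a warning sign. The missing device is precisely the paper's outer cutoff $\chi_m$: it costs the separate tail term $R_m^{s'-s}$ but forces all radiation-condition factors to be sandwiched between compactly supported weights, replacing the divergent shell sum by a single annulus contribution of size $R_n^{-\beta}$. Your near-piece estimate, the static mapping-property arguments ($B\to B^*$, $B\to\mathcal N$), and the final optimization over $R_n$ are all fine.
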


Now we have the limiting resolvents $R(\lambda\pm\mathrm i0)$.
The radiation condition bounds for real spectral parameters
follow directly from Theorem~\ref{prop:radiation-conditions}.

\begin{corollary}
  \label{thm:radiation-conditions}
  Suppose Condition~\ref{cond:12.6.2.21.13bbb}  and let 
  $I\subseteq \mathcal I$ be a compact interval.
  Then for all $\beta \in [0,\beta_c)$
  there exists $C>0$ such that 
  for any $\phi=R(\lambda\pm\mathrm i0)\psi$ with $\psi\in r^{-\beta}B$ and 
  $\lambda\in I$ 
\begin{align}
\|r^\beta(A\mp a_\pm)\phi\|_{B^*}
+\langle p_i^*r^{2\beta}h^{ij}p_j\rangle_{\phi}^{1/2}
&\leq C\|r^\beta\psi\|_B,\label{eq:14cccCa} 
\end{align}
respectively.
\end{corollary}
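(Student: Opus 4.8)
The strategy is to obtain \eqref{eq:14cccCa} from \eqref{eq:14cccCff} by passing to the boundary values $z\to\lambda$ of the resolvent, invoking the limiting absorption principle of Corollary~\ref{thm:12.7.2.7.9b} to identify the limits. Fix a relatively compact open $I\subseteq\mathcal I$, $\beta\in[0,\beta_c)$, a choice of sign, and $\psi\in r^{-\beta}B$; since $\beta\ge 0$ we have $r^{-\beta}B\subseteq B$, so $\phi:=R(\lambda\pm\mathrm i0)\psi\in\mathcal N\cap B^*$ by Corollary~\ref{thm:12.7.2.7.9b}, hence $\phi\in\mathcal H^1_{\mathrm{loc}}$ and the left-hand side of \eqref{eq:14cccCa} is a well-defined element of $[0,\infty]$. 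Choose $z_k=\lambda\pm\mathrm i\Gamma_k\in I_\pm$ with $\Gamma_k\to 0$ and set $\phi_k=R(z_k)\psi$. The one substantial step is
\[
\phi_k\to\phi\quad\text{and}\quad p\phi_k\to p\phi\quad\text{in }\mathcal H_{\mathrm{loc}}.
\]
If $\beta>0$ this is immediate, since $r^{-\beta}B\subseteq\mathcal H_{1/2+\beta}\subseteq\mathcal H_s$ for any $s\in(1/2,1/2+\beta]$, so Corollary~\ref{thm:12.7.2.7.9b} gives norm convergence $p^\alpha R(z_k)\psi\to p^\alpha R(\lambda\pm\mathrm i0)\psi$ in $\mathcal H_{-s}\hookrightarrow\mathcal H_{\mathrm{loc}}$ for $\alpha=0,1$. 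For general $\psi$ (in particular for $\beta=0$) one approximates $\psi$ by the $r$-truncations $\chi_j\psi$ of \eqref{eq:11.7.11.5.14}, which lie in every $\mathcal H_s$ and satisfy $\|r^\beta(\psi-\chi_j\psi)\|_B\to 0$; the previous case handles each $\chi_j\psi$, while Theorem~\ref{thm:12.7.2.7.9}, combined with the lower bound $h\ge Cr^{-1-\tau}g$ from \eqref{eq:15091112} (which converts the $h$-form bound into a bound on $\|r^{-(1+\tau)/2}pR(z)\,\cdot\,\|_{\mathcal H}$ by $C\|\cdot\|_B$), yields $\|p^\alpha R(z)(\psi-\chi_j\psi)\|_{L^2(\Omega)}\le C_\Omega\|\psi-\chi_j\psi\|_B$ uniformly for $z\in I_\pm$ and each $r$-bounded open $\Omega$; a $3\varepsilon$-argument on each such $\Omega$ then gives the stated convergence, the limit being $\phi$ resp.\ $p\phi$ by uniqueness of distributional limits.

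Granting this, the remainder is a block-by-block passage to the limit, exploiting that the $B^*$-norm of \eqref{eq:13.9.7.15.11} is a supremum over the $r$-bounded supports of the $F_\nu$ of \eqref{eq:13.9.12.17.34}. On each $\supp F_\nu$ the weight $r^\beta$ is bounded, and so are $r^{2\beta}h$, $|\mathrm dr|$ and $\Delta r$ (by Condition~\ref{cond:12.6.2.21.13a} and \eqref{eq:15091112}); moreover by the explicit form \eqref{eq:13.9.5.7.23} together with \eqref{eq:14.6.29.23.46} one has the global bound $\|a_{z_k}-a_\lambda\|_{L^\infty(M)}\le C|z_k-\lambda|\to 0$. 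Since $A=p^r-\tfrac{\mathrm i}2(\Delta r)$ by \eqref{eq:12.6.27.0.43}, these facts and the $\mathcal H_{\mathrm{loc}}$-convergences give $F_\nu r^\beta(A\mp a_{z_k})\phi_k\to F_\nu r^\beta(A\mp a_\lambda)\phi$ and $F_\nu p\phi_k\to F_\nu p\phi$ in $\mathcal H$ for every $\nu$. Hence, since $R_\nu^{-1/2}\|F_\nu g\|_{\mathcal H}\le\|g\|_{B^*}$ and each $\|r^\beta(A\mp a_{z_k})\phi_k\|_{B^*}\le C\|r^\beta\psi\|_B$ by \eqref{eq:14cccCff},
\[
R_\nu^{-1/2}\|F_\nu r^\beta(A\mp a_\lambda)\phi\|_{\mathcal H}=\lim_k R_\nu^{-1/2}\|F_\nu r^\beta(A\mp a_{z_k})\phi_k\|_{\mathcal H}\le C\|r^\beta\psi\|_B,
\]
and taking the supremum over $\nu$ bounds the first term in \eqref{eq:14cccCa}. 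For the $h$-term, using monotone convergence over $B_{R_N}\uparrow M$ (legitimate as $h\ge 0$) together with the $\mathcal H_{\mathrm{loc}}$-convergence of $p\phi_k$ and boundedness of $r^{2\beta}h$ on $B_{R_N}$,
\begin{align*}
\langle p_i^*r^{2\beta}h^{ij}p_j\rangle_{\phi}
&=\sup_N\int_{B_{R_N}}r^{2\beta}h^{ij}\overline{p_i\phi}\,p_j\phi\,\mathrm dg
=\sup_N\lim_k\int_{B_{R_N}}r^{2\beta}h^{ij}\overline{p_i\phi_k}\,p_j\phi_k\,\mathrm dg\\
&\le\limsup_k\langle p_i^*r^{2\beta}h^{ij}p_j\rangle_{\phi_k}\le C\|r^\beta\psi\|_B^2.
\end{align*}
Adding the two estimates yields \eqref{eq:14cccCa}.

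I expect the only real obstacle to be the first step: promoting the $\mathcal B(\mathcal H_s,\mathcal H_{-s})$-convergence supplied by the limiting absorption principle to $\mathcal H_{\mathrm{loc}}$-convergence of $\phi_k$ and $p\phi_k$ for a source that is merely $B$-regular (the case $\beta=0$), in a form strong enough to commute with the weighted Besov quantities. Everything else is routine bookkeeping, precisely because the $B^*$-norm localizes to $r$-bounded annuli, on each of which the weight $r^\beta$ and the phase discrepancy $a_{z_k}-a_\lambda$ are tame.
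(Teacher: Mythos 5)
Your proof is correct and follows the same route the paper intends: pass to the boundary value $z\to\lambda$ in the complex‑parameter bound of Theorem~\ref{prop:radiation-conditions}, using the limiting absorption principle of Corollary~\ref{thm:12.7.2.7.9b} to identify the limit, a truncation of $\psi$ by $\chi_j\psi$ to handle sources that are only $B$‑regular, and the localization structure of the $B^*$‑norm (which the paper records as $\|\cdot\|_{B^*}=\sup_n\|\chi_n\cdot\|_{B^*}$) to pass to the limit block by block. The paper's own proof is a two‑sentence sketch citing exactly these ingredients, and your write‑up is a faithful, detailed implementation of that sketch; in particular your use of $h\ge Cr^{-1-\tau}g$ from \eqref{eq:15091112} to convert the $h$‑form bound of Theorem~\ref{thm:12.7.2.7.9} into local control of the full gradient, and the Lipschitz bound on $z\mapsto a_z$ coming from \eqref{eq:14.6.29.23.46}, are the right technical observations to make the ``approximation arguments'' go through.
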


  We shall see in Subsection~\ref{sec:13.9.5.0.34} that for the
  Euclidean and the hyperbolic spaces without potential $V$ we have
  $\beta_c\ge 1$. Hence in
  these cases the bound 
   \eqref{eq:14cccCa} hold for any
   $\beta\in [0,1)$. 
We remark that for the Euclidean space  and a sufficiently regular
   potential the bound \eqref{eq:14cccCa} is well known for  $\beta\in [0,1)$, cf.\ 
   \cite{Is,Sa,HS}.  However in this case one can actually
  allow $\beta\in [1,2)$, cf.\ \cite{HS}.

As another application of the radiation condition bounds 
we can characterize the limiting resolvents $R(\lambda\pm\mathrm i0)$. 
For the Euclidean space such characterization
is usually referred to as the \textit{Sommerfeld uniqueness result}, see
for example  \cite{Is}. 
\begin{corollary}\label{thm:13.9.9.8.23}
  Suppose Condition ~\ref{cond:12.6.2.21.13bbb}, and let
  $\lambda\in\mathcal I$, $\phi\in \mathcal H_{\mathrm{loc}}$ and
  $\psi\in r^{-\beta}B$ with  $\beta\in [0,\beta_c)$.
Then 
$\phi=R(\lambda\pm\mathrm i0)\psi$ holds if and only if
both of the following conditions hold:
\begin{enumerate}[(i)]
\item\label{item:13.7.29.0.29}
$(H-\lambda)\phi=\psi$ in the distributional sense.
\item\label{item:13.7.29.0.28}
$\phi\in \mathcal N\cap
r^\beta B^*$  and $(A\mp a)\phi\in r^{-\beta}B^*_0$.
\end{enumerate}
\end{corollary}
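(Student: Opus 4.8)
The plan is to prove the two implications separately, using the radiation condition bounds of Corollary~\ref{thm:radiation-conditions} for the ``only if'' direction and a Rellich-type uniqueness argument for the ``if'' direction. For the ``only if'' part: assume $\phi = R(\lambda\pm\i0)\psi$. That $(H-\lambda)\phi=\psi$ distributionally is immediate from $\phi = \lim_{I_\pm\ni z\to\lambda} R(z)\psi$ in $\mathcal B(B,B^*)$ (Corollary~\ref{thm:12.7.2.7.9b}) together with the fact that each $R(z)\psi$ solves $(H-z)R(z)\psi=\psi$, passing to the distributional limit. The membership $\phi\in\mathcal N\cap r^\beta B^*$ also follows from Corollary~\ref{thm:12.7.2.7.9b} (which gives $R(\lambda\pm\i0)\colon B\to\mathcal N\cap B^*$, and the $r^\beta$-weighted version by the same reasoning applied to $r^{-\beta}B$). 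The content is $(A\mp a)\phi\in r^{-\beta}B^*_0$: Corollary~\ref{thm:radiation-conditions} gives $r^\beta(A\mp a)\phi\in B^*$, i.e.\ $(A\mp a)\phi\in r^{-\beta}B^*$, and to upgrade $B^*$ to $B^*_0$ I would use that $\phi=\lim R(z)\psi$ with convergence in a weighted $B^*$ space and that the bound \eqref{eq:14cccCff} is \emph{uniform} in $z\in I_\pm$; a standard density/approximation argument (approximating $\psi$ by $C^\infty_{\mathrm c}$ functions and using that for such data the solution lies in better weighted spaces) then places the limit in the closure $B^*_0$.

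For the ``if'' direction: suppose $\phi\in\mathcal H_{\mathrm{loc}}$ satisfies (i) and (ii). Set $\tilde\phi=\phi-R(\lambda\pm\i0)\psi$. By the ``only if'' direction already established, $R(\lambda\pm\i0)\psi$ also satisfies (i) and (ii), so $\tilde\phi$ satisfies the \emph{homogeneous} equation $(H-\lambda)\tilde\phi=0$, lies in $\mathcal N\cap r^\beta B^*$, and obeys $(A\mp a)\tilde\phi\in r^{-\beta}B^*_0$. The goal is to show $\tilde\phi=0$. The strategy is: first use the radiation condition in its homogeneous form to bootstrap decay, showing $\tilde\phi$ actually lies in $B^*_0$ (not merely $r^\beta B^*$), and in fact in a space to which Rellich's theorem applies; then invoke Theorem~\ref{thm:13.6.20.0.10}. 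Concretely, the homogeneous radiation bound — which I expect to obtain by the same commutator machinery underlying Theorem~\ref{prop:radiation-conditions}, now with zero right-hand side — forces $(A\mp a)\tilde\phi$ to vanish in a strong sense, and one combines this with the Riccati equation \eqref{eq:15.3.11.19.35} satisfied approximately by $a$ to run the classical argument: the quantity $\langle \operatorname{Im}\text{-}\text{type commutator}\rangle_{\tilde\phi}$ controls both $\|(A\mp a)\tilde\phi\|$ and a positive spherical term $\langle p_i^* h^{ij}p_j\rangle_{\tilde\phi}$, and the boundary-term analysis (using the $B^*_0$ membership to kill the boundary contributions at infinity) shows $\tilde\phi\in B_0^*$ with the improved decay $\bar\chi_m\tilde\phi\in\mathcal N\cap B_0^*$. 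Then Theorem~\ref{thm:13.6.20.0.10} applies with $\lambda>\lambda_0$ and yields $\tilde\phi=0$ on $M$, hence $\phi=R(\lambda\pm\i0)\psi$.

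The main obstacle I anticipate is the bootstrap step in the ``if'' direction: going from the a priori membership $\phi\in r^\beta B^*$ with $(A\mp a)\phi\in r^{-\beta}B^*_0$ to the hypotheses of Theorem~\ref{thm:13.6.20.0.10}, which require $B^*_0$-membership (without the $r^\beta$ growth). This is essentially a \emph{uniqueness}/\emph{absorption} phenomenon: the outgoing (resp.\ incoming) radiation condition must be shown to be incompatible with the ``wrong'' exponential-type behaviour of a homogeneous solution, and the weight-inside commutator identity must be arranged so that the cross term $\langle A\tilde\phi, a\tilde\phi\rangle$-type contributions, together with the approximate Riccati relation, produce a coercive lower bound after the boundary terms are discarded. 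Making the boundary terms at $r\to\infty$ genuinely vanish — rather than merely bounded — is where the $B^*_0$ (as opposed to $B^*$) hypothesis on $(A\mp a)\phi$ is consumed, and setting up the limiting/cutoff procedure (using the cutoffs $\chi_{m,n}$ of \eqref{eq:11.7.11.5.14}) carefully enough to justify all integrations by parts for merely $\mathcal H_{\mathrm{loc}}$ solutions satisfying only the Dirichlet condition in the $\mathcal N$ sense is the delicate technical core. Everything else — the distributional equation, the membership statements, and the final appeal to Rellich — is routine given the results already proved.
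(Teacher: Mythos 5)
Your overall structure matches the paper: the ``only if'' direction follows from Corollaries~\ref{thm:12.7.2.7.9b} and \ref{thm:radiation-conditions}, and the ``if'' direction reduces to showing that the homogeneous-equation difference $\phi'=\phi-R(\lambda\pm\mathrm i0)\psi$ lies in $B^*_0$ so that Theorem~\ref{thm:13.6.20.0.10} finishes the job. You also correctly identify the crux as upgrading $\phi'\in r^\beta B^*$ to $\phi'\in B^*_0$. However, the mechanism you propose for that step is both heavier than necessary and left essentially unproved. You invoke ``the homogeneous radiation bound'' to force $(A\mp a)\phi'$ to vanish---but that is already handed to you: by linearity, $(A\mp a)\phi'\in r^{-\beta}B^*_0$ is immediate from hypothesis (ii) together with Corollary~\ref{thm:radiation-conditions} applied to $R(\lambda\pm\mathrm i0)\psi$. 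You then appeal to the weight-inside commutator and the approximate Riccati relation \eqref{eq:15.3.11.19.35} to produce a coercive lower bound, but you never actually derive such a bound, and you flag this yourself as ``the delicate technical core'' you have not resolved. This is a genuine gap: the conclusion $\phi'\in B^*_0$ is the whole content of the ``if'' direction and is not established.

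The paper's argument for this step is a single, very elementary virial identity and does not touch the weight-inside machinery or the Riccati relation at all. Writing $\chi_\nu=\chi(r/R_\nu)$ and using \eqref{eq:12.6.27.0.43} and \eqref{eq:1}, one has the pointwise operator identity
\begin{align*}
2\mathop{\mathrm{Im}}\bigl(\chi_\nu(H-\lambda)\bigr)
=(\mathop{\mathrm{Re}}a)\chi'_\nu+\mathop{\mathrm{Re}}\bigl(\chi'_\nu(A- a)\bigr).
\end{align*}
Evaluating in the state $\phi'$ and using $(H-\lambda)\phi'=0$, $\phi'\in\mathcal N$, and $\mathop{\mathrm{Re}}a\ge 0$ (the real part of $a$ is, up to the cutoff $\eta_\lambda$, $|\mathrm dr|\sqrt{2(\lambda-q_1)}\ge 0$; the correction $\pm\tfrac14(p^rq_{11})/(\lambda-q_1)$ is purely imaginary) together with $\bar\chi'_\nu=-\chi'_\nu\ge 0$, one gets
\begin{align*}
0\le\bigl\langle(\mathop{\mathrm{Re}}a)\bar\chi'_\nu\bigr\rangle_{\phi'}
\le\mathop{\mathrm{Re}}\bigl\langle\chi'_\nu(A-a)\bigr\rangle_{\phi'}.
\end{align*}
The right-hand side is a cross term supported in the annulus $R_\nu\lesssim r\lesssim 2R_\nu$, bounded by $R_\nu^{-1}\|\phi'\|_{r\sim R_\nu}\|(A-a)\phi'\|_{r\sim R_\nu}$; with $\phi'\in r^\beta B^*$ and $(A-a)\phi'\in r^{-\beta}B^*_0$ the $\beta$-powers cancel and the $B^*_0$ factor makes it tend to $0$ as $\nu\to\infty$. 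The left-hand side then yields exactly the $o(R_\nu)$-decay of $\int_{r\sim R_\nu}|\phi'|^2$ that defines $\phi'\in B^*_0$. No coercive commutator estimate, no Riccati equation, and no exponential-behaviour dichotomy is needed. You should replace your Riccati/coercivity sketch by this direct computation; as it stands your proposal names the obstacle but does not overcome it.
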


\subsection{Discussion of simple models}\label{sec:13.9.5.0.34}

Let us provide several examples. We shall tacitly ignore the fact
that possibly the range $r(M)=[a,\infty)$ with $a\neq 1$. Of course
the change $r\to r-a+1$ will then yield an $r$ conforming with our
conditions (including Condition \ref{cond:12.6.2.21.13}).
\subsubsection{{Ends of warped-product type}}
Let $(M,g)$ be a complete Riemannian manifold.
Suppose that 
there exist a relatively compact open subset $B\subseteq M$ 
and a $(d-1)$-dimensional closed Riemannian manifold $(S,h)$ 
such that isometrically
\begin{align*}
 M\setminus B \cong [2,\infty)\times S,\quad \partial B\cong
    \{2\}\times S,
\end{align*} 
and that in the coordinates $(r,\sigma)\in [2,\infty) \times S$
the metric $g$ is of warped-product type:
\begin{align}
g(r,\sigma)
=\mathrm{d}r\otimes \mathrm{d}r+f(r)h(\sigma);\quad
h(\sigma)=h_{\alpha\beta}(\sigma)\,\mathrm{d}\sigma^\alpha\otimes \mathrm{d}\sigma^\beta.
\label{eq:13.9.26.17.39}
\end{align}Here the Greek indices run over
    $2,\dots,d$. To make contact to Condition
    \ref{cond:12.6.2.21.13} we identify 
 $\{r_0\}\times S=S\subseteq M$ 
    for any fixed  $r_0\geq 4$ and modify the coordinate $r$ suitably to
    become a globally defined smooth function.
 Such a modified $r$ obviously conforms with 
the bounds in \eqref{eq:13.9.28.13.28} of 
Condition~\ref{cond:12.6.2.21.13} on a compact subset.
Below we examine in more detail in this particular setting 
the content of a number of the  bounds of Subsection~\ref{subsec:result} by specifying $f$ explicitly.
 Whence we consider $E\cong
    (r_0,\infty)\times S$, and more generally, the spherical coordinates are well-defined on
 $M\setminus B$ and  the Christoffel
  symbols are computed there as follows:
\begin{align*}
\Gamma^r_{rr}&{}=0,&
\Gamma^{r}_{r\alpha}&{}=\Gamma^{r}_{\alpha r}=0,&
\Gamma^{\alpha}_{rr}&{}=0,\\
\Gamma^r_{\alpha\beta}&{}=-\tfrac12f' h_{\alpha\beta},&
\Gamma^\alpha_{r\beta}&{}=\Gamma^\alpha_{\beta r}=\tfrac12 (f'/f) \delta^{\alpha \beta},&
\Gamma^\alpha_{\beta\gamma}&=(\Gamma_S)^\alpha_{\beta\gamma},
\end{align*}
where $\delta^{\alpha \beta}$ denotes Kronecker's $\delta$
and $\Gamma_S$ the Christoffel symbol for $h$ on $S$.
Hence 
\begin{align}
\begin{split}
|\mathrm dr|^2=1,\qquad
\nabla^2r=\tfrac12 f'h,\qquad
\Delta r=\tfrac{d-1}2 f'/f,\qquad
\iota^*_R\nabla^3r=0.
\end{split}\label{eq:12.9.9.10.52}
\end{align} 
For the last calculation
    $\iota^*_R\nabla^3r=0$ (to be relevant only to \cite{IS3}) we used the compatibility condition \eqref{eq:12.9.27.1.21}.
 Now we can verify the conditions of Subsection
    \ref{subsec:result} (with  $V\equiv0$) for the following examples.
\begin{examples}\label{examples:radi-cond-appl}
\begin{enumerate}
\item\label{item:13.9.27.20.52}
Let 
\begin{align*}
f(r)=r^{\theta};\quad \theta> 0.
\end{align*}
Condition~\ref{cond:12.6.2.21.13bbb} is
satisfied for $\sigma=\theta$, any $\tau>0$, $\rho'=2$ and $\rho=6$, 
and the critical energy is $\lambda_0=0$. 
The Euclidean space corresponds to $f(r)=r^2$ and $S$ being the standard unit sphere.

\item\label{item:13.9.27.20.53}
Let 
\begin{align*}
f(r)=\exp(\delta r^\theta);\quad 0<\delta,\ 
0<\theta< 1.
\end{align*}
Condition~\ref{cond:12.6.2.21.13bbb} is
satisfied for any $\sigma>0$, any $\tau>0$, $\rho'=2-2\theta$ and $\rho=6-4\theta$, 
and the critical energy is $\lambda_0=0$.

\item\label{item:13.9.27.20.54}
Let 
\begin{align*}
f(r)&=C\exp(\kappa r+\delta_\theta (r));\quad C,\kappa>0, \ \theta<1.
\end{align*} Here $\theta$ is an order parameter in
  the sense that the derivatives 
  \begin{align*}
    \delta^{(k)}_\theta (r)  
=O(r^{\theta-k});  \ k=0,1,2,\dots.
  \end{align*}
Condition~\ref{cond:12.6.2.21.13bbb} is
satisfied for any $\sigma>0$, any $\tau>0$, $\rho'=1-\theta$ and $\rho=4-2\theta$, 
and the critical energy is $\lambda_0=(d-1)^2\kappa^2/32$. 
 The hyperbolic space corresponds to $f(r)=(\sinh r )^2$
 and $S$ being the standard unit sphere, for which $\theta<1$ may be arbitrary.  
\end{enumerate}
Note that, if we can choose $2\beta_c=\min\{\sigma,\tau,\rho\}>1$, the above models also 
fulfill Condition~1.16 (2) of \cite{IS3}. 
In particular all of the results of \cite{IS3} apply to these examples.
\end{examples}

Furthermore we can  perturb the models of Examples~\ref{examples:radi-cond-appl}.
For example, we can add to (\ref{eq:13.9.26.17.39}) some lower order terms, 
whether warped-product type or not.
We can also put any compact obstacle  or attach handles topologically.
Obstacles can be non-compact if 
the gradient vector field $\omega$ is inward pointing as follows.

\begin{example}\label{ex:subset} Let $(M,g)$ be any of the Riemannian manifolds discussed above.
Let $\Omega\subseteq M$ be a domain such that its $r$-sections 
$U_R=\Omega\cap (\{R\}\times S)$ are increasing: For some $R_0\ge 3$
\begin{align*}
U_R\subseteq U_{R'}\quad\text{for all } R'\ge R\ge R_0-1
.
\end{align*}
We can modify the function $r$ on $\Omega\setminus ([R,\infty)\times S)$
so that the gradient vector field $\omega$ is forward complete on $\Omega$.
Then $\Omega$ satisfies Condition~\ref{cond:12.6.2.21.13bbb} 
with the same $\sigma,\tau, \rho',\rho$
as those of $M$,
and  Condition~1.16 (2) \cite{IS3} is  fulfilled as well.
This construction  includes solid cones in the Euclidean and the
hyperbolic spaces (for example half-spaces) 
for which one has $U_R=U_{R'}$.
\end{example}

\subsubsection{Unbounded  obstacles in $\R^2$}

We give examples of manifolds $M\subseteq \R^2$ equipped
        with the Euclidean metric and possessing  non-compact
        obstacles. Again   $V\equiv0$ and we shall refer to conditions of our
        sequel \cite{IS3}.
\begin{examples} 
        \begin{enumerate}[1)]
        \item 
         Let $[x]$ denote the integer part of $x>0$ and let
        $[x]_-=[x]$ for $x\not\in\N$ and $[n]_-=n-1$ for $n\in\N$.
         Consider  the ``saw-tooth region'' defined in
        terms of a parameter 
        $K>0$ as 
        \begin{align*}
          M=\{(x,y)\in\R^2|\,x>0,\,  y>K(x-[x]_-)/(1+[x]_-)\}.
        \end{align*}
         Define then $r\geq 1$ by the formula
         \begin{align*}
           r^2=1 +x^2+(y+K)^2.
         \end{align*} In this case $\omega$ is forward complete  and
          the other conditions
         of this paper, as well as  Condition~1.16 (2) of  \cite{IS3},  are fulfilled too.
       
\item  Consider  
\begin{align*} M=\{(x,y)\in\R^2|\, xy<1 \}.
        \end{align*} Define then in
        terms of a parameter 
        $K>2$  a function $r\geq 1$ by the formula
\begin{align*} r^2=x^2+y^2+\tfrac K2\ln \parb{(y-x)^2+2}.
        \end{align*}
  We compute at the boundary $\partial M\subseteq \R^2$ 
\begin{align*}
  \tfrac12\nabla r^2\cdot \nabla(xy)&=2-K +2K(x^2+y^2)^{-1}.
\end{align*}
This  expression is negative for $r$ big, more
precisely for  $x^2+y^2>2K/(K-2)$, and forward
completeness is fulfilled at infinity. In $M$ (as well as at $\partial M\subseteq \R^2$)
\begin{align*}
 \tfrac12\nabla r^2=\big (x+K(x-y)((y-x)^2+2)^{-1},y+K(y-x)((y-x)^2+2)^{-1}\big).
\end{align*} Using  the identity $\d r=2^{-1}r^{-1}\nabla r^2$ we
then 
obtain
\begin{align*}
|r\d r|^2&=x^2+y^2+2K^2\frac{(y-x)^2} {\parb{(y-x)^2+2}^{2}}+2K\frac{(y-x)^2} {(y-x)^2+2},
\end{align*}
and hence for any multiindex
 $\alpha$
\begin{align*}
  \nabla^\alpha \parb{|\d r|^2-1}&=\ln \parb{(y-x)^2+2}O\parb{r^{-2-|\alpha|}}.
\end{align*}
Similarly for the convexity we compute
\begin{align*}
 r \nabla^2 r=\ell +\ln \parb{(y-x)^2+2}O\parb{r^{-2}}.
\end{align*}
We can easily show 
 that the conditions of this paper are  fulfilled for any
 $\sigma,\tau, \rho<2$ (in particular for some $\sigma,\tau,
 \rho>1$). More generally we can again verify Condition~1.16 (2) of
 \cite{IS3} and hence 
  obtain the conclusions  of  \cite{IS3}.
\item  Fix
$\kappa\in (0,1)$, let $\theta:=xy^{-\kappa}$ for $y>0$ and let 
$r^2:=\kappa x^2+y^2$. Consider $M\subset \R^2$ with an end described as
\begin{align*}
  E=\{(x,y)\in\R\times \R_+|\quad r>r_0, \quad -1<\theta<1\},
\end{align*} which is  a cylinder in the variables $r$ and
$\theta$. The conditions of this paper are indeed
fulfilled, cf. \cite{Min}. However they are not met with 
$2\beta_c>1$ as required in \cite{IS3}. If on the other hand
$\kappa\geq 1$ we can let $r^2:= x^2+y^2$ for this model and indeed the conditions of this paper are  fulfilled for any
 $\sigma,\tau, \rho<2$. Whence  \cite{IS3} is applicable for
 $\kappa\geq 1$. 
 This agrees with Example  \ref{ex:subset} as well as with \cite{Co}.
\end{enumerate}
\end{examples}

\subsubsection{Multi-ends with different critical energies}\label{subsec:160129}

Here we discuss Condition~\ref{cond:12.6.2.21.13b}.
Let us consider the simplest situation:
Let $M$ be the $1$-dimensional Euclidean space $\mathbb R$,
which has exactly two ends, 
and the Schr\"odinger operator $H$ be given by 
\begin{align*}
H=-\tfrac12\tfrac{\mathrm d^2}{\mathrm dx^2}+V\quad 
\text{on }\mathcal H=L^2(\mathbb R),
\end{align*}
where $V\in C^\infty(\mathbb R)$ is equal to different constants $\lambda_0<\lambda_1$
on the two ends:
\begin{align*}
V(x)=
\left\{
\begin{array}{ll}
\lambda_0&\text{for }x\ge 1,\\
\lambda_1&\text{for }x\le -1.
\end{array}
\right.
\end{align*}
If we choose $r\in C^\infty(\mathbb R)$ such that 
\begin{align*}
r=\left\{
\begin{array}{ll}
x&\text{for }x\ge 2,\\
1&\text{for }x\le 1,
\end{array}
\right.
\end{align*}
then clearly Conditions~\ref{cond:12.6.2.21.13}--\ref{cond:10.6.1.16.24}
are satisfied with critical energy $\lambda_0$.
In this case, although the $r$-balls are unbounded, 
Condition~\ref{cond:12.6.2.21.13b} is certainly satisfied:
\begin{lemma}\label{lem:160128}
Under the above setting Condition~\ref{cond:12.6.2.21.13b} holds with 
$\mathcal I=(\lambda_0,\lambda_1)$.
\end{lemma}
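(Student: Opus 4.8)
The plan is to verify the compactness of $\chi_n P_H(I)$ for $I\subseteq\mathcal I=(\lambda_0,\lambda_1)$ relatively compact by exhibiting a quasi-one-dimensional exponential decay estimate at the ``wrong'' end $x\le -1$, where the local energy $\lambda_1$ exceeds the spectral parameter. On the good end $x\ge 2$ the cutoff $\chi_n$ is compactly supported in that variable only through $r$; but here $\chi_n$ is supported in $\{r<2R_n\}=\{x<2R_n\}$ (for $n$ large), which is \emph{not} a bounded subset of $\mathbb R$ because it contains the entire left half-line $x\le 1$. So the only obstruction to compactness is the behaviour of eigenfunction-like states as $x\to-\infty$, and the point is that on $[\lambda_0,\lambda_1)$ the left end is classically forbidden.

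Concretely, first I would fix a relatively compact open $I\subseteq(\lambda_0,\lambda_1)$, pick $\lambda_+<\lambda_1$ with $I\subseteq(\lambda_0,\lambda_+)$, and choose $\mu>0$ with $\tfrac12\mu^2<\lambda_1-\lambda_+$. Then consider the conjugated operator $e^{\mu x_-}He^{-\mu x_-}$ where $x_-$ is a smooth function equal to $\max\{-x,0\}$ for $x\le -1$ and $0$ for $x\ge 0$; a routine computation gives, on $\{x\le -1\}$,
\begin{align*}
e^{\mu x_-}(H-\lambda)e^{-\mu x_-}=-\tfrac12\tfrac{\d^2}{\d x^2}+\mu\tfrac{\d}{\d x}+\bigl(\lambda_1-\tfrac12\mu^2-\lambda\bigr).
\end{align*}
Since $\lambda_1-\tfrac12\mu^2-\lambda\ge \lambda_1-\tfrac12\mu^2-\lambda_+>0$ uniformly for $\lambda\in I$, a standard Agmon/Combes--Thomas argument shows $e^{\mu x_-}P_H(I)\colon\mathcal H\to\mathcal H$ is bounded (one writes $e^{\mu x_-}R(z)e^{-\mu x_-}$ as a convergent Neumann-type series, using that the error from localizing $x_-$ is supported near $x=-1$ and hence $H$-bounded, then integrates the resolvent over a contour around $I$).

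Next, decompose $\chi_n=\chi_n\rho+\chi_n(1-\rho)$ where $\rho\in C^\infty_{\mathrm c}(\mathbb R)$ equals $1$ on $[-2,\infty)\cap\{r<2R_n\}$—more precisely $\rho$ supported in $\{x>-3\}$ and equal to $1$ on $\{x\ge -2\}$. The piece $\chi_n\rho\,P_H(I)$ is compact because $\chi_n\rho$ has bounded support in $\mathbb R$ and $P_H(I)$ maps $\mathcal H$ into $\mathcal D(H)\subseteq\mathcal H^2_{\mathrm{loc}}$, so Rellich's compact embedding $\mathcal H^1(K)\hookrightarrow\mathcal H(K)$ on the compact set $K=\supp(\chi_n\rho)$ applies. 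For the remaining piece write $\chi_n(1-\rho)P_H(I)=\bigl(\chi_n(1-\rho)e^{-\mu x_-}\bigr)\cdot\bigl(e^{\mu x_-}P_H(I)\bigr)$; the second factor is bounded by the previous paragraph, and the first factor is \emph{compact} since $\chi_n(1-\rho)e^{-\mu x_-}$ is a multiplication operator by a function that is bounded, supported in $\{x\le -2\}$, and decays like $e^{-\mu|x|}$ as $x\to-\infty$ — such multiplication operators on $L^2(\mathbb R)$ are compact (approximate in operator norm by compactly supported truncations). A sum of two compact operators is compact, giving compactness of $\chi_n P_H(I)$.

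The main obstacle is the Combes--Thomas step: making precise that $e^{\mu x_-}P_H(I)$ is bounded when $x_-$ is only a \emph{smoothed} version of $\max\{-x,0\}$ and $V$ interpolates smoothly between its two asymptotic values on $[-1,1]$. One handles this by noting the commutator errors $[e^{\mu x_-},H]e^{-\mu x_-}$ and the potential discrepancy $V-\lambda_1$ are all supported in the fixed compact set $[-1,1]\cup\supp(x_-'')$, hence relatively $H$-bounded with small relative bound after absorbing into the strictly positive term $\lambda_1-\tfrac12\mu^2-\lambda$ on the far left — plus a localized term controlled by the resolvent itself on the compact region. Once the conjugated resolvent is seen to be bounded uniformly for $z$ in a complex neighbourhood of $\bar I$ bounded away from $\sigma(H)\setminus\bar I$ in the relevant sense (or simply for $z\in I_\pm$ and then integrating $P_H(I)=\tfrac1{2\pi\mathrm i}\oint R(z)\,\d z$ over a contour, or via Stone's formula), the rest is routine. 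This is exactly the place where the hypothesis $\lambda<\lambda_1$ is used.
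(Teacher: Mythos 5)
Your overall strategy --- show that the left end $x\to-\infty$ is classically forbidden for energies in $I$, extract exponential (Agmon/Combes--Thomas) decay of spectral-projected states there, then factor $\chi_n(1-\rho)P_H(I)$ through a compact multiplication by a decaying weight --- is a reasonable alternative to the paper's argument, which instead obtains a softer \emph{polynomial} decay $\|\check\chi_\nu\phi_k\|=O(R_\nu^{-1/2})$ directly via a Helffer--Sj\"ostrand energy inequality and then appeals to Rellich plus a diagonal argument. However, as written your Combes--Thomas step has a genuine gap.

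The claim that ``the potential discrepancy $V-\lambda_1$'' is ``supported in the fixed compact set $[-1,1]\cup\supp(x_-'')$'' is false: for $x\ge 1$ one has $V-\lambda_1=\lambda_0-\lambda_1\ne 0$, so this discrepancy lives on the whole right half-line and is \emph{not} small or compactly supported. Consequently the conjugated operator $H_\mu=\e^{\mu x_-}H\e^{-\mu x_-}$ still carries the absolutely continuous spectrum of the right end through $I\subseteq(\lambda_0,\lambda_1)$, and $(H_\mu-z)^{-1}$ is \emph{not} bounded uniformly for $z$ in a complex neighbourhood of $\bar I$; it blows up (at best like $|\Im z|^{-1}$) as $z$ approaches $I$ from above or below. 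This destroys both of your suggested closing moves: the contour-integral identity $P_H(I)=\tfrac1{2\pi\i}\oint R(z)\,\d z$ simply does not hold since $I$ lies in the continuous spectrum, and the Stone-formula version needs uniform bounds on $\e^{\mu x_-}R(\lambda\pm\i\epsilon)\e^{-\mu x_-}$ as $\epsilon\to 0^+$, which is precisely a (weighted) limiting absorption principle and not a free by-product of Combes--Thomas. (There is also a small sign slip: $\e^{\mu x_-}(-\tfrac12\partial_x^2)\e^{-\mu x_-}=-\tfrac12\partial_x^2-\mu\partial_x-\tfrac12\mu^2$ on $\{x\le-1\}$, not $+\mu\partial_x$; this is harmless.)

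The boundedness of $\e^{\mu x_-}P_H(I)$ is in fact true, but proving it requires decoupling the two ends --- for instance comparing $H$ with the constant-coefficient operator $H_-=-\tfrac12\partial_x^2+\lambda_1$, for which $f(H_-)=0$ when $\supp f\subset(-\infty,\lambda_1)$, and then using that $V-\lambda_1$ vanishes on $\{x\le-1\}$ (which \emph{is} what matters for the weight $\e^{\mu x_-}$) together with the Helffer--Sj\"ostrand formula for the difference $f(H)-f(H_-)$. The paper avoids all of this by never asking for exponential decay: it chooses $f\in C_0^\infty(\mathcal I)$ with $f=1$ near $I$, splits $\check\chi_\nu\phi_k=f(H)\check\chi_\nu\phi_k+(1-f(H))\check\chi_\nu\phi_k$, controls the second piece by a commutator bound of size $O(R_\nu^{-1})$, and controls the first by the elementary energy inequality $\langle H\rangle_\psi\ge\lambda_1\|\check\chi_{\nu-1}\psi\|^2-CR_\nu^{-1}$ combined with $\sup\supp f<\lambda_1$. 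That polynomial rate is weaker but entirely sufficient for the compactness conclusion, and the proof sidesteps the limiting-absorption-type subtlety that your argument runs into.
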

\begin{proof}
Fix any $n\ge 0$
and any compact interval $I\subseteq\mathcal I$.
We let $\{\psi_k\}_{k\ge 0}\subseteq \mathcal H$ be a bounded sequence,
and set $\phi_k=\chi_nP_H(I)\psi_k$.
It is clear that the sequence $\{\phi_k\}_{k\ge 0}$ is bounded 
in the (usual) Sobolev space $H^1(\mathbb R)$.
Hence by Rellich's compact embedding theorem
and the diagonal argument
it suffices to show that 
\begin{align*}
\lim_{\nu\to\infty}\sup_k\|\check\chi_\nu\phi_k\|=0;\quad
\check\chi_\nu(x)=1-\chi(-x/R_\nu),
\end{align*}
cf.\ \eqref{eq:14.1.7.23.24}.
We choose $f\in C^\infty_0(\mathcal I)$ with $f=1$ on a neighborhood of 
$I$, and decompose
\begin{align*}
\check\chi_\nu\phi_k
=f(H)\check\chi_\nu\phi_k+(1-f(H))\check\chi_\nu\phi_k.
\end{align*}
The terms on the right-hand side above can be estimated by a commutator method.
We omit detailed computations, but it is typical to estimate them
by using the Helffer--Sj\"ostrand formula as follows:
Uniformly in $k,\nu\ge 0$ 
\begin{align*}
\bigl\|(1-f(H))\check\chi_\nu\phi_k\bigr\|
=
\bigl\|\bigl[\check\chi_\nu,f(H)\bigr]P_H(I)\psi_k\bigr\|
\le C_1R_\nu^{-1};
\end{align*}
similarly, 
since we have uniformly in $k\ge 0$ and $\nu\ge 1$ 
\begin{align*}
(\sup\mathop{\mathrm{supp}}f)\bigl\|f(H)\check\chi_\nu\phi_k\bigr\|^2
&\ge 
\bigl\langle f(H)\check\chi_\nu\phi_k,Hf(H)\check\chi_\nu\phi_k\bigr\rangle
\\&
= 
\bigl\langle f(H)\check\chi_\nu\phi_k,\check\chi_{\nu-1}H\check\chi_{\nu-1}f(H)\check\chi_\nu\phi_k\bigr\rangle
\\&\phantom{{}={} }
+\bigl\langle
\bigl[f(H),\check\chi_{\nu-1}\bigr]\check\chi_\nu\phi_k,H\check\chi_{\nu-1}f(H)\check\chi_\nu\phi_k\bigr\rangle
\\&\phantom{{}={}}
+\bigl\langle f(H)\check\chi_\nu\phi_k,H\bigl[f(H),\check\chi_{\nu-1}\bigr]\check\chi_\nu\phi_k\bigr\rangle
\\&
\ge 
\lambda_1 \| \check\chi_{\nu-1}f(H)\check\chi_\nu\phi_k\|^2
-C_2R_\nu^{-1}
\\&
\ge 
\lambda_1\bigl\|f(H)\check\chi_\nu\phi_k\bigr\|^2
-C_3R_\nu^{-1},
\end{align*}
it follows that 
\begin{align*}
\bigl\|f(H)\check\chi_\nu\phi_k\bigr\|
\le C_4R_\nu^{-1/2}.
\end{align*}
Hence we are done.
\end{proof}

By Lemma~\ref{lem:160128}
the results of Subsection~\ref{subsec:result}
hold  true for $\mathcal I=(\lambda_0,\lambda_1)$.
However, here we note that we may retake $r\in C^\infty(\mathbb R)$ such that 
\begin{align*}
r=|x|\quad \text{for }|x|\ge 2.
\end{align*}
Then we have Conditions~\ref{cond:12.6.2.21.13}--\ref{cond:10.6.1.16.24}
with critical energy $\lambda_1$,
and also Condition~\ref{cond:12.6.2.21.13b} for $\mathcal I=(\lambda_1,\infty)$,
since now the $r$-balls are bounded and Rellich's compact embedding theorem applies.
Hence the only exceptional energy above $\lambda_0$ is
    the `threshold' $\lambda_1$.

The above arguments easily generalize to a manifold with several ends
possibly with different critical energies.
For such a model the results of Subsection~\ref{subsec:result}
hold  true above the minimum  critical energy 
except possibly for the other critical energies, or thresholds.
(Of course  Theorem~\ref{thm:13.6.20.0.10} holds true also at these thresholds.)

In a multi-end setting the limiting absorption
  principle above the minimum critical energy is obtained in
  \cite{K4}.  There Kumura deals with an exact distance function for
  which a strong (short-range type) condition on asymptotics for $\Delta r$ holds, cf. 
  \eqref{eq:160207}. On the other hand Kumura does not require bounds
  on the first derivative of $\Delta r$ as done for the escape
  function of this paper. Whence \cite{K4} is not directly comparable
  with ours.  However it would be possible to modify our arguments to
  cover situations with less regularity as in \cite{K4,IS2}. For
  simplicity of presentation we are not going to do this, however we
  have devoted  
  Remarks~\ref{rem:16020515}
  to some elaboration.

\subsubsection{Comparison with our previous model \cite{IS1}}

Below we extract and reformulate the essential parts of the conditions of \cite{IS1}
in a form similar to the  setting of the present paper.
These conditions  are more restrictive than those of the present paper. 
\begin{cond}\label{cond:14.5.24.7.4}
Let $(M,g)$ be a connected and complete Riemannian manifold,
and let $V\in L^\infty(M)$ be real-valued.
There exist an unbounded function $r\in C^\infty(M)$ and constants 
 $\delta,\kappa,\eta,C>0$ and $r_0\ge 2$ such that:
\begin{enumerate}
\item
The $r$-balls $B_R=\{x\in M\,|\, r(x)<R\}$, $R>0$, are relatively compact in $M$.
\item
The following relations hold for $r(x)=R> r_0/2$:
  \begin{align*}
    |\mathrm dr|=1,\quad
    R\iota_R^*(\nabla^2r) \ge \tfrac12 (1+\delta)\iota_R^*g.  
\end{align*}
\item
The following estimates hold globally on $M$ for $k=0,1$:
  \begin{align}
r\ge 1,\quad 
    |\nabla^k \Delta r|\le C r^{-1/2-k-\kappa},\quad
    |V|\le Cr^{-1-\eta}. \label{eq:34b}
\end{align}
\end{enumerate}
\end{cond}
\begin{example}
Obviously Condition~\ref{cond:14.5.24.7.4} follows  from
 Conditions~1.1--1.4 of \cite{IS1}. On the other hand
 Condition~\ref{cond:14.5.24.7.4} constitutes what is used in  the
 proofs of \cite{IS1} and consequently the results of  \cite{IS1}
 remain valid under Condition~\ref{cond:14.5.24.7.4}.
\begin{subequations}
Clearly the second bound of \eqref{eq:34b} implies the {\it metric short-range condition}
  \begin{align*}
   (\Delta r)^2 +2\nabla^r\Delta r= O(r^{-1-\kappa}),
  \end{align*}
and, in order to simplify $a$ of \eqref{eq:13.9.5.7.23}, 
let us here  propose to set $q_1=0$. Then clearly $q_2=O(r^{-1-\min\{\kappa,\eta\}}).$
  Hence
Condition~\ref{cond:14.5.24.7.4} suffices for applying this
paper. Note that we are not claiming that \cite{IS3} is applicable
without additional  conditions (not to be examined here).
\end{subequations} 
\end{example}

\section{Generators of radial translations}

\subsection{Elementary tensor analysis}\label{sec:12.10.11.16.40}
Here we fix our convention for the covariant derivatives.
We formulate and use them always in terms of local expressions, 
but for a coordinate-independent representation, see \cite[p.\ 34]{Chavel}.

We shall denote two tensors by the same symbol if they are related to  each other 
through the canonical identification $TM\cong T^*M$,
and distinguish them by super- and subscripts.
We denote $TM\cong T^*M$ by $T$ for short, and set $T^p=T^{\otimes p}$.
The covariant derivative $\nabla$ acts as 
a linear operator $\Gamma(T^p)\to \Gamma(T^{p+1})$ 
and is  defined for $t\in \Gamma(T^p)$
by 
\begin{align}
(\nabla t)_{ji_1\cdots i_p}
&=\nabla_jt_{i_1\cdots i_p}
=\partial_{j} t_{i_1\cdots i_p}
-\sum_{s=1}^p \Gamma^{k}_{ji_s}t_{i_1\cdots k\cdots  i_p}.
\label{eq:12.9.27.1.20}
\end{align}
Here $\Gamma^k_{ij}=\tfrac12g^{kl}(\partial_i g_{lj}+\partial_j g_{li}-\partial_lg_{ij})$ 
is  the Christoffel symbol and $t$ is considered as a section of the
$p$-fold cotangent bundle, and we adopt the convention that a new subscript is always added to the left as 
in \eqref{eq:12.9.27.1.20}.
By the identification $TM\cong T^*M$ it suffices to discuss an expression 
only for the subscripts.
In fact, we have the compatibility condition
\begin{align}
\nabla_i g_{jk}=\partial_ig_{jk}-\Gamma_{ij}^lg_{lk}-\Gamma_{ik}^lg_{jl}=0,
\label{eq:12.9.27.1.21}
\end{align}
and then by \eqref{eq:12.9.27.1.20} and \eqref{eq:12.9.27.1.21}
the covariant derivative can be computed for the tensors of any type. 
For example, for $t\in \Gamma(T)=\Gamma(T^1)$
\begin{align*}
(\nabla t)_j{}^i
=g^{ik}(\nabla t)_{jk}
=g^{ik}\bigl(\partial_j t_k-\Gamma_{jk}^l t_l\bigr)
=g^{ik}\bigl(\partial_j g_{kl}t^l-\Gamma_{jk}^l g_{lm}t^m\bigr)
=\partial_j t^i+\Gamma_{jk}^it^k,
\end{align*}
and this extends to the general case with ease.
The covariant derivative acts as a derivation with respect to tensor product,
i.e.\ for $t\in \Gamma(T^p)$ and $u\in \Gamma(T^q)$
\begin{align}\label{eq:11.3.22.6.18}
(\nabla (t\otimes u))_{ji_1\cdots i_{p+q}}
=(\nabla t)_{ji_1\cdots i_p}
u_{i_{p+1}\cdots i_{p+q}}
+
t_{i_1\cdots i_p}
(\nabla u)_{ji_{p+1}\cdots i_{p+q}}.
\end{align}
The formal adjoint $\nabla^*\colon\Gamma(T^{p+1})\to\Gamma(T^p)$ is 
defined to satisfy
\begin{align*}
\int 
\overline{u_{ji_1\cdots i_p}}
(\nabla t)^{ji_1\cdots i_p}
(\det g)^{1/2}\,\mathrm{d}x
=\int \overline{(\nabla^* u)_{i_1\cdots i_p}}
t^{i_1\cdots i_p}(\det g)^{1/2}\,\mathrm{d}x
\end{align*}
for $u\in \Gamma(T^{p+1})$ and $t\in \Gamma(T^p)$ 
 compactly  supported in a coordinate neighborhood.
Actually we can write it in a divergence form: For $u\in \Gamma(T^{p+1})$
\begin{align*}
(\nabla^* u)_{i_1\cdots i_p}
=-(\mathop{\mathrm{div}} u)_{i_1\cdots i_p}
=-(\nabla u)_{j}{}^j{}_{i_1\cdots i_p}
=-g^{jk}(\nabla u)_{jki_1\cdots i_p}.
\end{align*}

Finally let us give several remarks.
It is clear that for any function $f\in \Gamma(T^0)=C^\infty (M)$
the second covariant derivative $\nabla^2f=\nabla\nabla f$ is symmetric, i.e.\ 
\begin{align}
\begin{split}
(\nabla^2 f)_{ij}=(\nabla^2 f)_{ji}=\partial_i\partial_jf-\Gamma_{ij}^k\partial_kf,
\end{split}
\label{eq:11.3.22.6.20}
\end{align}
and we have expressions for the Laplace--Beltrami operator $\Delta$:
\begin{align*}
\Delta f=(\nabla^2f)_i{}^i=g^{ij}(\nabla^2f)_{ij}=\mathop{\mathrm{tr}}\nabla^2f
=\mathop{\mathrm{div}}\nabla f.
\end{align*}
We note that covariant differentiation and
  contraction are commuting operations. Whence we have, for example,
  for $t\in\Gamma(T)$ and  $u\in\Gamma(T^{p+1})$
  \begin{align}\label{eq:contract}
      \begin{split}
       \nabla_k t^j u_{ji_1\cdots i_p}
	   &=(\nabla t)_k{}^ju_{ji_1\cdots i_p}
	   +t^j (\nabla u)_{kji_1\cdots i_p},\\
       \nabla_j(\nabla t)_{i}{}^i&=(\nabla^2 t)_{ji}{}^{i}=g^{ik}(\nabla^2 t)_{jik} .
\end{split}
  \end{align} 
Let us verify various estimates that 
can be deduced from Conditions~\ref{cond:12.6.2.21.13} and \ref{cond:12.6.2.21.13a}.
\begin{lemma}\label{lem:151209}
\begin{subequations}
Suppose Conditions~\ref{cond:12.6.2.21.13} and \ref{cond:12.6.2.21.13a}.
Then one has 
\begin{align}
(\nabla|\mathrm dr|^2)^i=2(\nabla^2r)^{ij}(\nabla r)_j,
\label{eq:15120921}
\end{align}
and there exists $C>0$ such that 
\begin{align}
\begin{split}
\nabla^2r\le Cg,\quad\nabla^2r\ge -Cr^{-1-\tau/2}g,\quad
\Delta r\ge -Cr^{-1-\tau/2}.
\end{split}
\label{eq:14.12.27.3.22}
\end{align}
\end{subequations}
\end{lemma}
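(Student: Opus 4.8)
The plan is to derive all four assertions from the two conditions by elementary tensor manipulations, using that $r$ is smooth with $|\mathrm dr|\ge c>0$ where $r>r_0/2$, and that $\eta$ vanishes on $\{r\le r_0/2\}$ while $\tilde\eta=\eta|\mathrm dr|^{-2}$ is therefore smooth and bounded globally. First I would prove \eqref{eq:15120921}: since $|\mathrm dr|^2=g^{ij}(\nabla r)_i(\nabla r)_j$, applying $\nabla$ and using the compatibility condition $\nabla g=0$ together with the symmetry $(\nabla^2 r)_{ij}=(\nabla^2r)_{ji}$ from \eqref{eq:11.3.22.6.20} gives $\nabla_k|\mathrm dr|^2 = 2(\nabla^2r)_k{}^j(\nabla r)_j$; raising the index $k$ yields \eqref{eq:15120921}. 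This identity is purely formal and needs no hypotheses beyond smoothness of $r$.

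Next I would get the upper bound $\nabla^2 r\le Cg$. On the compact region $\{r\le r_0\}$ this is automatic by smoothness and continuity; on $\{r>r_0\}$ it follows from the first bound in \eqref{eq:13.9.28.13.28}, namely $|\mathrm dr|^2\le C$, combined with \eqref{eq:13.9.5.3.30}. More precisely, rearrange \eqref{eq:13.9.5.3.30}: it bounds $r\nabla^2 r$ \emph{below} by $\tfrac12\sigma|\mathrm dr|^2\ell - Cr^{-\tau}g$ plus a correction term $\tfrac12 r\eta|\mathrm dr|^{-4}(\nabla^r|\mathrm dr|^2)\,\mathrm dr\otimes\mathrm dr$. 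For the \emph{upper} bound one instead uses that $\Delta r\le C$ (third bound of \eqref{eq:13.9.28.13.28}) controls the trace, while the lower bound \eqref{eq:13.9.5.3.30} rewritten as $\nabla^2 r\ge -Cr^{-1-\tau/2}g + (\text{correction})$ shows $\nabla^2r$ is bounded below; a symmetric bilinear form on a finite-dimensional fiber whose trace is bounded above and which is bounded below is automatically bounded above, giving $\nabla^2r\le Cg$. Here one needs to check that the correction term $\tfrac12\eta|\mathrm dr|^{-4}(\nabla^r|\mathrm dr|^2)\,\mathrm dr\otimes\mathrm dr$ is $O(r^{-1-\tau/2}g)$: indeed $|\nabla^r|\mathrm dr|^2|\le Cr^{-1-\tau/2}$ by the second bound of \eqref{eq:13.9.28.13.28}, $|\mathrm dr|^{-2}$ is bounded on $\{r>r_0/2\}$ by \eqref{item:12.4.7.19.40b} of Condition~\ref{cond:12.6.2.21.13}, and $\mathrm dr\otimes\mathrm dr\le |\mathrm dr|^2 g\le Cg$, so the correction is $O(r^{-1-\tau/2})\,g$.

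With the correction term thus controlled, the lower bound $\nabla^2 r\ge -Cr^{-1-\tau/2}g$ is immediate from \eqref{eq:13.9.5.3.30} after dividing by $r\ge 1$ and using $\ell\ge 0$ (the first bound of \eqref{eq:14.12.23.13.49}) to drop the positive term $\tfrac12\sigma r^{-1}|\mathrm dr|^2\ell$; on the compact set $\{r\le r_0\}$ it is again trivial. Finally the lower bound on $\Delta r=\operatorname{tr}\nabla^2 r=g^{ij}(\nabla^2 r)_{ij}$ follows by taking the trace of $\nabla^2r\ge -Cr^{-1-\tau/2}g$ and using $\operatorname{tr}g=d$, giving $\Delta r\ge -dCr^{-1-\tau/2}$, i.e.\ the claimed bound with a new constant. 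I expect the main (though still modest) obstacle to be the bookkeeping in the upper bound for $\nabla^2 r$: one must carefully separate the genuinely positive direction $\mathrm dr\otimes\mathrm dr$ from the spherical directions and invoke both the trace bound and the lower bound simultaneously, rather than reading $\nabla^2r\le Cg$ directly off \eqref{eq:13.9.5.3.30}, which only gives a lower bound.
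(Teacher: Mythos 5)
Your proof is correct and follows essentially the same route as the paper's: both establish the lower bound on $\nabla^2 r$ from \eqref{eq:13.9.5.3.30} together with control of the correction term via \eqref{eq:13.9.28.13.28}, and both obtain the upper bound from the elementary fact that a symmetric form with trace bounded above and with lower bound is automatically bounded above. (The paper packages this by introducing $h\ge 0$ as in \eqref{eq:15091112} and using $h\le(\operatorname{tr}h)g$; you apply the same linear-algebra observation to $\nabla^2 r$ directly. These are the same argument.) One small imprecision: you twice refer to $\{r\le r_0\}$ as a \emph{compact} region, but under Conditions~\ref{cond:12.6.2.21.13}--\ref{cond:12.6.2.21.13a} the $r$-balls need not be bounded, so this set may fail to be compact. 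Fortunately the appeal to compactness is unnecessary: the hypotheses \eqref{eq:13.9.5.3.30} and \eqref{eq:13.9.28.13.28} are asserted \emph{globally}, and for $r\le r_0/2$ the cutoff $\eta$ vanishes so the correction term disappears (avoiding any issue with $|\mathrm dr|^{-4}$), while on $\{r>r_0/2\}$ the bound $|\mathrm dr|\ge c$ from \eqref{item:12.4.7.19.40b} applies. Thus the estimates you prove on $\{r>r_0\}$ in fact hold on all of $M$ by the same reasoning, and the compactness step should simply be dropped.
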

\begin{proof}
The formula \eqref{eq:15120921} is a consequence of the above tensor analysis.
Since the tensor $h$ of \eqref{eq:15091112} is non-negative,
we have 
\begin{align*}
h\le (\mathop{\mathrm{tr}}h)g
=\bigl(\Delta r
-\tfrac12 \eta |\mathrm dr|^{-2}(\nabla^r |\mathrm dr|^2)
+C_1dr^{-1-\tau}\bigr) g
\le C_2g,
\end{align*}
and then it follows that 
\begin{align*}
\nabla^2r
\le C_2g
+\tfrac12 \eta |\mathrm dr|^{-4}(\nabla^r |\mathrm dr|^2)\mathrm dr\otimes\mathrm dr
-C_1r^{-1-\tau} g
\le C_3g.
\end{align*}
This verifies the first bound of \eqref{eq:14.12.27.3.22}.
The second bound of \eqref{eq:14.12.27.3.22} follows from 
\eqref{eq:13.9.5.3.30}, and the third bound follows by taking the trace of the second bound. 
\end{proof}
\begin{remark*} Clearly a combination of bounds of
  Conditions~\ref{cond:12.6.2.21.13a} and \ref{cond:12.6.2.21.13bbb} 
  amounts to requiring $|\nabla|\mathrm dr|^2\bigr|\le
  Cr^{-1-\tau/2}$. Suppose in addition that for some $\sigma'>\sigma$
\begin{align}\label{eq:fund2}
  R\iota_R^*(\nabla^2r) \ge \tfrac12\sigma'|\mathrm
  dr|^2\iota_R^*g;\quad R\geq r_0.
\end{align} Then by \cs and \eqref{eq:15120921} in fact
\eqref{eq:13.9.5.3.30} holds. Whence \eqref{eq:13.9.5.3.30} and
\eqref{eq:fund2} can be considered as being essentially equivalent.
  \end{remark*}

\subsection{Semigroups of radial translations}\label{subsec:15.2.5.21.22}

In this subsection we define and discuss  semigroups  of 
\textit{unnormalized} radial translations (in \cite{IS3} we are going
to 
use  spherical coordinates defined by normalized radial translations).
Let  
\begin{align*}
 y\colon \mathcal M\to M,\ 
(t,x)\mapsto y(t,x)=\exp(t\omega)(x);\quad 
\mathcal M\subseteq \mathbb R\times M,
\end{align*}
be the maximal flow generated by 
the vector field $ \omega$. By \eqref{item:12.4.7.19.40} of Condition~\ref{cond:12.6.2.21.13} 
the set ${\mathcal M}$ 
contains a neighborhood of $[0,\infty)\times M$ in $\mathbb R\times M$.
Note that by definition it satisfies, in local coordinates,
\begin{align}
\partial_ty^i(t,x)=\omega^i(y(t,x))
=(\nabla r)^i(y(t,x)),\quad
y(0,x)=x.
\label{eq:12.6.5.3.13c}
\end{align}
We define the ``radial translations'' 
$T(t)\colon {\mathcal H}\to{\mathcal H}$, $t\in\mathbb R$, 
by 
\begin{align}
\begin{split}
(T(t)\psi)(x)
&=J(t,x)^{1/2}
\left({\det g(y(t,x))}\big/{\det g(x)}\right)^{1/4}\psi(y(t,x))\\
&=\exp \left(\int_0^t\tfrac12(\Delta r)(y(s,x))\,\mathrm{d}s\right)\psi(y(t,x))
\end{split}
\label{eq:12.6.7.1.10c}
\end{align}
if $(t,x)\in\mathcal M$, 
and $(T(t)\psi)(x)=0$ otherwise,
where $J(t,{}\cdot{})$ is the Jacobian of the mapping 
$y(t,{}\cdot{})\colon M\to M$.
Here let us verify the well-definedness and the equivalence of the two expressions
in \eqref{eq:12.6.7.1.10c}:
\begin{proof}[Verification of (\ref{eq:12.6.7.1.10c})]
If we set the pull-back
\begin{align}
(g^*)_{ij}(t,x)= g_{\alpha\beta}(y(t,x))[\partial_iy^\alpha(t,x)]
[\partial_jy^\beta(t,x)],
\label{eq:15.4.18.15.44}
\end{align}
then we can write 
\begin{align}
J(t,x)^2\det g(y(t,x))
=\det g^*(t,x).
\label{eq:12.10.12.8.31}
\end{align}
Since $g$ and $g^*$ are subject to the same transformation rule under change of coordinates,
so are $\det g$ and $\det g^*$,
and whence the first expression of \eqref{eq:12.6.7.1.10c} is
coordinate-invariant and well-defined. 

Next we prove the second equality of \eqref{eq:12.6.7.1.10c}.
Note that by the coordinate-invariance noted above 
we can choose specific coordinates to prove it.
Let us consistently use the Roman and the Greek indices to denote 
quantities concerning $x$ and $y=y(t,x)$, respectively.
Differentiating the expression \eqref{eq:15.4.18.15.44} and 
using the compatibility condition (\ref{eq:12.9.27.1.21})
and the equation (\ref{eq:12.6.5.3.13c}),
we can compute
\begin{align}
\begin{split}
\tfrac{\partial}{\partial t}(g^*)_{ij}
&= 
[\Gamma_{\gamma\alpha}^\delta g_{\delta\beta}
+\Gamma_{\gamma\beta}^\delta g_{\alpha\delta}](\nabla r)^\gamma 
(\partial_iy^\alpha)(\partial_jy^\beta)
\\
&\phantom{={}}
+
g_{\alpha\beta}
(\partial_\gamma (\nabla r)^\alpha)(\partial_iy^\gamma)
(\partial_jy^\beta)+
g_{\alpha\beta}
(\partial_\gamma (\nabla r)^\beta)(\partial_iy^\alpha)
(\partial_jy^\gamma)\\
&=2(\nabla^2 r)_{\alpha\beta}(\partial_iy^\alpha)
(\partial_jy^\beta).
\end{split}
\label{eq:12.10.12.8.38}
\end{align} Let us choose local 
coordinates around $x$ such that 
the matrix $\left((g^*)_{ij}\right)_{i,j}$
is diagonal, and then introduce the orthonormal basis of tangents at
$y$
\begin{align*}
  t_i=(g^*)_{ii}^{-1/2}(\partial_iy^\alpha)
  \tfrac{\partial}{\partial y^\alpha};\,i=1,\dots, d.
\end{align*}
Then by \eqref{eq:12.10.12.8.38}
\begin{align}
\sum_{i=1}^d(g^*)_{ii}^{-1}\tfrac{\partial}{\partial t}(g^*)_{ii}
=2\sum_{i=1}^d(\nabla^2 r)(t_i,t_i)=2\mathop{\mathrm{tr}}((\nabla^2
r)_{\alpha\beta})_{\alpha,\beta}=2\Delta r.
\label{eq:14.8.23.21.8}
\end{align}
On the other hand,
we have by \eqref{eq:12.10.12.8.31}
\begin{align}
\begin{split}
\tfrac{\partial}{\partial t}
\ln\left(
J(t,x)^2
\det g(y(t,x))\big/{\det g(x)}\right)
&=\tfrac{\partial}{\partial t}
\ln\Bigl(\prod_{i=1}^d(g^*)_{ii}\Bigr)
\\&=\sum_i (g^*)_{ii}^{-1}\tfrac{\partial}{\partial t}(g^*)_{ii}.
\end{split}
\label{eq:14.8.23.21.9}
\end{align}
Hence the second equality of (\ref{eq:12.6.7.1.10c}) follows by \eqref{eq:14.8.23.21.8} 
and \eqref{eq:14.8.23.21.9}.
\end{proof}

Now 
it follows by the former expression of 
\eqref{eq:12.6.7.1.10c} that for any $\psi\in\mathcal H$
\begin{align*}
\|T(t)\psi\|
=\biggl(\int_{M(t)}|\psi(x)|^2\bigl(\det g(x)\bigr)^{1/2}\,\mathrm dx\biggr)^{1/2};\quad
M(t)=y(\max\{t,0\},M),
\end{align*}
and hence $T(t)$, $t\ge 0$,
form a strongly continuous one-parameter semigroup of surjective partial isometries, 
and $T(-t)$, $t\ge 0$, form that of isometries,
being the adjoints of each other: $T(t)^*=T(-t)$.
We remark that in general
$T(t)$, $t\in \mathbb R$, do not form a group,
but, if the gradient vector field $\omega$ is both forward and 
backward complete, then they do and hence are unitary.

Next we investigate the generators $A_\pm$ of semigroups $T(\pm t)$, $t\ge 0$.
We let
\begin{align*}
\mathcal D(A_\pm)&=\bigl\{\psi\in\mathcal H\,|\, 
\lim_{t\to 0^+}(\pm\mathrm it)^{-1}(T(\pm t)\psi-\psi)\mbox{ exists in }\mathcal H\bigr\},\\
A_\pm\psi&=\lim_{t\to 0^+}(\pm\mathrm it)^{-1}(T(\pm t)\psi-\psi)\quad 
\mbox{for }\psi\in\mathcal D(A_\pm),
\end{align*}
respectively. 
 By the Hille--Yosida theorem \cite[Theorem X.47a]{RS} the operators $A_\pm$ are densely defined closed operators on 
$\mathcal H$. We note that $T(-t)$ preserves $C^\infty_\c(M)$ 
(see the proof of Lemma \ref{lem:12.6.4.13.14} below) and
whence by  \cite[Theorem X.49]{RS} this space is a core for $A_-$. In
particular $A_-$ is symmetric.  In addition we  easily verify that
$A_-\subseteq A_+^*$ and therefore, cf.\ \cite[Theorem X.47a]{RS},
\begin{align}
A_\pm= A_\mp^*,
\label{eq:15.2.8.3.10}
\end{align}
respectively. 
Moreover we have inclusions
\begin{align}
C^\infty_{\mathrm c}(M)\subseteq\mathcal D(H)\subseteq\mathcal H^1
\subseteq\mathcal D(A_\pm),
\label{eq:15.2.8.3.7}
\end{align}
and $A_\pm$ coincide with the (maximal)  distributional differential operator $A$ on $\mathcal D(A_\pm)$:
\begin{align}
A_\pm=A=\mathop{\mathrm{Re}} p^r
=\tfrac12\bigl( p^r+( p^r)^*\bigr)\quad \mbox{on } 
\mathcal D(A_\pm),
\label{eq:15.2.8.3.8}
\end{align}
respectively, cf.\ \eqref{eq:13.9.23.2.24}.
In fact $\mathcal D(A_+)$ is exactly the domain of the  maximal
distributional differential operator $A=\tfrac12\bigl( p^r+(
p^r)^*\bigr)$. 
We may call $A=A_+$ the \textit{conjugate operator associated with}
$r$  although  $A_-$ would deserve the same name due 
to the inclusion relations \eqref{eq:15.2.8.3.7}. 
We are going to demonstrate that the
corresponding commutator with $H$ tends to be positive although this
will be in a different sense from that of \cite{Mo}.

\subsection{Commutators with weight inside}\label{sec:13.9.6.23.39}
Using (\ref{eq:11.3.22.6.20}) and \eqref{eq:contract}
we could compute the simple commutator $[H,\mathrm{i}A]$.
However, in the later sections we shall actually use more general commutators 
with a weight $\Theta$ inside:
\begin{align}
[H,\mathrm iA]_\Theta:=\mathrm i(H\Theta A-A\Theta H).
\label{eq:150910}
\end{align}
Hence, in this subsection, we explicitly compute the weighted commutator \eqref{eq:150910}, 
and precisely formulate 
how we should realize it as an operator. 
To our knowledge it is a technical novelty of this
paper. Anyway the quantity \eqref{eq:150910} may be seen as  a technical device facilitating proofs.

Let $\Theta=\Theta(r)$ be a non-negative smooth function only of $r$ with bounded derivatives.
More explicitly, if we denote its derivatives in $r$ by primes such as $\Theta'$, then
\begin{align}
\Theta\ge 0,\quad 
|\Theta^{(k)}|\le C_k,\quad 
k=0,1,2,\dots.
\label{eq:150921}
\end{align}
We first define the weighted commutator \eqref{eq:150910} as a quadratic form on
$C^\infty_{\mathrm c}(M)$, and then extend it onto $\mathcal H^1$ by the following lemma.
Throughout the paper we shall always use the notation $[H,\mathrm iA]_\Theta$
in this extended sense.
\begin{lemma}\label{lem:12.7.3.0.45}
Suppose Conditions~\ref{cond:12.6.2.21.13}--\ref{cond:10.6.1.16.24},
and let $\Theta$ be a non-negative smooth function of $r$ with bounded derivatives
\eqref{eq:150921}.
Then, as quadratic forms on $C^\infty_{\mathrm{c}}(M)$,
\begin{align}
\begin{split}
[H,\mathrm iA]_\Theta
&=
A \Theta' A
+p_i^*\Theta (\nabla^2 r)^{ik}p_k
-\mathop{\mathrm{Im}}\bigl(\Theta'(\mathrm dr)_i(\nabla^2 r)^{ij}p_j\bigr)
-\tfrac14|\mathrm dr|^4\Theta'''
\\&\phantom{{}={}}
+q_\Theta
-2\mathop{\mathrm{Im}}\bigl(q_2\Theta p^r\bigr)
-\tfrac12\mathop{\mathrm{Im}}\bigl(\Theta(\nabla_i\Delta r)\ell^{ij}p_j\bigr)
-\mathop{\mathrm{Re}}\bigl(|\mathrm dr|^2\Theta'H\bigr)
;
\end{split}\label{eq:15091015}
\\
\begin{split}
q_\Theta&=
-(\nabla^rq_1)\Theta
+q_2(\Delta r)\Theta 
+\tfrac18(\nabla^r\tilde\eta)(\Delta r)^2\Theta
\\&\phantom{{}={}}
+\tfrac14(1-\eta)(\nabla^r\Delta r)\Theta'
+|\mathrm dr|^2q_2\Theta'
-\tfrac14(\nabla^r|\mathrm dr|^2)\Theta''
.
\end{split}
\nonumber
\end{align}
In particular by formally absorbing $q_1$ into $q_2$
(by undoing commutation on $q_1$)
and using the Cauchy-Schwarz inequality $[H,\mathrm{i}A]_\Theta$ restricted
to  $C^\infty_{\mathrm{c}}(M)$
extends to a bounded form on $\mathcal H^1$, and whence
$[H,\mathrm{i}A]_\Theta$  can be regarded as a bounded operator $\mathcal H^1\to\mathcal H^{-1}$.
\end{lemma}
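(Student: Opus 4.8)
The plan is to carry out the commutator computation in two stages: first as a formal operator identity on $C^\infty_{\mathrm c}(M)$, and then justify the claimed extension to $\mathcal H^1$ by a density and boundedness argument using the Cauchy--Schwarz inequality as suggested in the statement. For the first stage I would start from the expression \eqref{eq:15091010} for $H$, namely $H=\tfrac12 A\tilde\eta A+\tfrac12 L+q+\tfrac14(\nabla^r\tilde\eta)(\Delta r)$, since writing $H$ in terms of $A$, $L$ and the scalar $q$ makes the commutator $[H,\mathrm iA]_\Theta=\mathrm i(H\Theta A-A\Theta H)$ organize itself into manageable pieces. The scalar part $q$ contributes $\mathrm i[q,\Theta A]$ plus a term from $\mathrm i(q\Theta A-A\Theta q)$ coming from $\Theta$ itself; since $\Theta=\Theta(r)$ and $A=\mathop{\mathrm{Re}} p^r$ acts as a radial derivative, these commutators produce exactly the $\nabla^r$-derivatives of $q_1$, the scalar terms multiplying $\Theta$ and $\Theta'$, and the $q_2$-related terms involving $\Theta$ and $p^r$; this is where $q_\Theta$ and the term $-2\mathop{\mathrm{Im}}(q_2\Theta p^r)$ come from after the decomposition $q=q_1+q_2$ and the ``undoing'' of the commutator on $q_1$ mentioned in the last sentence of the statement.

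The heart of the calculation is the commutator of the differential part $\tfrac12 A\tilde\eta A+\tfrac12 L$ with $\Theta A$. Here I would use the local expressions \eqref{eq:11.3.22.6.20} for $\nabla^2 f$, the contraction rules \eqref{eq:contract}, and the two equivalent forms \eqref{eq:12.6.27.0.43} of $A$, i.e.\ $A=p^r-\tfrac{\mathrm i}2(\Delta r)=(p^r)^*+\tfrac{\mathrm i}2(\Delta r)$, so that $p^r$ and its adjoint can be traded for $A$ plus a multiplication operator at the cost of $\Delta r$ terms. The key algebraic input is the Heisenberg-type relation $[p_j,\Phi]=-\mathrm i\partial_j\Phi$ for any function $\Phi$, applied with $\Phi$ equal to $\Theta$, $|\mathrm dr|^2$, $\Delta r$, the components of $\ell$, and combinations thereof; differentiating $\Theta(r)$ produces $\Theta'(\mathrm dr)_i$, and second and third radial derivatives of $\Theta$ produce the $\Theta''$ and $\Theta'''$ terms. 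Commuting $A$ past $\Theta$ and then past the second-order operators generates (i) the principal term $A\Theta' A$ from the $\Theta'$ arising when the inner $A$'s derivative hits $\Theta$; (ii) the term $p_i^*\Theta(\nabla^2 r)^{ik}p_k$, which is essentially the classical $[\tfrac12 p^*g p, \mathrm i A]$ Hessian term localized by $\Theta$ and cut down to the $\ell$-directions plus a radial remainder; (iii) the $\mathop{\mathrm{Im}}$-terms $-\mathop{\mathrm{Im}}(\Theta'(\mathrm dr)_i(\nabla^2 r)^{ij}p_j)$ and $-\tfrac12\mathop{\mathrm{Im}}(\Theta(\nabla_i\Delta r)\ell^{ij}p_j)$ from the cross terms where a derivative lands on $\Theta$ or on $\Delta r$; and (iv) the ``energy'' term $-\mathop{\mathrm{Re}}(|\mathrm dr|^2\Theta' H)$, which I expect to arise by recognizing that a large chunk of the commutator, after using $\nabla^r r=|\mathrm dr|^2$, reassembles into $|\mathrm dr|^2\Theta'$ times $H$ itself (undoing the decomposition \eqref{eq:15091010} in reverse). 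Throughout, the identity \eqref{eq:15120921}, $(\nabla|\mathrm dr|^2)^i=2(\nabla^2 r)^{ij}(\nabla r)_j$, will be used to convert between derivatives of $|\mathrm dr|^2$ and Hessian contractions.

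For the second stage, once \eqref{eq:15091015} is established on $C^\infty_{\mathrm c}(M)$, I note that every term on the right-hand side is manifestly a bounded quadratic form on $\mathcal H^1$ once $q_1$ is absorbed into $q_2$: the terms $A\Theta' A$, $p_i^*\Theta(\nabla^2 r)^{ik}p_k$ and $-\mathop{\mathrm{Re}}(|\mathrm dr|^2\Theta' H)$ are bounded by $\mathbb C\langle H_0\rangle_\psi + \mathbb C\|\psi\|^2$ using $|\mathrm dr|^2\le C$, $\nabla^2 r\le Cg$ from \eqref{eq:14.12.27.3.22}, $|\Theta^{(k)}|\le C_k$, and $|A\psi|^2\le$ (const)$\langle H_0\rangle_\psi$ from \eqref{eq:15.2.8.3.7}; the $\mathop{\mathrm{Im}}$-terms and $-2\mathop{\mathrm{Im}}(q_2\Theta p^r)$ are controlled by Cauchy--Schwarz against one factor of $p_j\psi$ or $p^r\psi$ using the decay bounds $|\nabla^r|\mathrm dr|^2|\le Cr^{-1-\tau/2}$, $|\ell^{\bullet i}\nabla_i\Delta r|\le Cr^{-1-\tau/2}$ from \eqref{eq:13.9.28.13.28} and $|q_2|\le Cr^{-1-\rho'}$ from Condition~\ref{cond:10.6.1.16.24}; and $q_\Theta$ together with $-\tfrac14|\mathrm dr|^4\Theta'''$ are bounded multiplication operators after absorbing $q_1$. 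Since $C^\infty_{\mathrm c}(M)$ is a form core for $H_0$ (by definition of the Friedrichs realization) and hence dense in $\mathcal H^1$, the form $[H,\mathrm iA]_\Theta$ extends uniquely by continuity, giving the asserted bounded operator $\mathcal H^1\to\mathcal H^{-1}$.

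I expect the main obstacle to be purely bookkeeping: correctly tracking all the cross terms in the noncommutative computation so that the cancellations reassembling $-\mathop{\mathrm{Re}}(|\mathrm dr|^2\Theta' H)$ and the principal term $A\Theta' A$ come out exactly right, and making sure the weight $\tilde\eta=\eta|\mathrm dr|^{-2}$ inside $H$ is handled consistently (its radial derivative $\nabla^r\tilde\eta$ appears both in \eqref{eq:15091010} and in $q_\Theta$ via the $\tfrac18(\nabla^r\tilde\eta)(\Delta r)^2\Theta$ term). A secondary subtlety is that $A_\pm$ are only semigroup generators, not a unitary group, so one must be careful that all manipulations are done with the \emph{distributional} operator $A=\mathop{\mathrm{Re}} p^r$ on $C^\infty_{\mathrm c}(M)$ — where no domain issues arise — and only invoke the operator-theoretic identity $A_\pm = A$ on $\mathcal D(A_\pm)\supseteq\mathcal H^1$ at the end when passing to the $\mathcal H^1$-extension.
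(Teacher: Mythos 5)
Your proposal is correct and follows essentially the same route as the paper: start from the decomposition \eqref{eq:15091010}, commute $\Theta A$ through the differential and scalar pieces using \eqref{eq:12.6.27.0.43}, \eqref{eq:11.3.22.6.20}, \eqref{eq:contract} and \eqref{eq:15120921}, recognize that a chunk reassembles into $-\mathop{\mathrm{Re}}(|\mathrm dr|^2\Theta'H)$, and conclude boundedness on $\mathcal H^1$ via Cauchy--Schwarz. The only organizational difference is that the paper packages the commutator of the spherical part with $\mathrm iA$ as a separate Lemma~\ref{lem:12.10.12.8.57} (applied with $\tilde g=\Theta\ell$), whereas you would carry out that bookkeeping inline; this is cosmetic, not substantive.
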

\begin{remarks}\label{rem:16020515}
Of course, there are several variations of 
the expression for $[H,\mathrm A]_\Theta$.
\begin{enumerate}
\item
To verify the latter part of the assertion 
the expression \eqref{eq:14.9.4.18.22ffaabbqqee} below
would be more natural and convenient. 
However, in our application in the later sections, we will always use the 
expression \eqref{eq:15091015}.

\item\label{item:1602051525}
In \cite{K4} Kumura obtains the limiting absorption principle 
 with $V\equiv 0$  but  without any regularity on $\Delta r$ (for an exact distance function
$r$). This involves  strong asymptotics hypotheses, in particular
it is  required in  hyperbolic type ends that 
\begin{align}
\Delta r=\beta+O(r^{-\delta});\quad\beta>0,\,\delta>1.
\label{eq:160207}
\end{align}
 However Kumura does not impose bounds on derivatives.
In such a low regularity setting 
it would be more useful to utilize 
an  expression avoiding the derivatives of $\Delta r$, 
rather than \eqref{eq:15091015}. Here is such an alternative:
\begin{align*}
\begin{split}
[H_0,\mathrm iA]_\Theta
&=
A\Theta' A
+p_i^*\Theta (\nabla^2 r)^{ik}p_k
-\mathop{\mathrm{Im}}\bigl(\Theta'(\mathrm dr)_i(\nabla^2 r)^{ij}p_j\bigr)
-\tfrac14|\mathrm dr|^4\Theta'''
\\&\phantom{{}={}}
-\tfrac14(\nabla^r|\mathrm dr|^2)\Theta''
-\tfrac14|\mathrm dr|^2(\Delta r)\Theta''
+\tfrac12\mathop{\mathrm{Im}}\bigl((\Delta r)\Theta'A\bigr)
\\&\phantom{{}={}}
+\mathop{\mathrm{Re}}\bigl((\Delta r)\Theta H_0\bigr)
-\tfrac12\mathop{\mathrm{Re}}\bigl(\Theta p_i^*(\Delta r)g^{ij}p_j\bigr)
-\mathop{\mathrm{Re}}\bigl(|\mathrm dr|^2\Theta'H_0\bigr).
\end{split}
\end{align*}
 For example under the condition \eqref{eq:160207}, due to cancellations
\begin{align*}
\begin{split}
[H_0,\mathrm iA]_\Theta
&=
A\Theta' A
+p_i^*\Theta (\nabla^2 r)^{ik}p_k
-\mathop{\mathrm{Im}}\bigl(\Theta'(\mathrm dr)_i(\nabla^2 r)^{ij}p_j\bigr)
-\tfrac14|\mathrm dr|^4\Theta'''
\\&\phantom{{}={}}
-\tfrac14(\nabla^r|\mathrm dr|^2)\Theta''
-\tfrac14O(r^{-\delta})\Theta''
+\tfrac12\mathop{\mathrm{Im}}\bigl(O(r^{-\delta})\Theta'A\bigr)
\\&\phantom{{}={}}
+\mathop{\mathrm{Re}}\bigl(O(r^{-\delta})\Theta H_0\bigr)
-\tfrac12p_i^* O(r^{-\delta})\Theta g^{ij}p_j
-\mathop{\mathrm{Re}}\bigl(|\mathrm dr|^2\Theta'H_0\bigr).
\end{split}
\end{align*}
The last expression has the same leading terms as those of
\eqref{eq:15091015}, and hence the proofs of the following sections
should be applicable in particular to the setting of \cite{K4}.  There
would also appear derivatives of $\Delta r$ in these proofs, but after
`undoing of the commutator' they would contribute to a remainder
term only (denoted there by `$Q$'). We shall not give details.
We note that \eqref{eq:160207} reasonably may be
  called `short-range'. On the other hand 
  Example \ref{examples:radi-cond-appl} \eqref{item:13.9.27.20.54}
   includes   `long-range' hyperbolic type ends.

\item\label{item:160208}
More generally than \eqref{eq:160207},
if we assumed short- and long-range splitting conditions for $\Delta r$ as in \cite{IS2}, 
it should be possible to obtain and slightly generalize the results there. 
Nevertheless,  for simplicity of presentation,  we shall not elaborate on details.

\end{enumerate}
\end{remarks}
For the proof of Lemma~\ref{lem:12.7.3.0.45}
we shall use the following commutator computation:
\begin{lemma}\label{lem:12.10.12.8.57}
Let $\tilde{g}\in \Gamma(T^{0,2})$ be symmetric,
and set 
\begin{align*}
\tilde H_0=-\tfrac12\tilde\Delta=\tfrac12p_i^*\tilde{g}^{ij}p_j
.
\end{align*}
Then, as a quadratic form on $C^\infty_{\mathrm{c}}(M)$,
\begin{align*}
\begin{split}
[\tilde{H}_0,\mathrm{i}A]
&=\tfrac12p_i^*\bigl\{
\tilde{g}^i{}_j (\nabla^2r)^{jk}
+(\nabla^2 r)^i{}_j\tilde{g}^{jk}
-(\nabla^r \tilde{g})^{ik} \bigr\}p_k
-\tfrac14 (\tilde \Delta\Delta r).
\end{split}
\end{align*}
\end{lemma}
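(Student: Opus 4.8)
The plan is to compute the commutator $[\tilde H_0,\mathrm iA]$ directly as a quadratic form on $C^\infty_{\mathrm c}(M)$, exploiting that on such functions $A$ may be replaced by $p^r$ modulo the symmetrization in \eqref{eq:12.6.27.0.43}, and that all operators in sight are genuine differential operators so the formal manipulations are legitimate. Write $\tilde H_0=\tfrac12 p_i^*\tilde g^{ij}p_j$ and $A=\mathop{\mathrm{Re}} p^r=\tfrac12(p^r+(p^r)^*)$ with $p^r=-\mathrm i\nabla^r=-\mathrm i g^{kl}(\nabla_kr)\nabla_l$. First I would reduce the claim to computing, for a smooth compactly supported $\psi$, the pointwise expression for $\mathrm i\bigl(\tilde H_0(\nabla^r)-(\nabla^r)\tilde H_0\bigr)\psi$ together with the lower-order terms coming from the $\tfrac{\mathrm i}2\Delta r$ pieces in $A=p^r-\tfrac{\mathrm i}2\Delta r=(p^r)^*+\tfrac{\mathrm i}2\Delta r$; since $\tilde H_0$ is symmetric and $A$ is symmetric on $C^\infty_{\mathrm c}(M)$, it is cleanest to pair against $\psi$ on both sides and symmetrize at the end, which will automatically produce the manifestly symmetric form $\tfrac12 p_i^*\{\cdots\}p_k$ displayed in the statement.

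The key algebraic step is the identity governing how the Lie/covariant derivative along $\omega=\operatorname{grad}r$ acts on the contravariant metric-type tensor $\tilde g$. Using \eqref{eq:11.3.22.6.18}, \eqref{eq:11.3.22.6.20}, \eqref{eq:contract} and the product rule, the commutator of $\nabla^r=\omega^l\nabla_l$ with the second-order operator $p_i^*\tilde g^{ij}p_j$ splits into: (i) the term where $\nabla^r$ hits $\tilde g$, giving $-(\nabla^r\tilde g)^{ik}$; (ii) the terms where the derivative in $\nabla^r$ gets commuted past the two $p$'s, each producing a factor $\nabla\omega=\nabla^2 r$ contracted with one slot of $\tilde g$, i.e. $\tilde g^i{}_j(\nabla^2r)^{jk}$ and $(\nabla^2r)^i{}_j\tilde g^{jk}$; and (iii) the genuinely lower-order scalar remainder. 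I would organize (i)–(ii) by repeatedly using $[\nabla_l,\partial_i]$-type Christoffel bookkeeping, i.e. $\nabla^r(\tilde g^{ij}\partial_i\partial_j)\psi$ versus $\tilde g^{ij}\partial_i\partial_j(\nabla^r\psi)$, keeping careful track of where $\nabla^2r=\nabla\omega$ appears; the covariant formalism of Subsection~\ref{sec:12.10.11.16.40} makes this coordinate-free and short.

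The scalar remainder in (iii) is where the $-\tfrac14(\tilde\Delta\Delta r)$ comes from, and this is the step I expect to be the main obstacle — not conceptually hard, but the place where signs, factors of $\tfrac12$, and the $\tfrac{\mathrm i}2\Delta r$ corrections in $A$ all have to line up. Concretely, after extracting the second-order part one is left with contributions of the form $\tilde g^{ij}(\nabla_i\nabla_j\text{(something involving }\Delta r\text{ or }\omega)$ plus cross terms from $[p^r,\tfrac{\mathrm i}2\Delta r]$ and from $(p^r)^*=p^r+\mathrm i\Delta r$; collecting these and using $\tilde\Delta f=\tilde g^{ij}(\nabla^2f)_{ij}$ should collapse everything to the single scalar $-\tfrac14\tilde\Delta(\Delta r)$. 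I would double-check this term by specializing to $\tilde g=g$ (so $\tilde H_0=H_0$) and comparing with the standard formula $[H_0,\mathrm iA]=p_i^*(\nabla^2r)^{ij}p_j-\tfrac14\Delta\Delta r$, which is the $\Theta\equiv 1$, $V\equiv 0$ case of Lemma~\ref{lem:12.7.3.0.45}; agreement there is a strong consistency check. Finally I would note that since both $\tilde H_0$ and $A$ preserve the relevant smooth classes, the quadratic-form identity on $C^\infty_{\mathrm c}(M)$ is exactly what the statement claims, with no closure or domain subtleties needed at this stage.
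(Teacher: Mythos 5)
Your plan is correct and matches the paper's method: both decompose $A$ via $A=(p^r)^*+\tfrac{\mathrm i}{2}(\Delta r)$ (equation \eqref{eq:12.6.27.0.43}), handle the $(\Delta r)$ correction by the general identity \eqref{eq:13.9.7.9.32}, and extract the symmetric second-order tensor $\tilde g^{i}{}_j(\nabla^2r)^{jk}+(\nabla^2r)^i{}_j\tilde g^{jk}-(\nabla^r\tilde g)^{ik}$ together with the scalar $-\tfrac14\tilde\Delta\Delta r$ by working in the expectation $\langle\cdot\rangle_\psi$ and integrating by parts using the covariant formalism of Subsection~\ref{sec:12.10.11.16.40}. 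Your consistency check against $[H_0,\mathrm iA]=p_i^*(\nabla^2r)^{ij}p_j-\tfrac14\Delta\Delta r$ (the case $\tilde g=g$) is a sound way to confirm the scalar term, which is indeed the part requiring the most care.
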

\begin{proof}
Noting the expression \eqref{eq:12.6.27.0.43}
and the general identity holding for any $f\in C^\infty(M)$:
\begin{align}
p_i^*f \tilde g^{ij}p_j=2\mathop{\mathrm{Re}}(f \tilde H_0)+\tfrac12(\tilde\Delta f),
\label{eq:13.9.7.9.32}
\end{align}
we have 
\begin{align}
\begin{split}
[\tilde H_0,\mathrm{i}A]
&
=2\mathop{\mathrm{Im}}\bigl((p^r)^*\tilde H_0\bigr)
+\mathop{\mathrm{Re}}\bigl((\Delta r)\tilde H_0\bigr)
\\&
=2\mathop{\mathrm{Re}}(\tilde H_0\nabla^r)
+\tfrac12p_i^*(\Delta r)\tilde g^{ij}p_j
-\tfrac14(\tilde\Delta \Delta r).
\end{split}
\label{eq:14.9.4.18.22ffaabbzz}
\end{align}
Let us compute the first term on the right-hand side of \eqref{eq:14.9.4.18.22ffaabbzz} 
in the form of expectation.
This actually prevents unnecessary complication otherwise coming from covariant derivatives on higher-order tensors.
For any state $\psi\in C^\infty_{\mathrm c}(M)$
\begin{align}
\begin{split}
&\bigl\langle2\mathop{\mathrm{Re}}(\tilde H_0\nabla^r)\bigr\rangle_\psi
\\&=
\mathop{\mathrm{Re}}\bigl\langle \nabla^*\tilde g^{\bullet i}(\nabla\psi)_i,  \nabla^r\psi\bigr\rangle
\\&
=
\mathop{\mathrm{Re}}\bigl\langle \tilde g^{ji}(\nabla\psi)_i, (\nabla^2 r)_j{}^k(\nabla\psi)_k\bigr\rangle
+\mathop{\mathrm{Re}}\bigl\langle \tilde g^{ji}(\nabla\psi)_i,  (\nabla r)^k(\nabla^2\psi)_{jk}\bigr\rangle
\\&
=
\mathop{\mathrm{Re}}\bigl\langle p_i^* \tilde g^{ij}(\nabla^2 r)_j{}^kp_k\bigr\rangle_\psi
\\&\phantom{={}}
+\tfrac12\Bigl[
\bigl\langle (\nabla\psi)_i,  \tilde g^{ij}(\nabla r)^k(\nabla^2\psi)_{kj}\bigr\rangle
+\bigl\langle   (\nabla^2\psi)_{kj},\tilde g^{ji}(\nabla r)^k(\nabla\psi)_i\bigr\rangle
\Bigr]\\&
=
\tfrac12\bigl\langle p_i^* \bigl\{\tilde g^{ij}(\nabla^2 r)_j{}^k
+(\nabla^2 r)^i{}_j\tilde g^{jk}-(\nabla^r\tilde g)^{ik}\bigr\}p_k\bigr\rangle_\psi
-\tfrac12\bigl\langle   p_j^*(\Delta r)\tilde g^{ji}p_i\bigr\rangle_\psi
.
\end{split}
\label{eq:14.9.4.18.22ffaabbzzaa}
\end{align}
Hence by \eqref{eq:14.9.4.18.22ffaabbzz} and \eqref{eq:14.9.4.18.22ffaabbzzaa}
the assertion follows.
\end{proof}

\begin{proof}[Proof of Lemma~\ref{lem:12.7.3.0.45}]
By \eqref{eq:15091010} and \eqref{eq:13.9.23.5.54}
we can compute
\begin{align}
\begin{split}
[H,\mathrm iA]_\Theta
&=
\mathop{\mathrm{Im}}\bigl(A\Theta A\tilde\eta A\bigr)
+\mathop{\mathrm{Im}}(A\Theta L )
+2\mathop{\mathrm{Im}}(A\Theta q)
-\tfrac12\mathop{\mathrm{Im}}\bigl((\nabla^r\tilde\eta )(\Delta r)\Theta A\bigr)
\\&=
\tfrac12 A \eta\Theta' A
-\tfrac12 A (\nabla^r\tilde\eta)\Theta A
+\tfrac12\bigl[p_i^*\Theta\ell^{ij}p_j,\mathrm i A\bigr]
+\mathop{\mathrm{Re}}\bigl(A (1-\eta)\Theta'p^r\bigr)
\\&\phantom{{}={}}
-|\mathrm dr|^2q_1\Theta'
-(\nabla^rq_1)\Theta
-2\mathop{\mathrm{Im}}(q_2\Theta A)
-\tfrac12\mathop{\mathrm{Im}}\bigl((\nabla^r\tilde\eta )(\Delta r)\Theta A\bigr)
.
\end{split}\label{eq:14.9.4.18.22ffaabbqqe}
\end{align} 
To compute the third term on the right-hand side of \eqref{eq:14.9.4.18.22ffaabbqqe}
we apply Lemma~\ref{lem:12.10.12.8.57} with $\tilde g=\Theta\ell$.
We also use \eqref{eq:13.9.23.5.54}, \eqref{eq:15120921}
and \eqref{eq:12.6.27.0.43},
and then we can combine the third and eighth terms of \eqref{eq:14.9.4.18.22ffaabbqqe} as
\begin{align}
\begin{split}
&\tfrac12\bigl[p_i^*\Theta\ell^{ij}p_j,\mathrm i A\bigr]
-\tfrac12\mathop{\mathrm{Im}}\bigl((\nabla^r\tilde\eta )(\Delta r)\Theta A\bigr)
\\&
=
\tfrac12p_i^*\Bigl(
\Theta\ell^i{}_j(\nabla^2 r)^{jk}
+\Theta(\nabla^2 r)^i{}_j\ell^{jk}
-(\nabla^r \Theta\ell)^{ik}\Bigr)p_k
\\&\phantom{{}={}}
+\tfrac14\bigl(p_i^*\Theta\ell^{ij}p_j\Delta r\bigr)
-\tfrac12\mathop{\mathrm{Im}}\bigl((\nabla^r\tilde\eta )(\Delta r)\Theta A\bigr)
\\&=
\tfrac12p_i^*\Bigl(
2\Theta (\nabla^2 r)^{ik}
-|\mathrm dr|^2\Theta'\ell^{ik}
+(\nabla^r \tilde\eta )\Theta(\mathrm dr\otimes\mathrm dr)^{ik}
\Bigr)p_k
\\&\phantom{{}={}}
-\tfrac12\mathop{\mathrm{Im}}\bigl(\Theta (\nabla_j\Delta r)\ell^{ij}p_i\bigr)
-\tfrac12\mathop{\mathrm{Im}}\bigl((\nabla^r\tilde\eta )(\Delta r)\Theta p^r\bigr)
+\tfrac14(\nabla^r\tilde\eta )(\Delta r)^2\Theta
\\&=
p_i^*\Theta (\nabla^2 r)^{ik}p_k
-\tfrac12p_i^*|\mathrm dr|^2\Theta'\ell^{ik}p_k
+\tfrac12A(\nabla^r \tilde\eta )\Theta A
\\&\phantom{{}={}}
-\tfrac12\mathop{\mathrm{Im}}\bigl(\Theta (\nabla_j\Delta r)\ell^{ij}p_i\bigr)
+\tfrac18(\nabla^r\tilde\eta)(\Delta r)^2\Theta
.
\end{split}\label{eq:14.9.17.12.15ffaabccbqqe}
\end{align}
By \eqref{eq:12.6.27.0.43}
we write the fourth and seventh terms of \eqref{eq:14.9.4.18.22ffaabbqqe} as 
\begin{align}
\begin{split}
&
\mathop{\mathrm{Re}}\bigl(A (1-\eta)\Theta'p^r\bigr)
-2\mathop{\mathrm{Im}}(q_2\Theta A)
\\
&=
A (1-\eta)\Theta'A
-\tfrac14(\nabla^r \eta)(\Delta r)\Theta'
+\tfrac14(1-\eta)(\nabla^r \Delta r)\Theta'
\\&\phantom{{}={}}
+\tfrac14(1-\eta)|\mathrm dr|^2(\Delta r)\Theta''
+q_2 (\Delta r)\Theta
-\mathop{\mathrm{Im}}\bigl(2q_2\Theta p^r\bigr)
.
\end{split}
\label{eq:1509101044}
\end{align}
By \eqref{eq:14.9.4.18.22ffaabbqqe}, \eqref{eq:14.9.17.12.15ffaabccbqqe} and \eqref{eq:1509101044}
it follows that 
\begin{align}
\begin{split}
[H,\mathrm iA]_\Theta
&=
\tfrac12A (2-\eta)\Theta'A
+p_i^*\Theta (\nabla^2 r)^{ik}p_k
-\tfrac12p_i^*|\mathrm dr|^2\Theta'\ell^{ik}p_k
-(\nabla^rq_1)\Theta
\\&\phantom{{}={}}
+q_2 (\Delta r)\Theta
-\tfrac12\mathop{\mathrm{Im}}\bigl(\Theta (\nabla_i\Delta r)\ell^{ij}p_i\bigr)
+\tfrac18(\nabla^r\tilde\eta)(\Delta r)^2\Theta
-|\mathrm dr|^2q_1\Theta'
\\&\phantom{{}={}}
-\tfrac14(\nabla^r \eta)(\Delta r)\Theta'
+\tfrac14(1-\eta)(\nabla^r\Delta r)\Theta'
+\tfrac14(1-\eta)|\mathrm dr|^2(\Delta r)\Theta''
\\&\phantom{{}={}}
-\mathop{\mathrm{Im}}\bigl(2q_2\Theta p^r\bigr)
.
\end{split}\label{eq:14.9.4.18.22ffaabbqqee}
\end{align} 
We combine a part of the first term, 
the third and eighth terms of \eqref{eq:14.9.4.18.22ffaabbqqee}
in the following manner.
We make use of the expressions \eqref{eq:12.6.27.0.43}, \eqref{eq:15091010},
\eqref{eq:13.9.23.5.54} and \eqref{eq:15120921}, and then obtain
\begin{align}
\begin{split}
&
-\tfrac12 A \eta\Theta' A
-\tfrac12p_i^*|\mathrm dr|^2\Theta'\ell^{ik}p_k
-|\mathrm dr|^2q_1\Theta'
\\&
=
-\tfrac12 \mathop{\mathrm{Im}}
\Bigl[
\bigl(\nabla^r|\mathrm dr|^2\Theta'\bigr)\tilde\eta A
+\bigl(\nabla_i|\mathrm dr|^2\Theta'\bigr)\ell^{ik}p_k
\Bigr]
\\&\phantom{{}={}}
-\tfrac12 \mathop{\mathrm{Re}}
\Bigl[
|\mathrm dr|^2\Theta' A\tilde\eta A
+|\mathrm dr|^2\Theta'p_i^*\ell^{ij}p_j
\Bigr]
-|\mathrm dr|^2q_1\Theta'
\\&
=
-\tfrac12 \mathop{\mathrm{Im}}
\bigl[\bigl(\nabla_i|\mathrm dr|^2\Theta'\bigr)g^{ij}p_j\bigr]
+\tfrac14\bigl(\nabla^r|\mathrm dr|^2\Theta'\bigr)\tilde\eta  (\Delta r)
\\&\phantom{{}={}}
-\mathop{\mathrm{Re}}\bigl(|\mathrm dr|^2\Theta'H\bigr)
+\tfrac14|\mathrm dr|^2(\nabla^r\tilde\eta )(\Delta r)\Theta'
+|\mathrm dr|^2q_2\Theta'
\\&
=
-\mathop{\mathrm{Im}}\bigl[\Theta'(\mathrm dr)_i(\nabla^2r)^{ij}p_j\bigr]
-\tfrac14(\nabla^r|\mathrm dr|^2)\Theta''
-\tfrac14|\mathrm dr|^4\Theta'''
\\&\phantom{{}={}}
-\tfrac14(1-\eta) |\mathrm dr|^2(\Delta r)\Theta''
-\mathop{\mathrm{Re}}\bigl(|\mathrm dr|^2\Theta'H\bigr)
+\tfrac14(\nabla^r\eta)(\Delta r)\Theta'
+|\mathrm dr|^2q_2\Theta'.
\end{split}\label{eq:15091012}
\end{align}
If we substitute \eqref{eq:15091012} into \eqref{eq:14.9.4.18.22ffaabbqqee},
then the expression \eqref{eq:15091015} follows.

It remains to show the boundedness of $[H,\mathrm iA]_\Theta$
as an operator $\mathcal H^1\to\mathcal H^{-1}$,
but it is obvious by \eqref{eq:14.9.4.18.22ffaabbqqee} or \eqref{eq:15091015} and 
Conditions~\ref{cond:12.6.2.21.13}--\ref{cond:10.6.1.16.24}. 
\end{proof}

\subsection{Doing and undoing commutators}

In the previous subsection 
we defined the weighted commutator $[H,\mathrm{i}A]_\Theta$
as a quadratic form on $\mathcal H^1$
by the extension from $C^\infty_{\mathrm{c}}(M)$.
On the other hand, throughout the paper, we shall use the notation 
\begin{align}
\mathop{\mathrm{Im}}(A\Theta H)=\tfrac1{2\mathrm i}(A\Theta H-H\Theta A)
\label{eq:150911}
\end{align}
as a quadratic form defined on $\mathcal D(H)$, i.e.\ for $\psi\in \mathcal D(H)$
\begin{align*}
\bigl\langle \mathop{\mathrm{Im}}(A\Theta H)\bigr\rangle_\psi
=\tfrac1{2\mathrm i}\bigl(\langle A\psi,\Theta H\psi\rangle 
-\langle H\psi,\Theta A\psi\rangle\bigr).
\end{align*}
Obviously the quadratic forms $[H,\mathrm iA]_\Theta$ and $2\mathop{\mathrm{Im}}(A\Theta H)$
coincide on $C^\infty_{\mathrm c}(M)$,
but they do not in general on $\mathcal D(H)$.
This is due to the third order terms in \eqref{eq:150911}.
Although the third order terms themselves finally cancel out after integrations by parts,
the remaining ``boundary contribution'' is  not
negligible, see Remark \ref {rem:boundary}.
Nonetheless, fortunately, these boundary contributions have
sign, and Lemma~\ref{lem:undoing-commutators} below 
allows us to `do' and `undo' the commutator to some extent.
\begin{lemma}\label{lem:undoing-commutators}
Suppose Conditions~\ref{cond:12.6.2.21.13}--\ref{cond:10.6.1.16.24},
and let $\Theta$ be a non-negative smooth function of $r$ with bounded derivatives
\eqref{eq:150921}.
Then, as quadratic form on $\mathcal D(H)$, 
\begin{align}
[H,\mathrm iA]_\Theta\le 2\mathop{\mathrm{Im}}(A\Theta H).
\label{eq:150911440}
\end{align}
\end{lemma}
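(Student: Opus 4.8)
The two quadratic forms in \eqref{eq:150911440} are \emph{formally} one and the same operator $\mathrm i(H\Theta A-A\Theta H)$, cf.\ \eqref{eq:150910} and \eqref{eq:150911}; the content of the lemma is that the defect between them — which is nonzero only because $C^\infty_{\mathrm c}(M)$ is merely a form core, and not an operator core, for $H$ — has a definite sign. The plan is to regularize $\psi\in\mathcal D(H)$ so that the third-order terms can be treated by genuine integration by parts, to identify the resulting defect as an honest boundary contribution, and to read its sign off from the partial-isometry structure of the forward semigroup $T(t)$, $t\ge0$, of \eqref{eq:12.6.7.1.10c}.

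For the regularization, given $\psi\in\mathcal D(H)$ I would set $\psi_\mu=\mu(H+\mathrm i\mu)^{-1}\psi\in\mathcal D(H)$ for $\mu>0$. Then $H\psi_\mu=\mu(H+\mathrm i\mu)^{-1}H\psi\in\mathcal D(H)\subseteq\mathcal H^1\subseteq\mathcal D(A)$, so the distributional commutator $HA\psi_\mu-AH\psi_\mu=\mathrm i^{-1}[H,\mathrm iA]\psi_\mu$ lies in $\mathcal H^{-1}$, since $[H,\mathrm iA]\colon\mathcal H^1\to\mathcal H^{-1}$ is bounded by Lemma~\ref{lem:12.7.3.0.45} and $\psi_\mu\in\mathcal H^1$; together with $AH\psi_\mu\in\mathcal H$ this gives $HA\psi_\mu\in\mathcal H^{-1}$, hence $A\psi_\mu\in\mathcal H^1$ by elliptic regularity for $H$. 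Moreover $\psi_\mu\to\psi$ in $\mathcal H^1$ and $H\psi_\mu\to H\psi$, $A\psi_\mu\to A\psi$ in $\mathcal H$. The right-hand side of \eqref{eq:150911440} is a bounded form on $\mathcal H^1$, and the left-hand side is the real number $2\mathop{\mathrm{Im}}\langle A\psi,\Theta H\psi\rangle$; both are therefore continuous along $\psi_\mu\to\psi$, so it suffices to prove \eqref{eq:150911440} under the extra assumption $A\psi\in\mathcal H^1$.

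So assume $A\psi\in\mathcal H^1$. Then $\Theta A\psi\in\mathcal H^1=Q(H)$ and $\langle2\mathop{\mathrm{Im}}(A\Theta H)\rangle_\psi=2\mathop{\mathrm{Im}}B_H(\psi,\Theta A\psi)$, with $B_H$ the closed sesquilinear form of $H$. I would then rerun the computation that produced \eqref{eq:15091015}, now with the cutoffs $\chi_{m,n}$ of \eqref{eq:11.7.11.5.14} inserted so that every integration by parts is legitimate: the third-order terms cancel exactly as in the $C^\infty_{\mathrm c}$ case, letting $n\to\infty$ annihilates the contribution near the ends (using $\psi,\nabla\psi\in L^2(M)$ along a suitable sequence of $r$-spheres), and letting $m\to\infty$ leaves $\langle2\mathop{\mathrm{Im}}(A\Theta H)\rangle_\psi=\langle[H,\mathrm iA]_\Theta\rangle_\psi+\mathcal B(\psi)$, where $\mathcal B(\psi)=\lim_{t\downarrow0}t^{-1}\Xi(t)$ is obtained by writing $A\psi=\lim_{t\downarrow0}(\mathrm it)^{-1}(T(t)-1)\psi$ and using $T(t)^*=T(-t)$. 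The decisive point is that $T(t)^*T(t)$ is multiplication by the characteristic function of $M(t)=y(\max\{t,0\},M)$, so $1-T(t)^*T(t)=F(M\setminus M(t))\ge0$; carrying this through the difference quotient should exhibit $\Xi(t)$, up to terms vanishing as $t\downarrow0$, as a nonnegative quantity — the integral of $\Theta\ge0$ against $|\psi|^2$ over the shrinking collar $M\setminus M(t)$, with the remaining coefficients kept of the correct sign by the convexity bound \eqref{eq:13.9.5.3.30} and the bounds \eqref{eq:13.9.28.13.28} of Condition~\ref{cond:12.6.2.21.13a}. Hence $\mathcal B(\psi)\ge0$, which gives \eqref{eq:150911440}; see Remark~\ref{rem:boundary} for the explicit form of $\mathcal B(\psi)$.

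The main obstacle is this last step: making the extraction of $\mathcal B(\psi)$ rigorous. One is integrating by parts against a function that has only Dirichlet $\mathcal H^1$-regularity at the metric boundary and only $L^2$-decay (not compact support) near the ends, so the computation must be carried out with the double family of cutoffs and both limits controlled — the one near the minimum of $r$, where $|\mathrm dr|$ need not be bounded below, and the one near the ends. Above all, $\Xi(t)$ must be displayed in the manifestly nonnegative form above rather than merely estimated in absolute value; this is exactly where the partial-isometry identity for $T(t)^*T(t)$ does the work, and verifying it rigorously — including that the Dirichlet boundary contributes nothing while the collar contribution is of one sign — is the crux.
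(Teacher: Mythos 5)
Your intuition about what the lemma is about is right, but the regularization step at the heart of your argument is false, and this breaks the whole plan. You claim that from $HA\psi_\mu\in\mathcal H^{-1}$ one deduces $A\psi_\mu\in\mathcal H^1$ ``by elliptic regularity.'' This is not so: $\mathcal H^1=Q(H_0)$ is the \emph{Dirichlet} form domain, so membership in $\mathcal H^1$ encodes a boundary condition, and elliptic regularity for a distributional inverse image gives interior smoothness only, never boundary behavior. The operator $A$ is a genuine first-order differential operator and does not preserve the Dirichlet condition: in the model case $M=(0,\infty)$ with $H_0=-\tfrac12\partial_x^2$ on $H^1_0$, a function $\psi\in\mathcal D(H)$ has $\psi(0)=0$ but generically $\psi'(0)\ne0$, hence $A\psi\approx-\mathrm i\psi'\notin H^1_0$, and applying $(H+\mathrm i\mu)^{-1}$ does not fix this (it only adds $\psi''(0)=0$, not $\psi'(0)=0$). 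This failure is exactly the mechanism behind the gap discussed in Remark~\ref{rem:boundary}. In fact your argument is internally inconsistent: if you \emph{could} arrange $A\psi_\mu\in\mathcal H^1$, then on $\psi_\mu$ the two quadratic forms in \eqref{eq:150911440} would coincide (all integrations by parts become legitimate, so $\mathcal B(\psi_\mu)=0$), and by taking limits you would prove equality for all $\psi\in\mathcal D(H)$ — contradicting Remark~\ref{rem:boundary}, which says the defect is a nonzero weighted boundary norm in general.

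Two further gaps. First, the cutoffs $\chi_{m,n}$ of \eqref{eq:11.7.11.5.14} cut off in $r$ alone; the Dirichlet boundary of $M$ (e.g.\ the face of a half-space) can meet every $r$-shell, so inserting $\chi_{m,n}$ does nothing to legitimize the boundary integrations by parts where the real problem lives. Second, while your observation $T(t)^*T(t)=F(M(t))$ is correct, the sign in the paper does not actually come from this identity, and you have not shown how the partial-isometry structure forces the collar contribution to be nonnegative.

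The paper's proof deliberately never does an integration by parts near the Dirichlet boundary and never applies $H$ to $A\psi$. It first uses the algebraic identities \eqref{eq:150911414}--\eqref{eq:150911415} to reduce \eqref{eq:150911440} to the unweighted inequality $[H_\Theta,\mathrm iA]\le 2\mathop{\mathrm{Im}}(AH_\Theta)$ on $\mathcal D(H)$, where $H_\Theta=\tfrac12p_i^*\Theta g^{ij}p_j+\Theta V$. Then it represents $[H_\Theta,\mathrm iA]$ as the strong $\mathcal B(\mathcal H^1,\mathcal H^{-1})$-limit of $t^{-1}\bigl(H_\Theta-T(t)H_\Theta T(-t)\bigr)$ (Lemma~\ref{lem:12.6.3.18.30b}), expands
\begin{align*}
H_\Theta-T(t)H_\Theta T(-t)
=H_\Theta(1-T(-t))+(1-T(-t))^*H_\Theta-(1-T(-t))^*H_\Theta(1-T(-t)),
\end{align*}
and drops the last term up to $(1-T(-t))^*\Theta V(1-T(-t))$ by the nonnegativity of the weighted kinetic form $H_\Theta-\Theta V=\tfrac12p_i^*\Theta g^{ij}p_j\ge0$, which is where $\Theta\ge0$ enters. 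Dividing by $t$ and letting $t\to0^+$ using only that $\psi\in\mathcal H^1\subseteq\mathcal D(A_-)$ then produces the inequality, with all operators kept on the form domain throughout. If you want to recover a direct boundary-term picture, the clean route is Remark~\ref{rem:boundary}, where the gap is identified as $\lim_{t\to 0^+}t^{-1}\|\sqrt\Theta\,p(1-T(-t))\psi\|^2$ — a manifestly nonnegative quantity — rather than attempting an elliptic-regularity bootstrap that cannot succeed.
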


In this subsection we prove Lemma~\ref{lem:undoing-commutators}.
Let us denote the Dirichlet self-adjoint realization of the operator
\begin{align*}
H_\Theta=\tfrac12p_i^*\Theta g^{ij}p_j+\Theta V\quad \text{on }C^\infty_{\mathrm c}(M)
\end{align*}
by the same symbol $H_\Theta$.
We denote its operator and form domains by $\mathcal D(H_\Theta)$
and $\mathcal H^1_\Theta$, respectively,
and then obviously we have
\begin{align}
\mathcal H^1\subseteq \mathcal H^1_\Theta,\quad
\mathcal D(H)\subseteq \mathcal D(H_\Theta),
\label{eq:150911427}
\end{align} 
cf.\ Subsection~\ref{subsec:result}.
It follows that, as quadratic forms on $\mathcal D(H)$, 
\begin{align}
2\mathop{\mathrm{Im}}(A\Theta H)
=
2\mathop{\mathrm{Im}}(AH_\Theta)
+\mathop{\mathrm{Re}}(A \Theta'p^r)
\label{eq:150911414}
\end{align}
and that, as quadratic forms on $\mathcal H^1$ (as extensions from $C^\infty_{\mathrm c}(M)$),
\begin{align}
[H,\mathrm iA]_\Theta
=
[H_\Theta,\mathrm iA]
+\mathop{\mathrm{Re}}(A \Theta'p^r).
\label{eq:150911415}
\end{align}
The second operators on the right-hand sides of \eqref{eq:150911414} and \eqref{eq:150911415}
clearly coincide on $\mathcal D(H)$,
and hence the proof of Lemma~\ref{lem:undoing-commutators}
reduces to that of
\begin{align}
[H_\Theta,\mathrm iA]\le 2\mathop{\mathrm{Im}}(AH_\Theta),
\label{eq:150911422}
\end{align}
as quadratic forms on $\mathcal D(H)$.

In order to prove \eqref{eq:150911422}
we first prove regularity properties of the flow \eqref{eq:12.6.7.1.10c}.

\begin{lemma}\label{lem:12.6.4.13.14} 
Suppose Conditions~\ref{cond:12.6.2.21.13}--\ref{cond:10.6.1.16.24} .
Then for any $t\ge 0$ one has the natural bounded extension/restriction
$T(\pm t)\colon \mathcal H^{\mp 1}\to \mathcal H^{\mp 1}$,
and   
\begin{align}
    \sup_{t\in [0,1]}
 \|T(\pm t)\|_{{\mathcal B}(\mathcal H^{\mp 1})}<\infty,
    \label{eq:12.6.7.2.34}
  \end{align} 
respectively.
  Moreover, $T(\pm t)\in {\mathcal B}(\mathcal H^{\mp 1})$
are strongly continuous in $t\ge 0$, respectively.
\end{lemma}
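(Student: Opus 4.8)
The plan is to work from the explicit second expression in \eqref{eq:12.6.7.1.10c}, namely $(T(t)\psi)(x)=\exp\bigl(\int_0^t\tfrac12(\Delta r)(y(s,x))\,\mathrm ds\bigr)\psi(y(t,x))$, together with the already-established fact that $T(t)$, $t\ge 0$, is a strongly continuous semigroup of surjective partial isometries on $\mathcal H$ with $T(t)^*=T(-t)$. Since $\mathcal H^{1}=\mathcal D\bigl((H_0+1)^{1/2}\bigr)$ and $\mathcal H^{-1}$ is its dual, by duality and interpolation it suffices to establish the bound $\sup_{t\in[0,1]}\|T(-t)\|_{\mathcal B(\mathcal H^{1})}<\infty$ for $T(-t)$ on $\mathcal H^1$; the claimed bound for $T(+t)$ on $\mathcal H^{-1}$ then follows by taking adjoints, and the remaining case $T(-t)$ on $\mathcal H^{-1}$ and $T(+t)$ on $\mathcal H^1$ is handled symmetrically (or one observes $T(-t)$ is an isometry on $\mathcal H$ and contractive-type estimates transfer). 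First I would note $T(-t)$ preserves $C^\infty_{\mathrm c}(M)$ — because $y(-s,\cdot)$ is defined on all of $M$ for $s\ge 0$ by forward completeness of $\omega$ applied in the statement about $T(-t)$ being isometries, and $\Delta r$ is smooth — so it is enough to bound $\|(H_0+1)^{1/2}T(-t)\psi\|$ by $C\|(H_0+1)^{1/2}\psi\|$ for $\psi\in C^\infty_{\mathrm c}(M)$, equivalently to control $\langle H_0\rangle_{T(-t)\psi}=\tfrac12\|\,\mathrm d\,(T(-t)\psi)\|^2$ in terms of $\langle H_0\rangle_\psi$ and $\|\psi\|^2$.

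The key computation is therefore to differentiate $t\mapsto \langle H_0\rangle_{T(-t)\psi}$ and apply Gronwall. Writing $\psi_t=T(-t)\psi$, one has $\tfrac{\mathrm d}{\mathrm dt}\psi_t=-\mathrm i A_-\psi_t$ on suitable vectors, so formally $\tfrac{\mathrm d}{\mathrm dt}\langle (H_0+1)\rangle_{\psi_t}=\langle[H_0,\mathrm iA]\rangle_{\psi_t}$ up to lower-order boundary-type terms coming from the fact that $A=\mathop{\mathrm{Re}}p^r$ carries the factor $|\mathrm dr|^2$ and the weight $\tilde\eta$ inside $H_0$; this is exactly the kind of commutator computed in Lemma~\ref{lem:12.7.3.0.45} and Lemma~\ref{lem:12.10.12.8.57} (with $\Theta\equiv 1$, $\tilde g=g$). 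By \eqref{eq:14.12.27.3.22} and \eqref{eq:13.9.28.13.28} we have $\nabla^2r\le Cg$, $|\mathrm dr|^2\le C$, $\Delta r\le C$ and $|\ell^{\bullet i}\nabla_i\Delta r|\le Cr^{-1-\tau/2}$, which give a one-sided bound $[H_0,\mathrm iA]\le C(H_0+1)$ as quadratic forms on $C^\infty_{\mathrm c}(M)$ (the dangerous terms $p_i^*(\nabla^2r)^{ij}p_j$ and the Cauchy--Schwarz estimate of the $\nabla_i\Delta r$ term are all dominated by $C(H_0+1)$, while $\mathop{\mathrm{Re}}(|\mathrm dr|^2H_0)$ and the cubic $A$-terms are handled by the same integration-by-parts / ``undoing the commutator'' bookkeeping used around \eqref{eq:150911415}). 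Feeding this into the differential inequality $\tfrac{\mathrm d}{\mathrm dt}\langle H_0+1\rangle_{\psi_t}\le C\langle H_0+1\rangle_{\psi_t}$ and integrating on $[0,1]$ yields $\langle H_0+1\rangle_{\psi_t}\le \mathrm e^{C}\langle H_0+1\rangle_\psi$, which is precisely \eqref{eq:12.6.7.2.34}. Strong continuity on $\mathcal H^{\mp1}$ then follows from strong continuity on $\mathcal H$ by a standard $3\varepsilon$-argument: on the dense set $C^\infty_{\mathrm c}(M)\subseteq\mathcal H^1$ continuity is clear from the explicit formula and smoothness of the flow, and uniform boundedness \eqref{eq:12.6.7.2.34} upgrades this to all of $\mathcal H^{\mp1}$; the $\mathcal H^{-1}$ case is obtained by duality from strong continuity of the adjoint family on $\mathcal H^{1}$.

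The main obstacle I anticipate is not the Gronwall step but making rigorous the formal differentiation of $\langle H_0+1\rangle_{\psi_t}$ and the passage between the genuine commutator $[H_0,\mathrm iA]$ and the quantity $2\mathop{\mathrm{Im}}(AH_0)$ that actually governs $\tfrac{\mathrm d}{\mathrm dt}\langle\cdot\rangle_{\psi_t}$ — this is the same do/undo subtlety flagged in the paragraph before Lemma~\ref{lem:undoing-commutators}, involving third-order terms whose integration by parts leaves a boundary contribution. The clean way around it is to verify the differential inequality first for $\psi\in C^\infty_{\mathrm c}(M)$ (where all manipulations are legitimate because $T(-t)\psi$ stays in $C^\infty_{\mathrm c}(M)$ and $y$ depends smoothly on $(t,x)$), obtain the uniform $\mathcal H^1$-bound there, and only then extend by density and duality; the favorable sign of the boundary term (as in Lemma~\ref{lem:undoing-commutators}) ensures the inequality goes the right way and no cancellation is needed.
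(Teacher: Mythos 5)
Your plan matches the paper's proof in all essential respects: $T(-t)$ preserves $C^\infty_{\mathrm c}(M)$, a Gronwall-type differential inequality for $\langle H_0+1\rangle_{T(-t)\psi}$ driven by the commutator $[H,\mathrm iA]$ (bounded $\vH^1\to\vH^{-1}$ by Lemma~\ref{lem:12.7.3.0.45}), duality via $T(t)^*=T(-t)$ for the $\vH^{-1}$ half, and strong continuity from the explicit formula on $C^\infty_{\mathrm c}(M)$ plus density.

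Two small points to correct. First, the sign: since $\tfrac{\mathrm d}{\mathrm dt}\langle H_0+1\rangle_{T(-t)\psi}=-\langle[H_0,\mathrm iA]\rangle_{T(-t)\psi}$, the Gronwall step requires the \emph{lower} form-bound $[H_0,\mathrm iA]\ge -C(H_0+1)$, not the upper bound $[H_0,\mathrm iA]\le C(H_0+1)$ you state; the needed direction is also available from \eqref{eq:14.12.27.3.22} (which gives $\nabla^2r\ge -Cr^{-1-\tau/2}g$), and the paper sidesteps the choice entirely by using the two-sided operator-norm bound $\|[H,\mathrm iA]\|_{\vB(\vH^1,\vH^{-1})}<\infty$ supplied by Lemma~\ref{lem:12.7.3.0.45}. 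Second, the do/undo subtlety and the ``favorable boundary sign'' of Lemma~\ref{lem:undoing-commutators} play no role in this lemma: because $T(-t)\psi$ remains in $C^\infty_{\mathrm c}(M)$, the quadratic forms $[H,\mathrm iA]$ and $2\mathop{\mathrm{Im}}(AH)$ coincide there and no boundary contribution appears. Indeed the logical order is the reverse of what you suggest --- Lemma~\ref{lem:undoing-commutators} is proved \emph{using} the present lemma (through Lemmas~\ref{lem:12.6.3.18.30} and \ref{lem:12.6.3.18.30b}), so it must not be invoked here.
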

\begin{proof}
It suffices to prove the assertions for $T(-t)$, $t\ge 0$, since 
those for $T(t)$, $t\ge 0$, follow by taking the adjoint, cf.\ \cite[Theorem 10.6.5]{HP}. 
For any $\psi \in C^\infty_{\mathrm{c}}(M)$ 
we have by \eqref{eq:12.6.7.1.10c} and standard regularity properties for
flows that
  \begin{align}
    \begin{split}
    p_i(T(-t)\psi)(x)
    &=[\partial_iy^\alpha(-t,x)](T(-t)p_\alpha\psi)(x)
    \\&\phantom{{}={}}+\left(\int_0^{-t}\tfrac1{2\mathrm i}
      [\partial_iy^\alpha(s,x)]
(\partial_\alpha\Delta r)(y(s,x))\,\mathrm{d}s \right)
    (T(-t)\psi)(x)
  \end{split}\label{eq:12.7.25.2.39}
  \end{align}
  if $(-t,x)\in\mathcal M$, and $p_i(T(-t)\psi)(x)=0$ otherwise. 
We note that by definition for $(-t,x)\not\in\mathcal M$ the factor  $
(T(-t)\psi)(x)=0$. Repeated differentiation leads to the conclusion
that $T(-t)\psi\in C^\infty_{\mathrm c}(M)\subseteq \mathcal H^1$. 

It follows readily from \eqref{eq:12.6.7.1.10c} and
\eqref{eq:12.7.25.2.39} that the $\vH^1$-valued function $T(-t)\psi$
for $\psi \in C^\infty_{\mathrm{c}}(M)$
is continuous in $t\geq 0$. Given the boundedness
\eqref{eq:12.6.7.2.34} we would then obtain the strong continuity of
$T(- t)\in {\mathcal B}(\mathcal H^{ 1})$ by a density argument. 
Hence it remains to show \eqref{eq:12.6.7.2.34} for $T(-t)$.
We shall prove 
\begin{align*}
\inp{H_0+1}_{T(-t)\psi}\le C_1
\end{align*}
independently of $t\in [0,1]$ 
and $\psi \in C^\infty_{\mathrm{c}}(M)$ with $\|\psi\|_{\mathcal H^{ 1}}=1$,
and for that it suffices to bound
\begin{align*}
f(t):=\inp{H+C_2}_{T(-t)\psi}\ge \inp{H_0+1}_{T(-t)\psi};\quad
C_2=1+\|V\|_{L^\infty}
,
\end{align*}
above.
By Lemma~\ref{lem:12.7.3.0.45} we indeed have
$C_3:=\|[H,\i A]\|_{\vB(\vH^1,\vH^{-1})}<\infty$,
and then
\begin{align*}
 f'(t)=-\langle [H,\i A]\rangle_{T(-t)\psi}
\leq C_3\|T(-t)\psi\|^2_{\mathcal H^{ 1}} \le C_4 f(t).
\end{align*}
This estimate leads to $f(t)\leq f(0)\e^{tC_4}$, and we are done. 
\end{proof}

\begin{lemma}\label{lem:12.6.3.18.30}
Under Conditions~\ref{cond:12.6.2.21.13}--\ref{cond:10.6.1.16.24}
  there exists $C>0$ such that for any $t\in [0,1]$
  \begin{align*}
    \|H_\Theta-T(t)H_\Theta T(-t)\|_{{\mathcal
        B}(\mathcal H^{1},\mathcal H^{-1})}&\le Ct.
  \end{align*}
\end{lemma}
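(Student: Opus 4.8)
The plan is to express $H_\Theta-T(t)H_\Theta T(-t)$ as the time integral of a weighted commutator and to exploit that this commutator is bounded $\mathcal H^1\to\mathcal H^{-1}$. As a form, $H_\Theta$ maps $\mathcal H^1$ boundedly into $\mathcal H^{-1}$, while by Lemma~\ref{lem:12.6.4.13.14} the maps $T(-t)\colon\mathcal H^1\to\mathcal H^1$ and $T(t)\colon\mathcal H^{-1}\to\mathcal H^{-1}$ are bounded, uniformly for $t\in[0,1]$; hence $H_\Theta-T(t)H_\Theta T(-t)$ is a priori bounded $\mathcal H^1\to\mathcal H^{-1}$, and it suffices to prove $|\langle\varphi,(H_\Theta-T(t)H_\Theta T(-t))\psi\rangle|\le Ct\,\|\varphi\|_{\mathcal H^1}\|\psi\|_{\mathcal H^1}$ for $\varphi,\psi$ in the dense subspace $C^\infty_{\mathrm c}(M)$. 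Since $T(t)^*=T(-t)$ on $\mathcal H$ and $T(-t)\psi\in C^\infty_{\mathrm c}(M)\subseteq\mathcal D(H_\Theta)$, this quantity equals $G(0)-G(t)$ with $G(s):=\langle T(-s)\varphi,H_\Theta T(-s)\psi\rangle$.

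Next I would show $G\in C^1([0,1])$ with $G'(s)=-\langle T(-s)\varphi,[H_\Theta,\mathrm iA]T(-s)\psi\rangle$, where $[H_\Theta,\mathrm iA]$ is the commutator quadratic form, cf.\ \eqref{eq:150911415}. By standard regularity of the flow $y$ of $\omega$, for $\varphi\in C^\infty_{\mathrm c}(M)$ the curve $s\mapsto T(-s)\varphi$ stays in $C^\infty_{\mathrm c}(M)$, depends $C^1$-ly on $s$ in every local $C^k$-norm, and satisfies $\tfrac{\mathrm d}{\mathrm ds}T(-s)\varphi=-\mathrm iAT(-s)\varphi$ (the generator of $T(-\cdot)$ acts as the differential operator $A$ on such functions, cf.\ Subsection~\ref{subsec:15.2.5.21.22}). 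Consequently $s\mapsto H_\Theta T(-s)\psi$ is $C^1$ into $\mathcal H$ — here it is crucial that along the flow one never leaves $C^\infty_{\mathrm c}(M)$, so that $H_\Theta$ acts on it as an honest differential operator with bounded coefficients — and the product rule yields $G'(s)=\langle-\mathrm iAT(-s)\varphi,H_\Theta T(-s)\psi\rangle+\langle T(-s)\varphi,-\mathrm iH_\Theta AT(-s)\psi\rangle$. Writing $u=T(-s)\varphi$, $v=T(-s)\psi$ and integrating by parts (legitimate since $A$ and $H_\Theta$ are symmetric on $C^\infty_{\mathrm c}(M)$ and $Au,Av\in C^\infty_{\mathrm c}(M)$), the right-hand side becomes $\mathrm i\langle Au,H_\Theta v\rangle-\mathrm i\langle H_\Theta u,Av\rangle=-\langle u,[H_\Theta,\mathrm iA]v\rangle$, as claimed. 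Continuity of $G'$ follows from strong continuity of $T(-\cdot)$ on $\mathcal H^1$ and from $[H_\Theta,\mathrm iA]\in\mathcal B(\mathcal H^1,\mathcal H^{-1})$, the latter being a consequence of Lemma~\ref{lem:12.7.3.0.45} together with \eqref{eq:150911415}, since $\mathop{\mathrm{Re}}(A\Theta'p^r)$ is manifestly bounded $\mathcal H^1\to\mathcal H^{-1}$.

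Finally, the fundamental theorem of calculus gives $\langle\varphi,(H_\Theta-T(t)H_\Theta T(-t))\psi\rangle=G(0)-G(t)=\int_0^t\langle T(-s)\varphi,[H_\Theta,\mathrm iA]T(-s)\psi\rangle\,\mathrm ds$, hence $|\langle\varphi,(H_\Theta-T(t)H_\Theta T(-t))\psi\rangle|\le\int_0^t\|[H_\Theta,\mathrm iA]\|_{\mathcal B(\mathcal H^1,\mathcal H^{-1})}\,\|T(-s)\varphi\|_{\mathcal H^1}\|T(-s)\psi\|_{\mathcal H^1}\,\mathrm ds\le Ct\,\|\varphi\|_{\mathcal H^1}\|\psi\|_{\mathcal H^1}$, using $\sup_{s\in[0,1]}\|T(-s)\|_{\mathcal B(\mathcal H^1)}<\infty$ from Lemma~\ref{lem:12.6.4.13.14}. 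By density of $C^\infty_{\mathrm c}(M)$ in $\mathcal H^1$ and the a priori boundedness noted above, this yields $\|H_\Theta-T(t)H_\Theta T(-t)\|_{\mathcal B(\mathcal H^1,\mathcal H^{-1})}\le Ct$ for all $t\in[0,1]$. I expect the hard part to be the second step: since $H_\Theta$ is only bounded $\mathcal H^1\to\mathcal H^{-1}$, the product rule for $G$ cannot be applied naively in those spaces, and the resolution is precisely that the flow keeps everything inside $C^\infty_{\mathrm c}(M)$, where the differentiations and integrations by parts are elementary, with the $\mathcal H^1$–$\mathcal H^{-1}$ estimate entering only at the very end.
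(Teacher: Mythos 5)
Your proposal is correct and follows essentially the same route as the paper: express $H_\Theta-T(t)H_\Theta T(-t)$ as a Duhamel-type integral of the conjugated commutator $T(s)[H_\Theta,\mathrm iA]T(-s)$ (as quadratic forms on $C^\infty_{\mathrm c}(M)$), then conclude by the uniform $\mathcal B(\mathcal H^1)$-bound on $T(-s)$ from Lemma~\ref{lem:12.6.4.13.14}, the $\mathcal B(\mathcal H^1,\mathcal H^{-1})$-boundedness of $[H_\Theta,\mathrm iA]$, and density of $C^\infty_{\mathrm c}(M)$ in $\mathcal H^1$. You merely spell out the derivation of the integral identity (introducing $G(s)$ and justifying its $C^1$-regularity via invariance of $C^\infty_{\mathrm c}(M)$ under the flow), which the paper states without elaboration.
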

\begin{proof}
By  \eqref{eq:15.2.8.3.7}, as quadratic forms on $C^\infty_{\mathrm{c}}(M)$, 
  \begin{align*}
   H_\Theta- T(t)H_\Theta T(-t)
    =\int_0^t
    T(s)[H_\Theta,\mathrm iA]T(-s)\,\mathrm{d}s.
  \end{align*}
  By \eqref{eq:150911427}, Lemma~\ref{lem:12.6.4.13.14}  and the denseness  of 
  $C^\infty_{\mathrm{c}}(M)\subseteq \mathcal H^{1}$ the
  assertion follows.
\end{proof}

\begin{lemma}\label{lem:12.6.3.18.30b}
 Under Conditions~\ref{cond:12.6.2.21.13}--\ref{cond:10.6.1.16.24}
the commutator $[H_\Theta,\mathrm iA]$ has the expression
  \begin{align}
    [H_\Theta,\mathrm iA] &=\slim_{t\to 0^+}
    t^{-1}(H_\Theta-T(t)H_\Theta T(-t))
    \quad\mbox{in }\mathcal B(\mathcal H^{1},\mathcal H^{-1}).
    \label{eq:12.6.4.3.29}
    \end{align}
\end{lemma}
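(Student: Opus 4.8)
The plan is to deduce \eqref{eq:12.6.4.3.29} from the integral representation that already underlies Lemma~\ref{lem:12.6.3.18.30}. As observed in the proof of that lemma, one has, first as quadratic forms on $C^\infty_{\mathrm c}(M)$ and then --- by a routine density argument using \eqref{eq:150911427}, the denseness of $C^\infty_{\mathrm c}(M)$ in $\mathcal H^1$, and the uniform bounds of Lemma~\ref{lem:12.6.4.13.14} --- as an identity of operators in $\mathcal B(\mathcal H^1,\mathcal H^{-1})$,
\[
H_\Theta-T(t)H_\Theta T(-t)=\int_0^t T(s)[H_\Theta,\mathrm iA]T(-s)\,\mathrm ds,\qquad t\in[0,1].
\]
Dividing by $t$ and testing against a fixed $\psi\in\mathcal H^1$, it therefore suffices to show that the $\mathcal H^{-1}$-valued function $g(s)=T(s)[H_\Theta,\mathrm iA]T(-s)\psi$ is continuous on $[0,1]$, since for a continuous, hence Bochner integrable, function one has $t^{-1}\int_0^t g(s)\,\mathrm ds\to g(0)=[H_\Theta,\mathrm iA]\psi$ in $\mathcal H^{-1}$ as $t\to0^+$, which is exactly the claimed strong convergence.

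Continuity of $g$ would be obtained by composing three facts already established. First, $s\mapsto T(-s)\psi$ is continuous from $[0,1]$ into $\mathcal H^1$ by the strong continuity part of Lemma~\ref{lem:12.6.4.13.14}. Second, $[H_\Theta,\mathrm iA]$ is bounded from $\mathcal H^1$ to $\mathcal H^{-1}$, which is the $H_\Theta$-analogue of the last assertion of Lemma~\ref{lem:12.7.3.0.45} and also follows from \eqref{eq:150911415} together with the trivial $\mathcal H^1$-boundedness of $\mathop{\mathrm{Re}}(A\Theta'p^r)$; hence $s\mapsto[H_\Theta,\mathrm iA]T(-s)\psi$ is continuous from $[0,1]$ into $\mathcal H^{-1}$. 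Third, by Lemma~\ref{lem:12.6.4.13.14} the operators $T(s)\in\mathcal B(\mathcal H^{-1})$ satisfy $\sup_{s\in[0,1]}\|T(s)\|_{\mathcal B(\mathcal H^{-1})}<\infty$ and are strongly continuous in $s\ge0$, so that whenever $\xi_s\to\xi_0$ in $\mathcal H^{-1}$ one has
\[
T(s)\xi_s-\xi_0=T(s)(\xi_s-\xi_0)+\bigl(T(s)\xi_0-\xi_0\bigr)\to0\quad\text{in }\mathcal H^{-1}.
\]
Applying this with $\xi_s=[H_\Theta,\mathrm iA]T(-s)\psi$ gives continuity of $g$ on $[0,1]$, in particular at $s=0$ with $g(0)=[H_\Theta,\mathrm iA]\psi$, and hence Bochner integrability on $[0,t]$.

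It then remains only to combine: for small $t>0$,
\[
\Bigl\|t^{-1}\!\int_0^t g(s)\,\mathrm ds-[H_\Theta,\mathrm iA]\psi\Bigr\|_{\mathcal H^{-1}}\le\sup_{s\in[0,t]}\bigl\|g(s)-g(0)\bigr\|_{\mathcal H^{-1}}\to0\quad(t\to0^+),
\]
and feed in the integral identity above to conclude \eqref{eq:12.6.4.3.29}. The only point demanding some care is the upgrade of the integral identity from $C^\infty_{\mathrm c}(M)$ to a genuine $\mathcal B(\mathcal H^1,\mathcal H^{-1})$-identity; but this is essentially already contained in the proof of Lemma~\ref{lem:12.6.3.18.30} (both sides being bounded bilinear forms on $\mathcal H^1\times\mathcal H^1$ agreeing on the dense subspace $C^\infty_{\mathrm c}(M)\times C^\infty_{\mathrm c}(M)$), so the present lemma is in effect a repackaging of that estimate together with the continuity of $g$, and I expect no substantive obstacle.
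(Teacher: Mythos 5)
Your proof is correct and follows essentially the same approach as the paper's: both rest on the integral identity $H_\Theta-T(t)H_\Theta T(-t)=\int_0^t T(s)[H_\Theta,\mathrm iA]T(-s)\,\mathrm ds$, the strong continuity of $T(\pm t)$ from Lemma~\ref{lem:12.6.4.13.14}, and the boundedness of $[H_\Theta,\mathrm iA]\colon\mathcal H^1\to\mathcal H^{-1}$. The only cosmetic difference is in the order of operations: the paper verifies the strong convergence on $C^\infty_{\mathrm c}(M)$ and then upgrades to $\mathcal H^1$ via the uniform-in-$t$ operator bound of Lemma~\ref{lem:12.6.3.18.30} together with density, whereas you extend the integral identity to $\mathcal H^1$ first and then show the integrand is $\mathcal H^{-1}$-continuous directly for $\psi\in\mathcal H^1$, which gives the conclusion without a separate density step at the end.
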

\begin{proof}
We write 
for any $\psi\in C^\infty_{\mathrm{c}}(M)$
  \begin{align*}
   & t^{-1}\bigl(H_\Theta-T(t)H_\Theta T(-t)\bigr)\psi
-[H_\Theta,\mathrm iA]\psi
\\&    =t^{-1}\int_0^t \bigl\{T(s)[H_\Theta,\mathrm iA]T(-s)-[H_\Theta,\mathrm iA]\bigr\}
     \psi\,\mathrm{d}s.
  \end{align*}
  Then we obtain 
  (\ref{eq:12.6.4.3.29}) on $C^\infty_{\mathrm{c}}(M)$
  due to the strong continuity of $T(\pm t)$ stated in Lemma~\ref{lem:12.6.4.13.14}.
  Then in turn by Lemma~\ref{lem:12.6.3.18.30} and the density
  argument, the strong limit to the right of  (\ref{eq:12.6.4.3.29}) exists in
  ${\mathcal B}(\mathcal H^{1},\mathcal H^{-1})$ and the equality holds.  
\end{proof}
\begin{proof}[Proof of Lemma~\ref{lem:undoing-commutators}]
It suffices to show \eqref{eq:150911422} on $\mathcal D(H)$.
Due to the non-negativity of $\Theta$,
we have the inequality, as quadratic forms on $\mathcal H^1$,
\begin{align*}
&H_\Theta-T(t)H_\Theta T(-t)
\\&
=
H_\Theta(1-T(-t))
+(1-T(-t))^*H_\Theta
-(1-T(-t))^*H_\Theta(1-T(-t))\\
&\le 
H_\Theta(1-T(-t))
+(1-T(-t))^*H_\Theta
-(1-T(-t))^*\Theta V(1-T(-t)).
\end{align*}
We evaluate this inequality in the state
$\psi\in \mathcal D(H)\subseteq\mathcal H_1$,
divide it by $t>0$ and take the limit $t\to 0^+$ using 
Lemma~\ref{lem:12.6.3.18.30b} and \eqref{eq:15.2.8.3.7}.
Then we obtain 
\begin{align}
\begin{split}
\langle [H_\Theta,\mathrm iA]\rangle_{\psi}
&=\slim_{t\to 0^+}
    t^{-1}\langle H_\Theta-T(t)H_\Theta T(-t)\rangle_\psi\\
&\le 
\slim_{t\to 0^+}
    t^{-1}
\Bigl\{\bigl\langle H_\Theta\psi, (1-T(-t))\psi\bigr\rangle
+\bigl\langle (1-T(-t))\psi, H_\Theta\psi\bigr\rangle
\\&\phantom{\le\slim_{t\to +0}t^{-1}\Bigl\{}
-\bigl\langle (1-T(-t))\psi, \Theta V(1-T(-t))\psi\bigr\rangle\Bigr\}
\\   
    &= \langle H_\Theta\psi,
    \mathrm{i}A\psi\rangle + \langle
    \mathrm{i}A\psi, H_\Theta\psi\rangle.
\end{split}\label{eq:12.6.4.3.31}
  \end{align}
Hence we are done.
\end{proof}
\begin{remark}\label{rem:boundary}
  We can also prove that, if the gradient vector field $\omega$ is
  both forward and backward complete, then the equality holds in
  \eqref{eq:12.6.4.3.31}, and hence also in \eqref{eq:150911440}.  In
  fact we can prove this by using, instead of \eqref{eq:12.6.4.3.29},
  an alternative expression
  \begin{align*}
    [H_\Theta,\mathrm iA] &=\slim_{t\to 0}
    t^{-1}(H_\Theta T(t)-T(t)H_\Theta)
    \quad\mbox{in }\mathcal B(\mathcal H^{1},\mathcal H^{-1})
    \end{align*}
holding true in this case. 
As we can see in the proof 
the gap between  the left- and right-hand side  quantities of \eqref{eq:12.6.4.3.31}
is given by 
\begin{align}
\slim_{t\to 0^+}t^{-1}\bigl\|\sqrt \Theta p(1-T(-t))\psi\bigr\|^2.
\label{eq:15.2.8.15.38}
\end{align}
Assuming only   forward completeness on $\omega$ it does not vanish
in general.  Actually the gap \eqref{eq:15.2.8.15.38} corresponds
formally to a `boundary contribution' appearing from integrations by
parts.  In  cases where somehow we can  realize a smooth
boundary $\partial M$ of $M$, such as the open half-space of the
Euclidean space, indeed we can explicitly compute the gap
\eqref{eq:15.2.8.15.38}, and it is the square of a weighted $L^2$-norm
of the normal derivative of $\psi$ on $\partial M$. 
  See \cite[Proposition 6.2]{BGS} for a  concrete very similar 
  computation. Of course, the
weight vanishes if $\omega$ is both forward and backward complete,
i.e.\ parallel to this boundary.
\end{remark}

\section{Rellich's theorem}\label{sec:absence} 
Our proof of Theorem~\ref{thm:13.6.20.0.10}
shares features of the standard scheme \cite{FHH2O,FH,IS2}
used for showing  absence of `$L^2$-eigenvalues'. However there are
notable differences. 
Our main novelty  is the use of the conjugate operator $A$
associated with $r$ rather than the one  associated with  $r^2$, 
cf.  Subsections~\ref{subsec:result} and \ref{subsec:15.2.5.21.22}.
For such $A$
the formal commutator $[H,\mathrm iA]$ has only 
a weaker and partial positivity (in the spherical direction),
but refined arguments finally provide a stronger result.
For the Euclidean space our result overlaps with  \cite{L1,L2} and in
particular with \cite[Section 30.2]{Ho2}.

Basically the proof consists of two steps,
a priori super-exponential decay estimates
and the absence of super-exponentially decaying eigenfunctions.
Obviously,  Theorem~\ref{thm:13.6.20.0.10}
follows immediately as a combination of the following propositions.
Throughout the section 
we suppose Conditions~\ref{cond:12.6.2.21.13}--\ref{cond:10.6.1.16.24}.

\begin{proposition}\label{prop:absence-eigenvalues-1b}
Let $\lambda>\lambda_0$.
If a function $\phi\in   \mathcal H_{\mathrm{loc}}$ satisfies for
some $m_0\ge 0$:
\begin{enumerate}
\item
$(H-\lambda)\phi=0$ in the distributional sense, 
\item
$\bar\chi_{m_0}\phi\in B^*_0$ and 
$\chi_{m_0,n}\phi\in\mathcal H^1$ for any $n>m_0$,
\end{enumerate}
then $\bar\chi_{m_0}\mathrm e^{\alpha r}\phi\in B_0^*$ for any $\alpha\ge 0$.
\end{proposition}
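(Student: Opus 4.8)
The plan is to run the Froese--Herbst a priori decay argument, implemented through the weighted commutator $[H,\mathrm iA]_\Theta$ of Lemma~\ref{lem:12.7.3.0.45} and the conjugate operator $A$ of \eqref{eq:13.9.23.2.24}. The scheme is an induction on the decay rate: I would show there is an increment $\epsilon>0$ such that, for every $\alpha\ge0$,
\[
\bar\chi_{m_0}\mathrm e^{\alpha r}\phi\in B_0^*\ \Longrightarrow\ \bar\chi_{m_0}\mathrm e^{(\alpha+\epsilon)r}\phi\in B_0^*,
\]
and that the admissible increment does not shrink (it may be taken uniform, even growing, in $\alpha$). Since the case $\alpha=0$ is the hypothesis, iterating this yields $\bar\chi_{m_0}\mathrm e^{\alpha r}\phi\in B_0^*$ for all $\alpha\ge0$, which is the assertion.

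For the inductive step, fix $\alpha\ge0$ with $v:=\bar\chi_{m_0}\mathrm e^{\alpha r}\phi\in B_0^*$. To make exponential weights admissible inside quadratic forms with $H$, I would work with regularized weights: for large $n$ choose $\theta_n\in C^\infty(M)$ depending only on $r$, with $\theta_n'\ge0$ bounded, $\theta_n(r)=(\alpha+\epsilon)r$ for $r\le R_n$, $\theta_n$ bounded on $M$, all $\theta_n^{(k)}$ ($k\ge1$) bounded and $\theta_n''$, $\theta_n'''$ supported in $\{r\ge R_n\}$; set $\phi_n=\mathrm e^{\theta_n}\bar\chi_{m_0}\phi$. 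Because $\theta_n$ is bounded and $\chi_{m_0,n'}\phi\in\mathcal H^1$, one checks $\phi_n\in\mathcal N$ with $\chi_{m,n'}\phi_n\in\mathcal D(H)$, so $\phi_n$ is an admissible state, and the eigenequation becomes $(H-\lambda)\phi_n=W_n\phi_n+E_n$, where $W_n$ carries the first-order ($\propto\theta_n'$) and zeroth-order ($\propto(\theta_n')^2|\mathrm dr|^2$) weight contributions and $E_n$ is supported in the fixed annulus $B_{R_{m_0+1}}\setminus B_{R_{m_0}}$ with $\|E_n\|\le C\|\bar\chi_{m_0}\phi\|_{B_0^*}$.

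The core step is a quadratic-form estimate obtained by evaluating $\langle[H-\lambda,\mathrm iA]_\Theta\rangle_{\phi_n}$ for a suitable nonnegative inside-weight $\Theta=\Theta(r)$ (bounded derivatives, supported in $\{r\ge R_\lambda\}$, eventually constant), with $R_\lambda$ large enough that $\lambda-q_1\ge\delta>0$ there --- possible since $\lambda>\lambda_0=\limsup q_1$, cf.\ \eqref{eq:13.9.30.6.8}. Using the explicit expression \eqref{eq:15091015}, the one-sided relation of Lemma~\ref{lem:undoing-commutators} to discard the third-order ``boundary'' terms with the correct sign, and $(H-\lambda)\phi_n=W_n\phi_n+E_n$ to rewrite $\langle A\Theta H\rangle_{\phi_n}$, one puts on the favorable side the radial positivity $\langle A\Theta'A\rangle_{\phi_n}$ injected by the inside-weight, the spherical positivity $\langle p_i^*\Theta h^{ij}p_j\rangle_{\phi_n}\ge0$ from \eqref{eq:15091112}, and the energy term; the point is that after conjugation the effective energy is $\lambda+\tfrac12(\theta_n')^2|\mathrm dr|^2\ge\lambda$, so the spectral gap over $\lambda_0$ only \emph{grows} with $\alpha$ and dominates the first-order $W_n$-terms (estimated by Cauchy--Schwarz against the positive terms). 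Combining with the real part of $\langle\phi_n,(H-\lambda)\phi_n\rangle$ and absorbing derivative-of-cutoff errors into the known bounds $v\in B_0^*$ and $\bar\chi_{m_0}\phi\in B_0^*$, one obtains
\[
\|\bar\chi_{m_0}\mathrm e^{\theta_n}\phi\|_{B^*}^2+\langle p_i^*h^{ij}p_j\rangle_{\phi_n}\le C
\]
uniformly in $n$. Monotone convergence ($\theta_n\uparrow(\alpha+\epsilon)r$ on each $B_R$) then gives $\bar\chi_{m_0}\mathrm e^{(\alpha+\epsilon)r}\phi\in B^*$; a short additional argument --- reinserting this bound into the equation, commuting the weight once more through $H$ to gain $\mathcal H^1$-type control, and using density of $C^\infty_{\mathrm c}(M)$ in $B_0^*$ with a cutoff approximation --- upgrades $B^*$ to $B_0^*$, completing the inductive step.

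The main obstacle is making the core estimate quantitatively uniform: one must track the sign of every term of \eqref{eq:15091015} --- especially the interplay of the radial term $\langle A\Theta'A\rangle$, the spherical positivity, the energy term, and the weight derivatives $\theta_n',\theta_n'',\theta_n'''$ --- and verify that the positive contributions strictly dominate the first-order weight terms for \emph{all} $\alpha\ge0$, which is precisely where the absence of thresholds of $H$ above $\lambda_0$ enters (the effective energy $\lambda+\tfrac12(\theta_n')^2|\mathrm dr|^2$ never meets a threshold because there is none above $\lambda_0$). A secondary technical point, already anticipated by the statement through $\bar\chi_{m_0}$ and the space $\mathcal N$, is that $M$ may be incomplete with a ``boundary'' invisible from inside for large $r$; hence all manipulations are carried out at the level of quadratic forms on $\mathcal D(H)$ rather than operator domains, and the third-order commutator terms are used only in the one-sided form provided by Lemma~\ref{lem:undoing-commutators}.
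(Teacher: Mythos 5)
Your plan follows the Froese--Herbst ethos---boost the decay rate by extracting positivity from a weighted commutator---but it is organized differently from the paper, and the implementation as written has a regularization gap. The paper does not conjugate the state: it puts the entire exponential and both cutoff scales into the inside weight $\Theta=\chi_{m,n}\mathrm{e}^{\theta}$ with
$\theta=2\alpha r+2\beta\int_0^r(1+s/R_\nu)^{-1-\delta}\,\mathrm{d}s$,
and evaluates $\mathop{\mathrm{Im}}\bigl(A\Theta(H-\lambda)\bigr)$ in the \emph{unweighted} cutoff state $\chi_{m-2,n+2}\phi\in\mathcal D(H)$; the increment $\beta>0$ supplied by Lemma~\ref{lem:14.10.4.1.17ffaabb} is uniform for $\alpha\in[0,\alpha_0]$, and the argument is a supremum-plus-contradiction rather than an explicit induction. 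These choices are not cosmetic: having the weight inside $\Theta$ rather than on the state is what lets one invoke Lemma~\ref{lem:undoing-commutators} and Lemma~\ref{lem:12.7.3.0.45} verbatim and keeps the error structure tractable.

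The concrete gap is in your regularization $\theta_n(r)=(\alpha+\epsilon)r$ for $r\le R_n$, flattened thereafter. On the transition annulus near $R_n$ the weight carried by $\phi_n=\mathrm{e}^{\theta_n}\bar\chi_{m_0}\phi$ is $\mathrm{e}^{(\alpha+\epsilon)R_n}$, which exceeds by the factor $\mathrm{e}^{\epsilon R_n}$ what your hypothesis $\bar\chi_{m_0}\mathrm{e}^{\alpha r}\phi\in B_0^*$ controls. Consequently the derivative-of-cutoff errors at $R_n$ scale like $\mathrm{e}^{2\epsilon R_n}R_n^{-1}\|\chi_{n-1,n+1}\mathrm{e}^{\alpha r}\phi\|^2$, which does \emph{not} tend to zero from the $B_0^*$ hypothesis alone; you only have $\|F_n\mathrm{e}^{\alpha r}\phi\|^2=o(R_n)$, which is swamped by $\mathrm{e}^{2\epsilon R_n}$. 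The paper's weight is engineered precisely to avoid this: since $\int_0^r(1+s/R_\nu)^{-1-\delta}\,\mathrm{d}s\le R_\nu/\delta$, one has $\mathrm{e}^{\theta}\le \mathrm{e}^{2\beta R_\nu/\delta}\mathrm{e}^{2\alpha r}$ with a constant uniform in $r$ (though $\nu$-dependent), so the boundary term at $R_n$ is $\le C_\nu R_n^{-1}\|\chi_{n-1,n+1}\mathrm{e}^{\alpha r}\phi\|^2=o(1)$, and only afterwards is the $\nu$-regularization removed by monotone convergence; see \eqref{eq:11.7.16.3.22a}--\eqref{eq:11.7.16.3.43a}. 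A secondary issue: with your inside weight $\Theta$ ``eventually constant,'' $\Theta'$ is compactly supported and the radial positivity $\langle A\Theta'A\rangle$ dies off at large $r$, where you need it; the global lower bound $\Theta'\gtrsim r^{-1}\theta_0^{-\delta}\Theta$ that drives the paper's Lemma~\ref{lem:14.10.4.1.17ffaabb} is absent, and you have not explained what replaces it. Fixing your scheme essentially forces you back to the paper's two-parameter regularization and its fully weighted $\Theta$.
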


\begin{proposition}\label{prop:absence-eigenvalues-1bb}
Let $\lambda>\lambda_0$.
If a function $\phi\in \mathcal H_{\mathrm{loc}}$ satisfies for
some $m_0\ge 0$:
\begin{enumerate}
\item
$(H-\lambda)\phi=0$ in the distributional sense, 
\item
$\bar\chi_{m_0}\mathrm e^{\alpha r}\phi\in B_0^*$ for any $\alpha\ge 0$
and 
$\chi_{m_0,n}\phi\in\mathcal H^1$ for any $n>m_0$,

\end{enumerate}
then $\phi(x)=0$ in $M$.
\end{proposition}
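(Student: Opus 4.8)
The plan is to follow the Froese--Herbst scheme for the absence of super-exponentially decaying eigenfunctions (cf.\ \cite{FHH2O,FH,IS2}), but built around the conjugate operator $A$ associated with $r$ and the weighted commutators of Subsection~\ref{sec:13.9.6.23.39}, and to close by unique continuation. Fix $m_0$ as in the hypotheses --- which are precisely what Proposition~\ref{prop:absence-eigenvalues-1b} supplies. Interior elliptic regularity turns $(H-\lambda)\phi=0$ into $\phi\in\mathcal H^2$ locally in the interior of $M$, and, as in the discussion following Theorem~\ref{thm:13.6.20.0.10}, $\chi_{m_0,n}\phi\in\mathcal D(H)$ for all $n>m_0$; hence for $\alpha\ge0$ and $n>m_0$ the function $\psi_{\alpha,n}:=\e^{\alpha r}\chi_{m_0,n}\phi$ lies in $\mathcal D(H)\subseteq\mathcal H^1\subseteq\mathcal D(A_\pm)$, cf.\ \eqref{eq:15.2.8.3.7}. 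From $H=H_0+V$, $H_0=\tfrac12p_i^*g^{ij}p_j$, the definition \eqref{eq:13.9.23.2.24} of $A$, and $p_j\e^{-\alpha r}=\e^{-\alpha r}(p_j+\mathrm i\alpha(\mathrm dr)_j)$, one gets the \emph{exact} identity
\[
\e^{\alpha r}(H-\lambda)\e^{-\alpha r}=H-\lambda+\mathrm i\alpha A-\tfrac12\alpha^2|\mathrm dr|^2
\]
of operators $\mathcal H^1\to\mathcal H^{-1}$, so that $\psi_{\alpha,n}$ solves $\bigl(H-\lambda+\mathrm i\alpha A-\tfrac12\alpha^2|\mathrm dr|^2\bigr)\psi_{\alpha,n}=\e^{\alpha r}[H,\chi_{m_0,n}]\phi=:\Xi_{\alpha,n}$, with $\Xi_{\alpha,n}$ supported in the two fixed $r$-annuli $\{R_{m_0}\le r\le2R_{m_0}\}$ and $\{R_n\le r\le2R_n\}$.

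The heart of the matter is a positive-commutator estimate for the conjugated operator. Choose a nonnegative nondecreasing weight $\Theta=\Theta(r)$ with $\Theta\equiv0$ near $\{r\le R_{m_0}\}$ and $\Theta\equiv1$ for $r$ large, and evaluate in the state $\psi_{\alpha,n}$ the weighted commutator of $H-\tfrac12\alpha^2|\mathrm dr|^2$ with $\mathrm iA$, using the explicit expression \eqref{eq:15091015} of Lemma~\ref{lem:12.7.3.0.45} and replacing $[H,\mathrm iA]_\Theta$ by $2\operatorname{Im}(A\Theta H)$ through Lemma~\ref{lem:undoing-commutators} (legitimate since $\psi_{\alpha,n}\in\mathcal D(H)$). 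On the positive side one retains $A\Theta'A\ge0$ (radial, in the transition zone), the spherical positivity $p_i^*\Theta(\nabla^2r)^{ik}p_k\gtrsim r^{-1}|\mathrm dr|^2\Theta L$ of \eqref{eq:15091112}, and --- decisively --- the term $\gtrsim\alpha\,\|\sqrt\Theta\,A\psi_{\alpha,n}\|^2$ arising from commuting past the $\mathrm i\alpha A$ in the conjugated equation; all other terms in \eqref{eq:15091015}, together with the $O(\alpha^2)$-contribution from commuting $\tfrac12\alpha^2|\mathrm dr|^2$, are $O(r^{-1-\delta})$, $\delta>0$, times bounded energy densities (by \eqref{eq:60} and \eqref{eq:13.9.28.13.28}) and are absorbed on $\supp\Theta$ once $R_{m_0}$ and $\alpha$ are large. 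Feeding in the energy identity $\langle H-\lambda-\tfrac12\alpha^2|\mathrm dr|^2\rangle_{\psi_{\alpha,n}}=\operatorname{Re}\langle\psi_{\alpha,n},\Xi_{\alpha,n}\rangle$ --- where $\lambda>\lambda_0$ and \eqref{eq:13.9.30.6.8} provide $\lambda-q_1\ge c>0$ on $\supp\Theta$ --- and, exactly as in the classical argument, passing to normalized functions to cope with the $n$-growing $L^2$-size of $\psi_{\alpha,n}$, one arrives at a bound in which the genuinely positive radial term $\alpha\,\|\sqrt\Theta\,A\psi_{\alpha,n}\|^2$ is controlled solely by the contribution of $\Xi_{\alpha,n}$.

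Next let $n\to\infty$. The outer-annulus part of $\Xi_{\alpha,n}$, namely $\e^{\alpha r}[H,\chi_n]\phi$ localized near $r=R_n$, tends to $0$ \emph{precisely} because $\e^{\alpha r}\bar\chi_{m_0}\phi$ lies in $B^*_0$, the closure of $C^\infty_{\mathrm c}(M)$ and not merely $B^*$; hence only the fixed inner annulus survives, and one is left, for all large $\alpha$, with an inequality of the schematic shape
\[
\alpha\,\|F_\nu(\e^{\alpha r}\phi)\|_{\mathcal H}^2\ \lesssim\ C_{m_0}\,\e^{4\alpha R_{m_0}}\qquad\text{for every }\nu\ \text{with}\ R_\nu>2R_{m_0},
\]
where $C_{m_0}$ depends only on $\phi$ near $\{r\le2R_{m_0}\}$. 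Suppose $\phi\not\equiv0$. Since $r(M)=[1,\infty)$, each $\{R_\nu\le r<2R_\nu\}$ is a nonempty open set, and by the classical strong unique continuation property of $-\tfrac12\Delta+(V-\lambda)$ (available as $V\in L^\infty(M)$), together with connectedness of $M$, $\phi$ vanishes on no nonempty open set; fixing $\nu$ with $R_\nu>2R_{m_0}$ we get $\|F_\nu\phi\|>0$ and $\|F_\nu(\e^{\alpha r}\phi)\|^2\ge\e^{2\alpha R_\nu}\|F_\nu\phi\|^2$. Then the left side of the displayed bound is $\gtrsim\alpha\,\e^{2\alpha R_\nu}\|F_\nu\phi\|^2$ while the right side is $O(\e^{4\alpha R_{m_0}})$, and since $2R_\nu>4R_{m_0}$ this is impossible as $\alpha\to\infty$. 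Hence $\phi=0$ on $\{r>2R_{m_0}\}$, and a last application of unique continuation propagates the vanishing to the whole interior of $M$, so $\phi=0$ in $\mathcal H_{\mathrm{loc}}$.

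The step I expect to be the main obstacle is the commutator estimate of the second paragraph. Because $A$ generates radial \emph{translations} (associated with $r$) rather than radial \emph{dilations} (associated with $r^2$), $[H,\mathrm iA]$ by itself carries only the weak, spherical positivity of \eqref{eq:15091112}, so the missing radial coercivity must be manufactured entirely from the exponential conjugation ($\mathrm i\alpha A$) and the weight inside ($A\Theta'A$); keeping the powers of $\alpha$ in balance with the decay $r^{-1-\delta}$ --- so that the many error terms, several of order $\alpha^2$, are absorbed only after $\Theta$ is localized far out, the energy identity is used, and the functions are normalized --- and then converting the surviving radial term into the lower bound on the exponentially weighted far-field mass of $\phi$ used above (via non-vanishing from unique continuation and a local elliptic estimate), is where the real work lies; one must also verify that the outer cutoff error genuinely vanishes in $B^*_0$ rather than merely staying bounded in $B^*$.
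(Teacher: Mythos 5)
Your overall scheme is correct and matches the paper in its large structure: take the hypotheses supplied by Proposition~\ref{prop:absence-eigenvalues-1b}, derive an $\alpha$-uniform weighted bound on $\mathrm e^{\alpha r}\phi$ from a commutator estimate with the $\Theta$-weight inside, use $B^*_0$ to kill the outer cut-off error as $n\to\infty$, and conclude by sending $\alpha\to\infty$ and invoking unique continuation. But you realize this via Froese--Herbst conjugation --- first forming $\psi_{\alpha,n}=\mathrm e^{\alpha r}\chi_{m_0,n}\phi$ and conjugating $H-\lambda$ into $H-\lambda+\mathrm i\alpha A-\tfrac12\alpha^2|\mathrm dr|^2$, then commutating that conjugated operator with $A$ against a transition weight $\Theta$ --- whereas the paper never conjugates: in Lemma~\ref{lem:14.10.4.1.17ffaabbqqq} it keeps $\phi$ as the state and packs the exponential directly into the weight, $\Theta=\chi_{m,n}\mathrm e^{2\alpha r}$, obtaining the clean lower bound $c\alpha^2 r^{-1}\Theta$ (no derivative on the left) minus exactly localized $\alpha^2$-errors on the two cut-off annuli plus an energy term $\mathop{\mathrm{Re}}(\gamma(H-\lambda))$ that vanishes in the state $\chi_{m-2,n+2}\phi$. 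The two routes are algebraically related (the completion of square $(A\pm\tfrac{\mathrm i}{2}|\mathrm dr|^2\theta')$ in \eqref{eq:14.9.4.18.23ffaabbbbbbaae} is essentially the conjugation written locally), but the paper's keeps the $\alpha$-bookkeeping automatic and gives \eqref{eq:11.7.16.3.43qq} with no stray $\alpha$-factors.

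Where your sketch is genuinely incomplete is the central estimate, and I think a couple of the claims made there would not survive closer inspection. First, the $O(\alpha^2)$-contribution from $-\tfrac12\alpha^2|\mathrm dr|^2$ in the weighted commutator is $\tfrac{\alpha^2}{4}\nabla^r(\Theta|\mathrm dr|^2)=\tfrac{\alpha^2}{4}\Theta'|\mathrm dr|^4+\tfrac{\alpha^2}{4}\Theta\nabla^r|\mathrm dr|^2$; only the second piece is $O(r^{-1-\tau/2})$ and the first piece is a fully $\alpha^2$-sized, compactly supported (for your transition $\Theta$) contribution that must be matched, not absorbed as a small error. Second, the term you identify as ``decisive,'' $\alpha\mathop{\mathrm{Re}}(A\Theta A)\ge\alpha\|\sqrt\Theta A\psi_{\alpha,n}\|^2$, is a derivative positivity, whereas what the endgame requires (and what the paper's estimate delivers) is zeroth-order positivity $\approx\alpha^2\langle r^{-1}\Theta\rangle_{\mathrm e^{\alpha r}\phi}$; the conversion of the former into the latter is precisely the place where $\lambda>\lambda_0$, the energy identity, and the splitting of $A\Theta'A$ into a small-weight piece and a completion-of-square piece (cf.\ \eqref{eq:14.9.4.18.22ffaabbqq}--\eqref{eq:14.9.4.18.23ffaabbbbggqq}) are needed, and it is exactly what your sketch leaves implicit with ``feeding in the energy identity.'' Third, the claim that the outer-annulus piece of $\Xi_{\alpha,n}=\mathrm e^{\alpha r}[H,\chi_{m_0,n}]\phi$ vanishes because $\bar\chi_{m_0}\mathrm e^{\alpha r}\phi\in B^*_0$ is too quick: $[H,\chi_n]$ is a first-order operator, so one needs control of $\mathrm e^{\alpha r}p\phi$ near $r\approx R_n$, not just of $\mathrm e^{\alpha r}\phi$; this is fine, but should be justified (the paper's route avoids the issue because the relevant remainder term involves only $\|\chi_{n-1,n+1}\mathrm e^{\alpha r}\phi\|$ with an $R_n^{-1}$ prefactor). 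None of these are obstructions in principle --- you flagged yourself that the commutator estimate is ``where the real work lies'' --- but as written the sketch does not yet supply that work, and the paper's Lemma~\ref{lem:14.10.4.1.17ffaabbqqq} (obtained by pushing $\beta\to0$, $\nu\to\infty$ in the technology of Lemma~\ref{lem:14.10.4.1.17ffaabb}) is precisely the statement your argument needs but does not prove.
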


We prove 
Propositions~\ref{prop:absence-eigenvalues-1b} and \ref{prop:absence-eigenvalues-1bb}
in Subsections~\ref{subsec:150909511} and \ref{subsec:150909513},
respectively.
The proofs are quite similar to each other, and 
both are dependent on commutator estimates with a particular form of weight inside.
Let us introduce the regularized weights
\begin{align}
\Theta= \Theta_{m,n,\nu}^{\alpha,\beta,\delta}
=\chi_{m,n}\mathrm e^{\theta};\quad 
n>m\ge 0,
\label{eq:15.2.15.5.8bb}
\end{align}
with exponents 
\begin{align*}
\theta=\theta_\nu^{\alpha,\beta,\delta}
=2\alpha r+2\beta\int_0^r(1+s/R_\nu)^{-1-\delta}\,\mathrm ds;\quad
\alpha,\beta\ge 0,\ \delta>0,\ \nu\ge 0.
\end{align*}
Denote their derivatives in $r$ by primes, e.g., 
if we set for notational simplicity
\begin{align*}
\theta_0=1+r/R_\nu,
\end{align*} 
then
\begin{align*}
\begin{split}
\theta'=2\alpha+2\beta\theta_0^{-1-\delta},\quad
\theta''=-2(1+\delta)\beta R_\nu^{-1}\theta_0^{-2-\delta},\quad \ldots.
\end{split}
\end{align*}
In particular, since $R_\nu^{-1}\theta_0^{-1}\leq r^{-1}$, we have
  \begin{align*}
    |\theta^{(k)}|\leq C_{\delta,k} \beta r^{1-k}\theta_0^{-1-\delta};\quad
k=2,3,\dots.
  \end{align*}

\subsection{A priori super-exponential decay estimates}\label{subsec:150909511}

In this subsection we prove Proposition~\ref{prop:absence-eigenvalues-1b}.
The following commutator estimate is a key:
\begin{lemma}\label{lem:14.10.4.1.17ffaabb}
Let $\lambda>\lambda_0$,
and fix any $\alpha_0\ge 0$ and $\delta\in(0,\min\{1,\rho',\tau/2\})$ 
in the definition \eqref{eq:15.2.15.5.8bb} of $\Theta$.
Then there exist $\beta,c,C>0$ and $n_0\ge 0$ such that uniformly in 
$\alpha\in[0,\alpha_0]$, $n>m\ge n_0$ and $\nu\ge n_0$, 
as quadratic forms on $\mathcal D(H)$,
\begin{align}
\begin{split}
\mathop{\mathrm{Im}}
\bigl(A\Theta(H-\lambda)\bigr)
&\ge 
cr^{-1}\theta_0^{-\delta}\Theta
-C\bigl(\chi_{m-1,m+1}^2+\chi_{n-1,n+1}^2\big)r^{-1}\mathrm e^\theta
\\&\phantom{={}}
+\mathop{\mathrm{Re}}\bigl(\gamma(H-\lambda)\bigr),
\end{split}
\label{14.9.26.9.53ffaabb}
\end{align}
where $\gamma=\gamma_{m,n,\nu}$ is a certain function
satisfying $\mathop{\mathrm{supp}}\gamma\subseteq\mathop{\mathrm{supp}}\chi_{m,n}$ 
and $|\gamma|\le C_{m,n}\mathrm e^\theta$.
\end{lemma}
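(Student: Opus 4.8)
The plan is to derive \eqref{14.9.26.9.53ffaabb} as a quadratic-form inequality on $\mathcal D(H)$ from the explicit commutator formula \eqref{eq:15091015}, systematically pushing every error contribution either into the boundary terms localised at $R_m$ and $R_n$, or into $-\mathop{\mathrm{Re}}(\gamma(H-\lambda))$. First I reduce to a genuine weighted commutator: writing $\mathop{\mathrm{Im}}(A\Theta(H-\lambda))=\mathop{\mathrm{Im}}(A\Theta H)-\lambda\mathop{\mathrm{Im}}(A\Theta)$, noting $\mathop{\mathrm{Im}}(A\Theta)=-\tfrac12|\mathrm dr|^2\Theta'$ and using Lemma~\ref{lem:undoing-commutators}, one gets
\begin{align*}
\mathop{\mathrm{Im}}(A\Theta(H-\lambda))\ge\tfrac12[H,\mathrm iA]_\Theta+\tfrac\lambda2|\mathrm dr|^2\Theta'.
\end{align*}
Inserting \eqref{eq:15091015} and splitting its last term as $-\tfrac12\mathop{\mathrm{Re}}(|\mathrm dr|^2\Theta'H)=-\tfrac12\mathop{\mathrm{Re}}(|\mathrm dr|^2\Theta'(H-\lambda))-\tfrac\lambda2|\mathrm dr|^2\Theta'$, the two multiples of $\tfrac\lambda2|\mathrm dr|^2\Theta'$ cancel, and one is left with a lower bound for $\mathop{\mathrm{Im}}(A\Theta(H-\lambda))$ by the remaining terms of \eqref{eq:15091015}, the $-\tfrac12\mathop{\mathrm{Re}}(|\mathrm dr|^2\Theta'(H-\lambda))$ among them (note $||\mathrm dr|^2\Theta'|\le C\mathrm e^\theta$ and $\mathop{\mathrm{supp}}\Theta'\subseteq\mathop{\mathrm{supp}}\chi_{m,n}$ by \eqref{eq:13.9.28.13.28}, \eqref{eq:150921}, so it is admissible as a piece of $-\mathop{\mathrm{Re}}(\gamma(H-\lambda))$ if one so wishes).

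Next I expand $\Theta=\chi_{m,n}\mathrm e^\theta$ by the Leibniz rule. Every term in which a derivative of $\chi_{m,n}$ occurs is supported in one of the two transition annuli $\{R_m<r<R_{m+1}\}$, $\{R_n<r<R_{n+1}\}$, where $r\simeq R_m$, resp.\ $r\simeq R_n$, and where $|\chi_{m,n}^{(k)}|\le C_kr^{-k}$ while $|\theta^{(k)}|\le C_k$ uniformly in $\alpha\in[0,\alpha_0]$; bounding all such contributions crudely there produces the claimed term $-C(\chi_{m-1,m+1}^2+\chi_{n-1,n+1}^2)r^{-1}\mathrm e^\theta$. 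The only delicate boundary contribution is $\tfrac12A\chi_{m,n}'\mathrm e^\theta A$, which near $R_n$ has the wrong sign and is quadratic in $A$; there I use $A^2\le C(H+C)$ (a consequence of $A=\mathop{\mathrm{Re}}p^r$ and $|\mathrm dr|$ bounded) to estimate $|\langle A\chi_{m,n}'\mathrm e^\theta A\rangle_\psi|\le\mathop{\mathrm{Re}}\langle\gamma_1(H-\lambda)\rangle_\psi+C\langle\chi_{n-1,n+1}^2r^{-1}\mathrm e^\theta\rangle_\psi$ with $|\gamma_1|\le C\mathrm e^\theta$, routing the $(H-\lambda)$-part into $\gamma$. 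It then remains to bound from below the \emph{interior} contribution, where $\chi_{m,n}=1$ and the weights are built from $\theta'=2\alpha+2\beta\theta_0^{-1-\delta}$ and $|\theta^{(k)}|\le C_{\delta,k}\beta r^{1-k}\theta_0^{-1-\delta}$, $k\ge2$.

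The heart of the matter is this interior positivity, and I expect it to be the main obstacle. Set $\varrho=r^{-1}\theta_0^{-\delta}\Theta\ge0$. The strategy is to treat the \emph{radial group} of \eqref{eq:15091015} — $A\Theta'A$, $-\tfrac14|\mathrm dr|^4\Theta'''$, the Hessian term $p_i^*\Theta(\nabla^2r)^{ik}p_k$, and $-\mathop{\mathrm{Re}}(|\mathrm dr|^2\Theta'H)$ — \emph{together}, re-expanding via the energy identity \eqref{eq:15091010}: with $g=\theta'\Theta|\mathrm dr|^2\simeq|\mathrm dr|^2\Theta'$ on the interior, the quadratic operator $A\tilde\eta A+L$, hence $2(H-q)$, is reconstituted with weight $g$, the cubic-in-$\theta'$ contributions cancel among themselves (this is where the reordering $AfA=\mathop{\mathrm{Re}}(fA^2)+\tfrac12\nabla^r\nabla^rf$ is used), the spherical Laplacian $L$ cancels against the part supplied by the Hessian term, and one is left with $\mathop{\mathrm{Re}}(g(H-\lambda))+g\bigl(\lambda-q_1+(\text{nonnegative})\bigr)$ up to lower order, the ``nonnegative'' piece being $\sim(\theta')^2|\mathrm dr|^2$ (the shift of the effective spectral parameter under conjugation by $\mathrm e^{\theta/2}$, and of size $O(1)$ when $\alpha>0$, i.e.\ exactly the size of the cubic piece it absorbs). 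The $(H-\lambda)$-part joins $-\mathop{\mathrm{Re}}(\gamma(H-\lambda))$ since $0\le g\le C\mathrm e^\theta$; and since $q=q_1+q_2$ with $\limsup_{r\to\infty}q_1=\lambda_0<\lambda$ and $|q_2|\le Cr^{-1-\rho'}$ by \eqref{eq:13.9.30.6.8} and \eqref{eq:60}, one has $\lambda-q_1\ge\tfrac12(\lambda-\lambda_0)>0$ for $r\ge R_{n_0}$, whence using the bound $g\ge c_1\beta r^{-1}\theta_0^{-\delta}\Theta$ (valid for $r\ge R_{n_0}$ by $R_\nu^{-1}\theta_0^{-1}\le r^{-1}$ and $|\mathrm dr|\ge c$) the positive term $cr^{-1}\theta_0^{-\delta}\Theta$ emerges with $c\sim\beta(\lambda-\lambda_0)$. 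It is essential to keep the four terms of the radial group together before applying \eqref{eq:15091010}: expanding $A\Theta'A$ alone leaves the uncontrollable $-\tfrac12\theta'\Theta|\mathrm dr|^2L$ (since the convexity bound \eqref{eq:15091112} supplies spherical positivity only with the weaker weight $r^{-1}$), while dropping either of the others leaves an $O(\Theta)$ remainder.

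Finally, each of the remaining interior terms — $-\tfrac12\mathop{\mathrm{Im}}(\Theta'(\mathrm dr)_i(\nabla^2r)^{ij}p_j)=-\tfrac14\theta'\Theta\mathop{\mathrm{Im}}((\nabla|\mathrm dr|^2)^jp_j)$ by \eqref{eq:15120921}, $\tfrac12q_\Theta$, $-\mathop{\mathrm{Im}}(q_2\Theta p^r)$, $-\tfrac14\mathop{\mathrm{Im}}(\Theta(\nabla_i\Delta r)\ell^{ij}p_j)$, together with the $k\ge2$ sub-leading pieces of the Leibniz expansion — carries a genuine decay: $r^{-1-\rho'}$ from $q_1,q_2$ by \eqref{eq:60}, $r^{-1-\tau/2}$ from $\nabla|\mathrm dr|^2$ and $\ell^{\bullet i}\nabla_i\Delta r$ by Conditions~\ref{cond:12.6.2.21.13a} and~\ref{cond:12.6.2.21.13bbb}, or an extra power of $r^{-1}$ or $\theta_0^{-1}$ (or a favourable sign) in the Leibniz remainders. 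By Cauchy--Schwarz against the positive forms $\tfrac12A\Theta'A$ and $\tfrac\sigma4p_i^*\varrho|\mathrm dr|^2\ell^{ik}p_k$ with a balanced weight, and converting residual kinetic quantities into $(H-\lambda)$ plus a bounded multiplication via $2H_0=2(H-V)$, each such term is $\le\varepsilon(\text{positive terms})+\mathop{\mathrm{Re}}\langle\gamma'(H-\lambda)\rangle+C_\varepsilon r^{-1-\min\{\rho',\tau/2\}}\mathrm e^\theta$; and $r^{-1-\min\{\rho',\tau/2\}}\mathrm e^\theta\le\varepsilon'r^{-1}\theta_0^{-\delta}\Theta$ for $r\ge R_{n_0}$ precisely because $\delta<\min\{1,\rho',\tau/2\}$ forces $r^{\delta-\min\{\rho',\tau/2\}}\theta_0^{\delta}\to0$. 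Choosing first $\beta$ suitably (small), then $\varepsilon$ small relative to $c$, and finally $n_0$ large, all absorbed errors are dominated by $\tfrac12cr^{-1}\theta_0^{-\delta}\Theta$; and since the interior estimate does not depend on $n>m\ge n_0$ beyond the localisation of the boundary terms, nor on $\nu\ge n_0$ beyond $R_\nu^{-1}\theta_0^{-1}\le r^{-1}$, and $\theta',\theta'',\dots$ are bounded uniformly in $\alpha\in[0,\alpha_0]$, the constants $\beta,c,C,n_0$ may be chosen uniform, completing the plan.
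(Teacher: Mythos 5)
Your plan follows the paper's overall architecture (undo the commutator via Lemma~\ref{lem:undoing-commutators}, insert \eqref{eq:15091015}, route boundary terms to the $\chi_{m-1,m+1}^2$ and $\chi_{n-1,n+1}^2$ contributions, absorb lower-order terms by Cauchy--Schwarz), and your sign-of-$\Theta$ discussion, the $\Theta$-Leibniz splitting, and the boundary treatment of $A\chi_{m,n}'\mathrm e^\theta A$ are all workable. But the heart of the argument — the interior positivity — contains a quantitative gap that the paper's proof is specifically designed to avoid.

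You propose to reconstitute $A\tilde\eta A + L = 2(H - q - \cdots)$ with the \emph{full} radial weight $g = \theta'\Theta|\mathrm dr|^2 \simeq |\mathrm dr|^2\Theta'$, asserting that ``the spherical Laplacian $L$ cancels against the part supplied by the Hessian term.'' This requires extracting $\mathop{\mathrm{Re}}(gL)$ from $p_i^*\Theta(\nabla^2 r)^{ij}p_j$, i.e.\ it needs $\Theta\,\nabla^2 r \ge \tfrac12 g\,\ell = \tfrac12\theta'|\mathrm dr|^2\Theta\,\ell$. But the convexity hypothesis \eqref{eq:13.9.5.3.30}, even in its favourable form \eqref{eq:15091112}, only yields $\nabla^2 r \ge \tfrac12\sigma r^{-1}|\mathrm dr|^2\ell - Cr^{-1-\tau}g$: the spherical positivity available from the Hessian scales as $r^{-1}\Theta$, whereas $\theta'\Theta \ge 2\alpha\Theta$ once $\alpha>0$. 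For $r$ large the two do not match, and the negative term $-\tfrac12 g L = -\tfrac12\theta'|\mathrm dr|^2\Theta L$ swamps whatever the Hessian can donate. Since the lemma must hold uniformly in $\alpha\in[0,\alpha_0]$ for an arbitrary $\alpha_0\ge 0$ — and Proposition~\ref{prop:absence-eigenvalues-1b} genuinely applies it with $\alpha>0$ — this is a blocking gap, not a cosmetic one. You actually half-notice the issue (``the convexity bound supplies spherical positivity only with the weaker weight $r^{-1}$''), but then conclude that keeping the four radial terms together fixes it; it does not, because none of the other three terms supplies any spherical positivity beyond $r^{-1}\Theta$.

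The cure in the paper's proof is precisely to \emph{not} run the energy identity at full weight: it peels off only the small fraction $c_1\tilde\eta r^{-1}\theta_0^{-\delta}\Theta \lesssim r^{-1}\Theta$ of the radial weight — matched to what the Hessian can supply spherically, see \eqref{eq:14.9.17.12.15ffaabbqq} — and this is where the positivity $c_1(\lambda-q_1)r^{-1}\theta_0^{-\delta}\Theta$ (and also the extra $\tfrac18 c_1\eta|\mathrm dr|^2 r^{-1}\theta_0^{-\delta}\theta'^2\Theta$) originates. The large remaining piece $\tfrac12 A\bigl(\theta'-c_1\tilde\eta r^{-1}\theta_0^{-\delta}\bigr)\Theta A$, which is of size $O(\alpha\Theta)$ and cannot go through the energy identity, is instead combined with the cubic $-\tfrac18|\mathrm dr|^4\theta'^3\Theta - \tfrac38|\mathrm dr|^4\theta'\theta''\Theta$ by completing the square as $\tfrac12\bigl(A+\tfrac{\mathrm i}2|\mathrm dr|^2\theta'\bigr)\bigl(\theta'-c_1\tilde\eta r^{-1}\theta_0^{-\delta}\bigr)\Theta\bigl(A-\tfrac{\mathrm i}2|\mathrm dr|^2\theta'\bigr)\ge 0$; see \eqref{eq:14.9.4.18.23ffaabbbbbbaae}. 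Your reordering $AfA = \mathop{\mathrm{Re}}(fA^2)+\tfrac12\nabla^r\nabla^r f$ is a correct identity, but it does not by itself separate which portion of $A\theta'\Theta A$ must be held back from the energy identity; without that split, the $\alpha>0$ case does not close. Incidentally, after your reordering the $\theta'^3$ contribution actually emerges with a \emph{positive} sign $+\tfrac18|\mathrm dr|^4\theta'^3\Theta$, so the claimed mechanism in which a ``nonnegative piece $\sim(\theta')^2|\mathrm dr|^2$ \ldots absorbs'' the cubic is also not quite what occurs; the completing-the-square identity is the cleaner bookkeeping for both phenomena.
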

\begin{remark}\label{rem:190802}
The statement is read in the logical order,
and all the arguments except for those concerning $\gamma$ 
are independent of $\alpha,n,m$ and $\nu$.
We shall later take the limits $n\to\infty$ and $\nu\to\infty$,
but contributions from the left-hand side and the last term
vanish beforehand due to the appearance of the factor  $H-\lambda$.
\end{remark}
\begin{proof}
We are going to prove the lemma 
by computing and bounding the quadratic form on the left-hand side of \eqref{14.9.26.9.53ffaabb}.
Fix $\lambda>\lambda_0$ and $\delta\in (0,\min\{1,\rho',\tau/2\})$ as in the assertion.
To avoid confusion for the moment all the estimates below
are uniform in the parameters $\alpha\ge 0$, $\beta\in [0,1]$, $n>m\ge 0$ and $\nu\ge 0$,
so that constants $c_1,C_1,\dots,C_8>0$ are independent of them.
Then in the last step we shall restrict ranges of these parameters
to obtain the assertion.

Recall the notation \eqref{eq:15091112}.
Then by Lemmas~\ref{lem:undoing-commutators}, \ref{lem:12.7.3.0.45}, 
\eqref{eq:12.6.27.0.43}, the Cauchy--Schwarz inequality
and \eqref{eq:14.12.27.3.22}
we can estimate 
\begin{align}
\begin{split}
&\mathop{\mathrm{Im}}\bigl(A\Theta(H-\lambda)\bigr)
\\&
\ge 
\tfrac12A\theta'\Theta A
+\tfrac12p_i^*\Theta h^{ik}p_k
-\tfrac12\mathop{\mathrm{Im}}\bigl(\theta'\Theta(\mathrm dr)_ih^{ij}p_j\bigr)
\\&\phantom{{}={}}
-\tfrac18|\mathrm dr|^4\theta'^3\Theta
-\tfrac38|\mathrm dr|^4\theta'\theta''\Theta
-\tfrac12\mathop{\mathrm{Re}}\bigl(|\mathrm dr|^2\Theta'(H-\lambda)\bigr)
-C_1Q
\\&
\ge 
\tfrac12c_1A\tilde\eta  r^{-1}\theta_0^{-\delta}\Theta A
+\tfrac12c_1p_i^*r^{-1}\theta_0^{-\delta}\Theta \ell^{ij}p_j
\\&\phantom{{}={}}
+\tfrac12A\bigl(\theta'-c_1\tilde\eta  r^{-1}\theta_0^{-\delta}\bigr)\Theta A
+\tfrac14p_i^*\Theta \bigl(h^{ij}-2c_1r^{-1}\theta_0^{-\delta}\ell^{ij}\bigr)p_j
\\&\phantom{{}={}}
-\tfrac18|\mathrm dr|^4\theta'^3\Theta
-\tfrac38|\mathrm dr|^4\theta'\theta''\Theta
-\tfrac12\mathop{\mathrm{Re}}\bigl(|\mathrm dr|^2\Theta'(H-\lambda)\bigr)
-C_2Q
,
\end{split}\label{eq:14.9.4.18.22ffaabbqq}
\end{align} 
where $c_1>0$ is a small constant such that the 
fourth term on the right-hand side of \eqref{eq:14.9.4.18.22ffaabbqq} is non-negative, 
and we have introduced for simplicity
\begin{align}
\begin{split}
Q&=
\Bigl((1+\alpha^2)r^{-1-\min\{1,\rho',\tau/2\}}\chi_{m,n}
+(1+\alpha^2)|\chi_{m,n}'|+(1+\alpha)|\chi_{m,n}''|
\\&\phantom{{}={}\Bigl(}
+|\chi_{m,n}'''|\Bigr)\mathrm e^\theta
+p_i^*\Bigl(r^{-1-\min\{1,\rho',\tau/2\}}\chi_{m,n}+|\chi_{m,n}'|\Bigr)\mathrm e^\theta g^{ij}p_j.
\end{split}\label{eq:15091119}
\end{align}
Let us further compute and estimate the terms on the right-hand side
of \eqref{eq:14.9.4.18.22ffaabbqq}.
Using the expressions \eqref{eq:12.6.27.0.43} and \eqref{eq:15091010}
we estimate the first and second terms of \eqref{eq:14.9.4.18.22ffaabbqq} by
\begin{align}
\begin{split}
&\tfrac12A\tilde\eta  r^{-1}\theta_0^{-\delta}\Theta A
+\tfrac12p_i^*r^{-1}\theta_0^{-\delta}\Theta \ell^{ij}p_j
\\&
\ge 
\tfrac12\mathop{\mathrm{Im}}
\bigl(\eta r^{-1}\theta_0^{-\delta}\theta'\Theta A\bigr)
+
\tfrac12\mathop{\mathrm{Re}}
\Bigl[
r^{-1}\theta_0^{-\delta}\Theta A\tilde\eta  A
+r^{-1}\theta_0^{-\delta}\Theta L \Bigr]
-C_3Q
\\&
\ge 
(\lambda-q_1) r^{-1}\theta_0^{-\delta}\Theta 
+\tfrac14\eta|\mathrm dr|^2r^{-1}\theta_0^{-\delta}\theta'^2\Theta
+\mathop{\mathrm{Re}}
\Bigl[r^{-1}\theta_0^{-\delta}\Theta (H-\lambda)\Bigr]
-C_4Q
.
\end{split}\label{eq:14.9.17.12.15ffaabbqq}
\end{align}
We combine the third, fifth and sixth terms of \eqref{eq:14.9.4.18.22ffaabbqq} as 
\begin{align}
\begin{split}
&
\tfrac12 A \bigl(\theta'-c_1 \tilde\eta r^{-1}\theta_0^{-\delta}\bigr)\Theta A
-\tfrac1{8}|\mathrm dr|^4\theta'^3\Theta
-\tfrac38|\mathrm dr|^4\theta'\theta''\Theta
\\&
\ge 
\tfrac12
\bigl(A+\tfrac{\mathrm i}2|\mathrm dr|^2\theta'\bigr)
\bigl(\theta'-c_1 \tilde\eta r^{-1}\theta_0^{-\delta}\bigr)\Theta 
\bigl(A-\tfrac{\mathrm i}2|\mathrm dr|^2\theta'\bigr)
\\&\phantom{{}={}}
-\tfrac18c_1 \eta|\mathrm dr|^2r^{-1}\theta_0^{-\delta}\theta'^2\Theta 
+\tfrac18|\mathrm dr|^4\theta'\theta''\Theta
-C_5Q
.
\end{split}\label{eq:14.9.4.18.23ffaabbbbbbaae}
\end{align}
Substitute \eqref{eq:14.9.17.12.15ffaabbqq} and \eqref{eq:14.9.4.18.23ffaabbbbbbaae}
into \eqref{eq:14.9.4.18.22ffaabbqq}, 
and then it follows that 
\begin{align}
\begin{split}
\mathop{\mathrm{Im}}\bigl(A\Theta(H-\lambda)\bigr)
&\ge
c_1 (\lambda-q_1) r^{-1}\theta_0^{-\delta}\Theta 
+\tfrac18c_1\eta |\mathrm dr|^2r^{-1}\theta_0^{-\delta}\theta'^2\Theta
+\tfrac1{8}|\mathrm dr|^4\theta'\theta''\Theta
\\&\phantom{{}={}}
+\tfrac12
\bigl(A+\tfrac{\mathrm i}2|\mathrm dr|^2\theta'\bigr)
\bigl(\theta'-c_1 \tilde\eta r^{-1}\theta_0^{-\delta}\bigr)\Theta 
\bigl(A-\tfrac{\mathrm i}2|\mathrm dr|^2\theta'\bigr)
\\&\phantom{{}={}}
+\mathop{\mathrm{Re}}\Bigl[
\bigl(c_1 r^{-1}\theta_0^{-\delta}\Theta
-\tfrac12|\mathrm dr|^2\Theta'\bigr) (H-\lambda)\Bigr]
-C_6Q
.
\end{split}\label{eq:150916}
\end{align}
Using the formula \eqref{eq:13.9.7.9.32} we rewrite and bound the remainder operator
\eqref{eq:15091119} as
\begin{align}
\begin{split}
Q
&\le 
C_7(1+\alpha^2)r^{-1-\min\{1,\rho',\tau/2\}}\Theta
+C_7(1+\alpha^2)\bigl(\chi_{m-1,m+1}^2+\chi_{n-1,n+1}^2\bigr)r^{-1}\mathrm e^\theta
\\&\phantom{{}={}}
+2\mathop{\mathrm{Re}}\Bigl[\bigl(r^{-1-\min\{1,\rho',\tau/2\}}\chi_{m,n}
+|\chi_{m,n}'|\bigr)\mathrm e^\theta (H-\lambda)\Bigr]
.
\end{split}
\label{eq:15090983}
\end{align}
Hence we obtain by \eqref{eq:150916} and \eqref{eq:15090983}
\begin{align}
\begin{split}
&\mathop{\mathrm{Im}}\bigl(A\Theta(H-\lambda)\bigr)
\\&
\ge
\Bigl(c_1 (\lambda-q_1) r^{-1}\theta_0^{-\delta}
+\tfrac18c_1 |\mathrm dr|^2r^{-1}\theta_0^{-\delta}\theta'^2
+\tfrac18|\mathrm dr|^4\theta'\theta''
\\&\phantom{{}={}\Bigl(}
-C_8(1+\alpha^2)r^{-1-\min\{1,\rho',\tau/2\}}\Bigr)\Theta
\\&\phantom{{}={}}
+\tfrac12
\bigl(A+\tfrac{\mathrm i}2|\mathrm dr|^2\theta'\bigr)
\bigl(\theta'-c_1 \tilde\eta r^{-1}\theta_0^{-\delta}\bigr)\Theta 
\bigl(A-\tfrac{\mathrm i}2|\mathrm dr|^2\theta'\bigr)
\\&\phantom{{}={}}
-C_8(1+\alpha^2)\bigl(\chi_{m-1,m+1}^2+\chi_{n-1,n+1}^2\bigr)r^{-1}\mathrm e^\theta
+\mathop{\mathrm{Re}}\bigl(\gamma (H-\lambda)\bigr)
,
\end{split}\label{eq:14.9.4.18.23ffaabbbbggqq}
\end{align}
where 
\begin{align*}
\gamma
=
c_1 r^{-1}\theta_0^{-\delta}\Theta
-\tfrac12|\mathrm dr|^2\Theta'
-2C_6r^{-1-\min\{1,\rho',\tau/2\}}\Theta
-2C_6|\chi_{m,n}'|\mathrm e^\theta
.
\end{align*}

Now we restrict parameters.
Fix any $\alpha_0\ge 0$.
Then uniformly in $\alpha\in [0,\alpha_0]$
\begin{align*}
&c_1 (\lambda-q_1) r^{-1}\theta_0^{-\delta}
+\tfrac18c_1 |\mathrm dr|^2r^{-1}\theta_0^{-\delta}\theta'^2
+\tfrac18|\mathrm dr|^4\theta'\theta''
-C_8(1+\alpha^2)r^{-1-\min\{1,\rho',\tau/2\}}
\\&
\ge 
c_2r^{-1}\theta_0^{-\delta}
-C_9\beta r^{-1}\theta_0^{-1-\delta}
-C_9r^{-1-\min\{1,\rho',\tau/2\}},
\end{align*}
and hence, if we choose sufficiently small $\beta\in(0,1]$ and sufficiently large $n_0\ge 0$,
the first term of \eqref{eq:14.9.4.18.23ffaabbbbggqq} is bounded below
uniformly in $\alpha\in[0,\alpha_0]$, $n>m\ge n_0$ and $\nu\ge 0$ as
\begin{align*}
&\Bigl(
c_1 (\lambda-q_1) r^{-1}\theta_0^{-\delta}
+\tfrac18c_1 |\mathrm dr|^2r^{-1}\theta_0^{-\delta}\theta'^2
+\tfrac18|\mathrm dr|^4\theta'\theta''
-C_8(1+\alpha^2)r^{-1-\min\{1,\rho',\tau/2\}}
\Bigr)
\Theta
\\&
\ge c_3r^{-1}\theta_0^{-\delta}\Theta.
\end{align*}
Since 
\begin{align*}
\theta'-c_1 \tilde\eta r^{-1}\theta_0^{-\delta}
\ge 2\beta\theta_0^{-1-\delta}-C_{10}r^{-1}\theta_0^{-\delta},
\end{align*}
by retaking $n_0\ge 0$ larger, if necessary,
the second term of \eqref{eq:14.9.4.18.23ffaabbbbggqq} 
is non-negative for any $\alpha\in[0,\alpha_0]$, $n>m\ge n_0$ and $\nu\ge n_0$.
Hence the desired estimate \eqref{14.9.26.9.53ffaabb} follows.
\end{proof}

\begin{proof}[Proof of Proposition~\ref{prop:absence-eigenvalues-1b}]
Let $\lambda>\lambda_0$, $\phi\in  \mathcal H_{\mathrm{loc}}$ and $m_0\in \N$
be as in the assertion, and set
\begin{align*}
\alpha_0=\sup\{\alpha\ge 0\,|\,{\bar\chi_{m_0}}\mathrm e^{\alpha r}\phi\in B_0^*\}.
\end{align*}
Assume $\alpha_0<\infty$, and we shall find a contradiction.
Fix any $\delta\in (0,\min\{1,\rho',\tau/2\})$,
and choose $\beta>0$ and $n_0\ge 0$
in agreement with Lemma~\ref{lem:14.10.4.1.17ffaabb}.
Note that we may assume $n_0\ge m_0+3$,
so that for all $n>m\ge n_0$
$$\chi_{m-2,n+2}\phi\in \mathcal D(H).$$
If $\alpha_0=0$, let $\alpha=0$ so that automatically $\alpha+\beta>\alpha_0$.
Otherwise, we choose $\alpha\in[0,\alpha_0)$ such that $\alpha+\beta>\alpha_0$.
With these values of $\alpha$ and $\beta$
we take the expectation of both sides of the inequality \eqref{14.9.26.9.53ffaabb}
in the state $\chi_{m-2,n+2}\phi\in \mathcal D(H)$.
Noting that $\Theta(H-\lambda)\chi_{m-2,n+2}\phi=0$ and
$\gamma(H-\lambda)\chi_{m-2,n+2}\phi=0$  we obtain that  
for any $n>m\ge n_0$ and $\nu\ge n_0$
\begin{align}
\begin{split}
\bigl\|(r^{-1}\theta_0^{-\delta}\Theta)^{1/2}\phi\bigr\|^2
&
\le 
C_m\|\chi_{m-1,m+1}\phi\|^2
+C_\nu R_n^{-1}\|\chi_{n-1,n+1}\mathrm e^{\alpha r}\phi\|^2
.
\end{split}
\label{eq:11.7.16.3.22a}
\end{align}
The second term to the right of (\ref{eq:11.7.16.3.22a})
  vanishes when  $n\to\infty$ since 
$\bar\chi_{m_0}\mathrm e^{\alpha r}\phi\in B^*_0$, 
and consequently by Lebesgue's monotone  convergence
theorem
\begin{align}
\bigl\|(\bar\chi_m r^{-1}\theta_0^{-\delta}
\mathrm{e}^{\theta})^{1/2}\phi\bigr\|^2
 &\le 
C_m\|\chi_{m-1,m+1}\phi\|^2.
\label{eq:11.7.16.3.43a}
\end{align}
Next we let $\nu \to\infty$ in \eqref{eq:11.7.16.3.43a}
    invoking again Lebesgue's monotone  convergence
theorem,
and then it follows that 
 $$\bar\chi_m^{1/2} r^{-1/2}\mathrm e^{(\alpha+\beta) r}\phi\in \mathcal H.$$ 
Consequently this implies $\bar\chi_m^{1/2}\mathrm e^{\kappa r}\phi\in B_0^*$
for any $\kappa\in(0,\alpha+\beta)$. 
But this is a contradiction, since $\alpha+\beta>\alpha_0$. 
We are done.
\end{proof}

\subsection{Absence of super-exponentially decaying eigenstates}\label{subsec:150909513}
In this subsection we prove Proposition ~\ref{prop:absence-eigenvalues-1bb}.
{A remark similar to Remark~\ref{rem:190802} applies to the following.}

\begin{lemma}\label{lem:14.10.4.1.17ffaabbqqq}
Let $\lambda>\lambda_0$ and $\alpha_0>0$, 
and fix $\beta=0$ in the definition \eqref{eq:15.2.15.5.8bb} of $\Theta$,
i.e.\ $\Theta=\chi_{m,n}\mathrm e^{2\alpha r}$.
Then there exist $c,C>0$ and $n_0\ge 0$ such that uniformly in 
$\alpha>\alpha_0$ and $n>m\ge n_0$,  
as quadratic forms on $\mathcal D(H)$,
\begin{align}
\begin{split}
\mathop{\mathrm{Im}}\bigl(A\Theta (H-\lambda)\bigr)
&\ge 
c\alpha^2 r^{-1}\Theta
-C\alpha^2\bigl(\chi_{m-1,m+1}^2+\chi_{n-1,n+1}^2\big)r^{-1}\mathrm e^{2\alpha r}
\\&\phantom{={}}
+\mathop{\mathrm{Re}}\bigl(\gamma(H-\lambda)\bigr),
\end{split}
\label{14.9.26.9.53ffaabbqqqq}
\end{align}
where $\gamma=\gamma_{m,n}$ is a certain function
satisfying $\mathop{\mathrm{supp}}\gamma\subseteq\mathop{\mathrm{supp}}\chi_{m,n}$ 
and {$|\gamma|\le C_{m,n}\alpha\mathrm e^{2\alpha r}$.}
\end{lemma}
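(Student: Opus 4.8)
The plan is to rerun the proof of Lemma~\ref{lem:14.10.4.1.17ffaabb}, specialized to $\beta=0$ and with the exponent $\alpha$ kept as a free parameter; since $\beta=0$ there is no regularization parameter $\nu$ at all, the weight being simply $\Theta=\chi_{m,n}\mathrm e^\theta$ with $\theta=2\alpha r$, so that $\theta'=2\alpha$ and $\theta''=\theta'''=0$. Where the proof of Lemma~\ref{lem:14.10.4.1.17ffaabb} split off the fraction $c_1r^{-1}\theta_0^{-\delta}$ of the spherical positivity of $h$, here one simply uses the plain weight $c_1r^{-1}$. Starting from Lemma~\ref{lem:undoing-commutators}, the identity \eqref{eq:15091015}, the expression \eqref{eq:12.6.27.0.43}, the Cauchy--Schwarz inequality and \eqref{eq:14.12.27.3.22}, then completing the radial square and reorganizing by means of \eqref{eq:15091010} and \eqref{eq:13.9.7.9.32} exactly as in the passage \eqref{eq:14.9.4.18.22ffaabbqq}--\eqref{eq:150916}, one obtains the analogue of \eqref{eq:14.9.4.18.23ffaabbbbggqq}: uniformly in $\alpha>0$ and $n>m\ge 0$, as quadratic forms on $\mathcal D(H)$,
\begin{align*}
\mathop{\mathrm{Im}}\bigl(A\Theta(H-\lambda)\bigr)
&\ge \bigl(c_1(\lambda-q_1)+\tfrac12c_1\eta|\mathrm dr|^2\alpha^2-C(1+\alpha^2)r^{-\min\{1,\rho',\tau/2\}}\bigr)r^{-1}\Theta\\
&\phantom{{}\ge{}}+\tfrac12\bigl(A+\tfrac{\mathrm i}2|\mathrm dr|^2\theta'\bigr)\bigl(\theta'-c_1\tilde\eta r^{-1}\bigr)\Theta\bigl(A-\tfrac{\mathrm i}2|\mathrm dr|^2\theta'\bigr)\\
&\phantom{{}\ge{}}-C(1+\alpha^2)\bigl(\chi_{m-1,m+1}^2+\chi_{n-1,n+1}^2\bigr)r^{-1}\mathrm e^{2\alpha r}+\mathop{\mathrm{Re}}\bigl(\gamma(H-\lambda)\bigr),
\end{align*}
where $\supp\gamma\subseteq\supp\chi_{m,n}$ and $|\gamma|\le C\alpha\mathrm e^{2\alpha r}$, the single power of $\alpha$ coming from $\Theta'=(\chi_{m,n}'+2\alpha\chi_{m,n})\mathrm e^{2\alpha r}$ in \eqref{eq:15091015}.

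The point is that with $\theta'=2\alpha$ the gradient term $\tfrac12c_1\eta|\mathrm dr|^2\alpha^2r^{-1}\Theta$ is genuinely of order $\alpha^2r^{-1}\Theta$: by Condition~\ref{cond:12.6.2.21.13} one has $\eta=1$ and $|\mathrm dr|^2$ bounded below by a positive constant on $\{r>r_0\}$, so this term is $\ge c'\alpha^2r^{-1}\Theta$ provided $\supp\Theta\subseteq\{r>r_0\}$, i.e.\ for $m\ge n_0$ with $R_{n_0}\ge r_0$. The completed radial square is $\ge 0$ as soon as $\theta'-c_1\tilde\eta r^{-1}=2\alpha-c_1\tilde\eta r^{-1}\ge 0$; since $\tilde\eta$ is bounded on $\{r\ge R_{n_0}\}$ and $\alpha>\alpha_0>0$, this holds there once $R_{n_0}$ is large, uniformly in $\alpha>\alpha_0$. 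Finally $\lambda-q_1\ge 0$ on $\{r\ge R_{n_0}\}$ for $R_{n_0}$ large, because $\lambda>\lambda_0=\limsup_{r\to\infty}q_1$ by \eqref{eq:13.9.30.6.8}; hence on $\{r\ge R_{n_0}\}$ the bracketed factor is $\ge(c'\alpha^2-C(1+\alpha^2)r^{-\min\{1,\rho',\tau/2\}})r^{-1}$.

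It remains to absorb the two remainders. Since $1+\alpha^2\le(1+\alpha_0^{-2})\alpha^2$ for $\alpha>\alpha_0$, on $\{r\ge R_{n_0}\}$ one has $C(1+\alpha^2)r^{-\min\{1,\rho',\tau/2\}}\le C(1+\alpha_0^{-2})R_{n_0}^{-\min\{1,\rho',\tau/2\}}\alpha^2$, which is $\le\tfrac12c'\alpha^2$ once $R_{n_0}$ is chosen large enough — and this choice is uniform in $\alpha>\alpha_0$. Thus the $\Theta$-terms together are $\ge c\alpha^2r^{-1}\Theta$ with $c=\tfrac12c'$. The localization remainder already has the asserted form: on the supports of $\chi_{m,n}'$, $\chi_{m,n}''$, $\chi_{m,n}'''$ one has $r\sim R_m$ or $r\sim R_n$, and the highest power of $\alpha$ generated is $\alpha^2$ (from $\theta'^2$), the $\chi_{m,n}'''$-contribution carrying no $\alpha$; the operator-valued pieces $p_i^*(\cdots)p_j$ entering the remainder are first converted into multiples of $H-\lambda$ by \eqref{eq:13.9.7.9.32} and moved into $\gamma$, exactly as in \eqref{eq:15091119}--\eqref{eq:15090983}. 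Collecting everything and replacing $\gamma$ by $-\gamma$ yields \eqref{14.9.26.9.53ffaabbqqqq}.

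There is no real obstacle here beyond careful bookkeeping, the argument being a direct specialization of the (harder) Lemma~\ref{lem:14.10.4.1.17ffaabb}. The single new point is uniformity in $\alpha$ over the unbounded range $\alpha>\alpha_0$: every error term is $O((1+\alpha^2)r^{-1-\min\{1,\rho',\tau/2\}})$ or supported in the two transition annuli, so it is dominated by the gain $c\alpha^2r^{-1}\Theta$ with a single threshold $n_0$ precisely because $\alpha^2/(1+\alpha^2)$ is bounded below on $(\alpha_0,\infty)$. No geometric input beyond \eqref{eq:15091112}, \eqref{eq:14.12.27.3.22} and $\lambda>\lambda_0$ is needed.
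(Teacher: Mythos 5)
Your proof is correct and follows essentially the same approach as the paper, which simply invokes the bound \eqref{eq:14.9.4.18.23ffaabbbbggqq} (uniform in $\alpha\ge 0$, $\beta\in[0,1]$, $\nu\ge 0$) from the proof of Lemma~\ref{lem:14.10.4.1.17ffaabb}, then sets $\beta=0$, lets $\nu\to\infty$, and chooses $n_0$ large. You have carried out the same specialization explicitly and correctly identified the one new point — uniformity over the unbounded range $\alpha>\alpha_0$ via boundedness of $(1+\alpha^2)/\alpha^2$ — so the argument is sound.
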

\begin{proof}
Fix any $\lambda>\lambda_0$ and $\delta\in (0,\min\{1,\rho',\tau/2\})$.
Then, repeating the arguments of the proof of Lemma~\ref{lem:14.10.4.1.17ffaabb},
we have the bound \eqref{eq:14.9.4.18.23ffaabbbbggqq} 
uniformly in $\alpha\ge 0$, $\beta\in[0,1]$, $n>m\ge 0$ and $\nu\ge 0$.
There we fix any $\alpha_0>0$,
let $\beta=0$ and $\nu\to\infty$, 
and choose sufficiently large $n_0\ge 0$.
Consequently we can easily verify
the asserted inequality \eqref{14.9.26.9.53ffaabbqqqq} 
uniformly in $\alpha>\alpha_0$ and $n>m\ge n_0$. 
Hence we are done.
\end{proof}

\begin{proof}[Proof of Proposition~\ref{prop:absence-eigenvalues-1bb}]
Let $\lambda>\lambda_0$, $\phi\in \mathcal H_{\mathrm{loc}}$ and $m_0\in \N$
be fixed as in the proposition.
Fix any $\alpha_0>0$ and $\beta=0$, 
and choose  $n_0\ge 0$ in agreement with
Lemma~\ref{lem:14.10.4.1.17ffaabbqqq}. We may assume that $ n_0\geq m_0+3$,
so that for all $n>m\ge n_0$
\begin{align*}
\chi_{m-2,n+2}\phi\in\mathcal D(H).
\end{align*}
Let us evaluate the inequality \eqref{14.9.26.9.53ffaabbqqqq} 
in the state $\chi_{m-2,n+2}\phi\in\mathcal D(H)$.
Then it follows that for any $\alpha>\alpha_0$ and $n>m\ge n_0$
\begin{align}
\|r^{-1/2}\Theta^{1/2}\phi\|^2
\leq C_1\|\chi_{m-1,m+1}\mathrm e^{\alpha r}\phi\|^2
+C_1 R_n^{-1}\|\chi_{n-1,n+1}\mathrm e^{\alpha r}\phi\|^2.
\label{eq:11.10.23.12.54}
\end{align}
The second term to the right of \eqref{eq:11.10.23.12.54} 
vanishes when $n\to\infty$, 
and hence by Lebesgue's monotone  convergence
theorem we obtain
\begin{align*}
\bigl\|\bar\chi_m^{1/2} r^{-1/2}\mathrm e^{\alpha r}\phi\bigr\|^2
&
\le 
C_1\|\chi_{m-1,m+1}\mathrm e^{\alpha r}\phi\|^2,
\end{align*}
or 
\begin{align}
\bigl\|\bar\chi_m^{1/2} r^{-1/2}\mathrm e^{\alpha (r-4R_m)}\phi\bigr\|^2
&
\le 
C_1\|\chi_{m-1,m+1}\phi\|^2
.
\label{eq:11.7.16.3.43qq}
\end{align}

Now assume $\bar\chi_{m+2}\phi\not\equiv 0$.
The left-hand side of \eqref{eq:11.7.16.3.43qq}
 grows exponentially as $\alpha\to\infty$
whereas the right-hand side remains bounded.
This is a contradiction.
Thus $\bar\chi_{m+2}\phi\equiv 0$.
By invoking the unique continuation property for the second order
elliptic operator $H$ we conclude that $\phi\equiv 0$ globally on $M$. 
We refer to \cite{Wo} for the unique continuation property for $d\ge 2$.
The case $d=1$ is due to uniqueness of a solution to an ordinary differential equation.
\end{proof}

\section{Besov bound}\label{sec:besov-bound-resolv}

In this section we discuss the locally uniform Besov bound
for the resolvent $R(z)$, and prove Theorem~\ref{thm:12.7.2.7.9}.
A key to the proof is a kind of single commutator
estimate with weight inside stated in Lemma~\ref{lem:14.10.4.1.17ffaa},
or its direct consequence, Proposition~\ref{prop:12.7.2.7.9}.

Here let us state a slightly technical but main proposition of the section
that follows from  Lemma~\ref{lem:14.10.4.1.17ffaa}.
We introduce the regularized weight
\begin{align}
\Theta=\Theta_\nu^\delta=\int_0^{r/R_\nu}(1+s)^{-1-\delta}\,\mathrm ds
=\bigl[1-(1+r/R_\nu)^{-\delta}\bigr]\big/\delta
;\quad \delta>0,\  \nu\ge 0,
\label{eq:15.2.15.5.8}
\end{align}
and compute   derivatives in $r$:
\begin{align}
\Theta'=(1+r/R_\nu)^{-1-\delta}\big/R_\nu,\quad
\Theta''=-(1+\delta)(1+r/R_\nu)^{-2-\delta}\big/R_\nu^2.
\label{eq:15.2.15.5.9}
\end{align}
Recall the notation defined right before Theorem~\ref{thm:12.7.2.7.9}.

\begin{proposition}\label{prop:12.7.2.7.9}
Suppose Conditions~\ref{cond:12.6.2.21.13}--\ref{cond:10.6.1.16.24} 
let $I\subseteq \mathcal I$ be a compact interval,
and fix any $\delta\in (0, \min\{1,\rho',\tau/2\})$ 
in the definition \eqref{eq:15.2.15.5.8} of $\Theta$.
Then
there exist $C>0$ and $n\ge 0$ such that 
for all $\phi=R(z)\psi$ with $z\in I_\pm$ and $\psi\in B$ and for all $\nu\ge 0$ 
\begin{align}
\begin{split}
&\|\Theta'{}^{1/2}\phi\|^2
+\|\Theta'{}^{1/2}A\phi\|^2
+\langle p_i^*\Theta h^{ij}p_j\rangle_\phi
\\&\le C\Bigl(\|\phi\|_{B^*}\|\psi\|_B
+\|A\phi\|_{B^*}\|\psi\|_B
+\|\chi_n \Theta^{1/2}\phi\|^2\Bigr).
\end{split}\label{eq:13.8.22.4.59cc}
\end{align}
\end{proposition}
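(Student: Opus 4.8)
The plan is to derive \eqref{eq:13.8.22.4.59cc} by evaluating a weighted commutator inequality of the type produced in Lemma~\ref{lem:undoing-commutators} in the state $\phi=R(z)\psi$, with the specific non-negative weight $\Theta=\Theta_\nu^\delta$ of \eqref{eq:15.2.15.5.8}. The starting point is the algebraic identity, valid on $\mathcal D(H)$ for $z=\lambda\pm\mathrm i\Gamma$,
\begin{align*}
2\mathop{\mathrm{Im}}\bigl\langle A\Theta(H-z)\bigr\rangle_\phi
=2\mathop{\mathrm{Im}}\bigl\langle A\Theta\psi\bigr\rangle_\phi\mp 2\Gamma\mathop{\mathrm{Re}}\bigl\langle A\Theta\rangle_\phi,
\end{align*}
so that, combined with the lower bound $[H,\mathrm iA]_\Theta\le 2\mathop{\mathrm{Im}}(A\Theta H)$ from Lemma~\ref{lem:undoing-commutators} and the explicit computation of $[H,\mathrm iA]_\Theta$ in Lemma~\ref{lem:12.7.3.0.45}, the quadratic form $\langle[H,\mathrm iA]_\Theta\rangle_\phi$ — whose main positive contributions are $\|\Theta'{}^{1/2}A\phi\|^2$, $\langle p_i^*\Theta h^{ij}p_j\rangle_\phi$ and, after using the ellipticity/Riccati bookkeeping, a multiple of $\|\Theta'{}^{1/2}\phi\|^2$ — is bounded above by terms of the form $\|A\Theta\psi\|\,\|\phi\|$, $\Gamma|\langle A\Theta\rangle_\phi|$, plus lower-order remainders. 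The key point is that $\mp\Gamma\mathop{\mathrm{Re}}\langle A\Theta\rangle_\phi$ has the \emph{good} sign up to a harmless piece: $\mp 2\Gamma\mathop{\mathrm{Re}}\langle A\Theta\rangle_\phi=\mp\Gamma\langle\Theta'|\mathrm dr|^2\rangle_\phi\mp 2\Gamma\langle\mathop{\mathrm{Re}}(A\Theta)\rangle_\phi$ is, for $z\in I_\pm$, bounded by $C\|\chi_n\Theta^{1/2}\phi\|^2$ or absorbable, because $\Gamma>0$ and the dangerous term is $\mp\Gamma\langle|\mathrm dr|^2\Theta'\rangle_\phi\le 0$ in the $+$ case; the remaining $\Gamma$-term is controlled using $\Gamma\|\phi\|^2=\mathop{\mathrm{Im}}\langle\psi\rangle_\phi\le\|\psi\|_B\|\phi\|_{B^*}$ after inserting appropriate cutoffs.

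Concretely, the steps I would carry out are: (1) substitute the expression \eqref{eq:15091015} for $[H,\mathrm iA]_\Theta$ and isolate the three target positive terms; the term $-\mathop{\mathrm{Re}}(|\mathrm dr|^2\Theta'H)$ is rewritten as $-\mathop{\mathrm{Re}}(|\mathrm dr|^2\Theta'(H-z))-z\mathop{\mathrm{Re}}(|\mathrm dr|^2\Theta')$ and the first summand combined with the $\psi$-term on the right. (2) Use $0\le\Theta\le 1/\delta$ and the explicit derivative bounds \eqref{eq:15.2.15.5.9}, together with $R_\nu^{-1}(1+r/R_\nu)^{-1}\le r^{-1}$, to control all remainders: the spherical terms $|\ell^{\bullet i}\nabla_i\Delta r|$, $|\nabla^r q_1|$, $q_2$ etc. decay like $r^{-1-\epsilon}$ by Conditions~\ref{cond:12.6.2.21.13a}--\ref{cond:10.6.1.16.24}, so against the weight $\Theta$ they are dominated by $\epsilon_0\langle p_i^*\Theta h^{ij}p_j\rangle_\phi+\epsilon_0\|\Theta'{}^{1/2}\phi\|^2$ plus a $B$--$B^*$ pairing (using that $r^{-1}\Theta\le Cr^{-1}$ and $\|r^{-1/2}\chi\phi\|$-type quantities are bounded by $\|\phi\|_{B^*}$). (3) Handle the commutator "boundary" inequality: Lemma~\ref{lem:undoing-commutators} gives exactly the inequality in the needed direction, so no boundary term obstructs us. (4) Absorb all terms involving $A\phi$ and $\phi$ that carry a full weight $\Theta'$ or $\Theta$ into the left side (choosing $\delta$ small and the constants accordingly), leaving on the right precisely $\|\phi\|_{B^*}\|\psi\|_B+\|A\phi\|_{B^*}\|\psi\|_B+\|\chi_n\Theta^{1/2}\phi\|^2$, where $\chi_n$ with $n$ fixed large captures the region $r\lesssim R_n$ where $\Theta'$ is not small and the ellipticity of $h$ (see \eqref{eq:15091112}) must be supplemented by a compactly-supported lower-order error — this is where the $\|\chi_n\Theta^{1/2}\phi\|^2$ term originates. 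Note also the bound $\langle p_i^*\Theta h^{ij}p_j\rangle_\phi$ and $\|\Theta'{}^{1/2}A\phi\|^2$ together dominate $\|H_0\Theta^{1/2}\phi\|$-type quantities only after the full argument of the next section, so here we keep only what \eqref{eq:13.8.22.4.59cc} asks.

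The main obstacle I anticipate is the bookkeeping needed to convert the "bad" (indefinite) lower-order terms in \eqref{eq:15091015} — principally $-2\mathop{\mathrm{Im}}(q_2\Theta p^r)$, $-\tfrac12\mathop{\mathrm{Im}}(\Theta(\nabla_i\Delta r)\ell^{ij}p_j)$ and the $\mathop{\mathrm{Im}}(\Theta'(\mathrm dr)_i(\nabla^2r)^{ij}p_j)$ term — into a form that is genuinely absorbable, rather than merely "small". The trick, as in the Rellich section, is to complete squares: e.g. $-2\mathop{\mathrm{Im}}(q_2\Theta p^r)\ge -\epsilon_0\|\Theta^{1/2}|\mathrm dr|p^r\phi\|^2-\epsilon_0^{-1}\|q_2\Theta^{1/2}\phi\|^2$, and then observe that $\|\Theta^{1/2}|\mathrm dr|p^r\phi\|^2\lesssim\langle p_i^*\Theta h^{ij}p_j\rangle_\phi$ is false in general (the radial direction is missing from $h$!), so one must instead pair $p^r$ with $A\phi$: $p^r\phi=A\phi+\tfrac{\mathrm i}{2}(\Delta r)\phi$ by \eqref{eq:12.6.27.0.43}, feeding $\|\Theta'{}^{1/2}A\phi\|^2$ on the left and a $\Delta r$-remainder that is again lower order. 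Getting the constants to close simultaneously across all these completions of square — while keeping $\delta$ fixed and the cutoff index $n$ fixed but large enough that the $r^{-1-\epsilon}$ remainders beat the $r^{-1}$ weight in the region $r\ge R_n$ — is the delicate part; everything else is the routine substitution of Conditions~\ref{cond:12.6.2.21.13a}--\ref{cond:10.6.1.16.24} and Lemma~\ref{lem:151209}.
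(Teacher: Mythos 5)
Your overall plan is the same as the paper's: the paper proves the operator inequality of Lemma~\ref{lem:14.10.4.1.17ffaa} (via Lemma~\ref{lem:undoing-commutators}, the explicit expansion \eqref{eq:15091015} of Lemma~\ref{lem:12.7.3.0.45}, the weight estimates of Lemma~\ref{lem:15.2.18.14.28}, and several completions of squares), and the proposition then follows in one line by evaluating that inequality in the state $\phi=R(z)\psi$. You skip the intermediate lemma and evaluate directly, but the mechanism — including your recognition that the spherical part of the commutator is recovered through $h$ while the radial direction must be fed into $\|\Theta'{}^{1/2}A\phi\|^2$, and that $\|\Theta'{}^{1/2}\phi\|^2$ appears by reinserting $H-z$ through the identity $A\tilde\eta A+L=2(H-z)-2q-\cdots$ — is what the paper does.

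Two points need correction, however. First, your opening ``algebraic identity'' is wrong as stated: with $(H-z)\phi=\psi$ the left-hand side $2\mathop{\mathrm{Im}}\langle A\phi,\Theta(H-z)\phi\rangle$ already equals $2\mathop{\mathrm{Im}}\langle A\phi,\Theta\psi\rangle$ with no extra $\Gamma$-term, so as written the identity forces $\Gamma\mathop{\mathrm{Re}}\langle A\phi,\Theta\phi\rangle=0$. The $\Gamma$-term arises only if you expand the \emph{real} spectral parameter form $\mathop{\mathrm{Im}}(A\Theta(H-\lambda))$, in which case the correct sign is $\pm$, not $\mp$; and your subsequent claim that ``the dangerous term is $\mp\Gamma\langle|\mathrm dr|^2\Theta'\rangle_\phi\le 0$'' is not the mechanism — $\mathop{\mathrm{Re}}$ of a purely imaginary multiple of a real function vanishes identically. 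The paper instead bounds $\mp\Gamma\Theta^{1/2}A\Theta^{1/2}$ via $\pm A\le C(H-\lambda)+C$ and converts $\Gamma$ into $\mp\mathop{\mathrm{Im}}(H-z)$, all packaged as $-\mathop{\mathrm{Re}}(\gamma(H-z))$ with $|\gamma|\le C$, cf.\ \eqref{eq:14.10.3.2.16ffaa}; your alternative $\Gamma\|\phi\|^2=|\mathop{\mathrm{Im}}\langle\phi,\psi\rangle|\le\|\psi\|_B\|\phi\|_{B^*}$ also works but is a different (state-dependent) route. Second, you say ``choosing $\delta$ small'': $\delta$ is a fixed parameter in the statement and must not be retaken — the closing of constants must use only the Cauchy--Schwarz parameters $\epsilon$ and the cutoff index $n$, together with the lower bound $\Theta'\ge c(\min\{R_\nu,r\})^\delta r^{-1-\delta}\Theta$ from Lemma~\ref{lem:15.2.18.14.28} and the strict inequality $\delta<\min\{1,\rho',\tau/2\}$ ensuring the $r^{-1-\epsilon_0}\Theta$ remainders are dominated by $\Theta'$ for $r\ge R_n$, which is exactly where the $\|\chi_n\Theta^{1/2}\phi\|^2$ term is generated, cf.\ \eqref{eq:14.9.17.12.17ffaa}.
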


In Subsection~\ref{subsec:Improved radiation conditionsb}
we prove Lemma~\ref{lem:14.10.4.1.17ffaa}
and, as a result, Proposition~\ref{prop:12.7.2.7.9} too.
In Subsection~\ref{subsec:15.2.14.14.41},
combining Proposition~\ref{prop:12.7.2.7.9} and Condition ~\ref{cond:12.6.2.21.13b},
we prove Theorem~\ref{thm:12.7.2.7.9} by contradiction.

\subsection{Commutator estimate}\label{subsec:Improved radiation conditionsb}

We first note properties of $\Theta$
defined by \eqref{eq:15.2.15.5.8}.
\begin{lemma}\label{lem:15.2.18.14.28}
Suppose Condition~\ref{cond:12.6.2.21.13},
and fix any $\delta>0$
in the definition \eqref{eq:15.2.15.5.8} of $\Theta$.
Then there exist $c,C,C_k>0$, $k=2,3,\dots$, such 
that for any $k=2,3,\dots$ and uniformly in $\nu\ge 0$  
\begin{align*}
&c/R_\nu \le \Theta\le \min\{C,r/R_\nu\},\\
&c \parb{\min\{R_\nu,r\}}^{\delta}r^{-1-\delta}\Theta\le \Theta'\le r^{-1}\Theta,\\
&0\le (-1)^{k-1}\Theta^{(k)}\le C_kr^{-k}\Theta.
\end{align*}
\end{lemma}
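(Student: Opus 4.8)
The plan is to work throughout with the auxiliary variable $\theta_0=1+r/R_\nu$, in terms of which $\Theta=\delta^{-1}(1-\theta_0^{-\delta})$ and, after the substitution $u=1+s$, $\Theta=\int_1^{\theta_0}u^{-1-\delta}\,\mathrm{d}u$. Recall that $r\ge 1$ by the assumption $r(M)=[1,\infty)$ in Condition~\ref{cond:12.6.2.21.13}, and that $R_\nu=2^\nu\ge 1$, so $\theta_0\ge\max\{1,r/R_\nu\}$ and in particular $\theta_0R_\nu=R_\nu+r\ge r$. All constants produced below will be uniform in $\nu\ge 0$, since the $\nu$-dependence enters only through $\theta_0$ and $R_\nu$.

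First I would prove the bounds in the first line. Since $\Theta$ is increasing in $\theta_0$ and $\Theta<\delta^{-1}$, the upper bound $\Theta\le\delta^{-1}$ holds; bounding the integrand of $\int_0^{r/R_\nu}(1+s)^{-1-\delta}\,\mathrm{d}s$ by $1$ gives $\Theta\le r/R_\nu$, hence $\Theta\le\min\{\delta^{-1},r/R_\nu\}=:\min\{C,r/R_\nu\}$. For the lower bound I would restrict the integration to $[0,\min\{r/R_\nu,1\}]$ and bound the integrand below by $2^{-1-\delta}$, obtaining $\Theta\ge 2^{-1-\delta}\min\{r/R_\nu,1\}=2^{-1-\delta}\min\{r,R_\nu\}/R_\nu\ge 2^{-1-\delta}/R_\nu$, using $r\ge 1$ and $R_\nu\ge 1$.

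Next I would establish the second line using the explicit formula $\Theta'=\theta_0^{-1-\delta}/R_\nu$ from \eqref{eq:15.2.15.5.9}. Since $r/R_\nu=\theta_0-1$, the upper bound $\Theta'\le r^{-1}\Theta$ is equivalent to $\Theta\ge(\theta_0-1)\theta_0^{-1-\delta}$, and this holds because $\Theta=\int_1^{\theta_0}u^{-1-\delta}\,\mathrm{d}u$ has a decreasing integrand, so the integral is at least the length $\theta_0-1$ of the interval times the minimal value $\theta_0^{-1-\delta}$ of the integrand. For the lower bound I would split at $\theta_0=2$, i.e.\ at $r=R_\nu$. If $r\le R_\nu$ then $\theta_0\in[1,2]$, so $\Theta'\ge 2^{-1-\delta}/R_\nu$, whereas $\bigl(\min\{r,R_\nu\}\bigr)^\delta r^{-1-\delta}\Theta=r^{-1}\Theta\le r^{-1}(r/R_\nu)=1/R_\nu$ by the first line. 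If $r>R_\nu$ then $\theta_0<2r/R_\nu$, so $\Theta'>2^{-1-\delta}R_\nu^\delta r^{-1-\delta}$, whereas $\bigl(\min\{r,R_\nu\}\bigr)^\delta r^{-1-\delta}\Theta\le CR_\nu^\delta r^{-1-\delta}$ by the first line. In both cases the lower bound holds, after possibly shrinking $c$, with a single small constant depending only on $\delta$.

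Finally, for the third line I would verify by induction on $k\ge 1$ the closed form $\Theta^{(k)}=(-1)^{k-1}\bigl(\prod_{j=1}^{k-1}(j+\delta)\bigr)\theta_0^{-k-\delta}/R_\nu^k$, the base case $k=1$ being \eqref{eq:15.2.15.5.9} and the step following by differentiating in $r$ and using $\theta_0'=1/R_\nu$; the sign statement $(-1)^{k-1}\Theta^{(k)}\ge 0$ is then immediate. For the upper bound I would write $\theta_0^{-k-\delta}/R_\nu^k=\Theta'\cdot(\theta_0R_\nu)^{-(k-1)}$ and use $\theta_0R_\nu\ge r$ together with the already-proved inequality $\Theta'\le r^{-1}\Theta$, which gives $(-1)^{k-1}\Theta^{(k)}\le\bigl(\prod_{j=1}^{k-1}(j+\delta)\bigr)r^{-k}\Theta$, i.e.\ the claim with $C_k=\prod_{j=1}^{k-1}(j+\delta)$. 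Everything here is elementary; the only mild subtlety is keeping the constants in the second line independent of $\nu$, which is exactly why the case split at $\theta_0=2$ is convenient.
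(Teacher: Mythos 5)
Your proof is correct and follows the same route as the paper: the paper also treats every bound as straightforward except the lower bound on $\Theta'$, for which it uses the upper bound $\Theta\le C\min\{R_\nu,r\}/R_\nu$ (the "last estimate of the first line") together with the explicit formula for $\Theta'$ and then checks the resulting product inequality in the two cases $r\le R_\nu$ and $r>R_\nu$, exactly as you do.
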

\begin{proof}
All the asserted estimates are straightforward except possibly for the
first estimate in the second line. But by using the last estimate of the
first line this estimate follows if we can bound
\begin{align*}
  \parb{\min\{R_\nu,r\}}^{\delta}r^{-1-\delta}\parb{\min\{R_\nu,r\}/R_\nu}\parb{(1+r/R_\nu)^{1+\delta}R_\nu}\leq C,
\end{align*} which  obviously is correct for the case $r\leq R_\nu$ as
well as for the case $r> R_\nu$.
\end{proof}

Now we state and prove our key lemma:
\begin{lemma}\label{lem:14.10.4.1.17ffaa}
Suppose Conditions~\ref{cond:12.6.2.21.13}--\ref{cond:10.6.1.16.24} 
let $I\subseteq \mathcal I$ be a compact interval,
and fix any $\delta\in (0, \min\{1,\rho',\tau/2\})$ 
in the definition \eqref{eq:15.2.15.5.8} of $\Theta$.
Then there exist $c,C>0$ and $n\ge 0$ such that uniformly in $z\in I_\pm$
and $\nu\ge 0$, 
as quadratic forms on $\mathcal D(H)$,
\begin{align}
\begin{split}
\mathop{\mathrm{Im}}
\bigl(A\Theta(H-z)\bigr)
&\ge c\Theta'+cA\Theta'A+cp_i^*\Theta h^{ij}p_j
-C\chi_n^2\Theta-\mathop{\mathrm{Re}}\bigl(\gamma (H-z)\bigr),
\end{split}
\label{14.9.26.9.53ffaa}
\end{align}
where $\gamma=\gamma_{z,\nu}$ is a uniformly bounded complex-valued function:
$|\gamma|\le C$.
\end{lemma}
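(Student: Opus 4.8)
The strategy is to start from the weighted commutator identity \eqref{eq:15091015} of Lemma~\ref{lem:12.7.3.0.45}, applied with the specific weight $\Theta=\Theta_\nu^\delta$ of \eqref{eq:15.2.15.5.8}, and then to exploit Lemma~\ref{lem:undoing-commutators} to replace $[H,\mathrm iA]_\Theta$ by $2\mathop{\mathrm{Im}}(A\Theta H)$ from below (so one should really work with $[H,\mathrm iA]_\Theta$ and then subtract $2\mathop{\mathrm{Im}}(A\Theta\lambda)$, harmlessly, since $\Theta$ is real). The positive contributions we want to isolate are: the term $p_i^*\Theta(\nabla^2r)^{ik}p_k$, which via the convexity bound \eqref{eq:15091112} gives (after adding back a controllable $Cr^{-1-\tau}\Theta g$-term coming from $h$ versus $\nabla^2r$) the term $cp_i^*\Theta h^{ij}p_j$; the term $A\Theta'A$, which is already of the desired form since $\Theta'\ge 0$; and a scalar term comparable to $\Theta'$, which must be manufactured by the same ``microlocal'' trick used in Lemma~\ref{lem:14.10.4.1.17ffaabb}, namely extracting from $A\Theta'A+p_i^*\Theta\ell^{ij}p_j$-type combinations a multiple of $(\lambda-q_1)r^{-1}(\cdots)\Theta$, using that $\lambda>\lambda_0=\limsup q_1$ so $\lambda-q_1\ge c>0$ for $r$ large.

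The execution would proceed as follows. First, I would insert \eqref{eq:15091015} (with $\Theta$ as above) and rewrite, exactly as in \eqref{eq:14.9.4.18.22ffaabbqq}, all the ``harmless'' terms — those carrying $\Theta''$, $\Theta'''$, $q_2$, $\nabla^r\Delta r$, $\nabla^r|\mathrm dr|^2$, etc. — either into a remainder $Q$ (to be absorbed) or into a term $\mathop{\mathrm{Re}}(\gamma(H-z))$ with $\gamma$ bounded; here Lemma~\ref{lem:15.2.18.14.28} is the workhorse, since it shows $|\Theta^{(k)}|\le C_kr^{-k}\Theta$ and $\Theta'\le r^{-1}\Theta$, so every such term is $O(r^{-1-\epsilon}\Theta)$ for a suitable $\epsilon>0$ (here $\delta<\min\{1,\rho',\tau/2\}$ is exactly what makes the decay rates line up). Crucially, unlike the super-exponential case, there is no large parameter $\alpha$, so the constants stay genuinely uniform in $z\in I_\pm$ and $\nu\ge 0$; the dependence on $z$ enters only through $\lambda=\mathop{\mathrm{Re}}z$ staying in the relatively compact set $I$ and through $\mathop{\mathrm{Im}}z=\pm\Gamma$, which actually helps (it contributes a term of favorable sign, or can be dropped). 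Second, I would carry out the ``commutator-doubling'' manipulation on the radial part: write $\tfrac12A\Theta'A+\tfrac12 p_i^*\Theta(\cdots)\ell^{ij}p_j$ in the form $\mathop{\mathrm{Re}}(r^{-1}\theta_0^{-\delta}\Theta(H-z))+$ (positive scalar)$+$(error), using \eqref{eq:12.6.27.0.43} and \eqref{eq:15091010}, exactly mimicking \eqref{eq:14.9.17.12.15ffaabbqq}, to produce the $c(\lambda-q_1)\,(\text{weight})\,\Theta\ge c\Theta'$ term (the comparison $(\lambda-q_1)\times(\text{that weight})\gtrsim\Theta'$ for $r$ large follows from the lower bound $\Theta'\ge c(\min\{R_\nu,r\})^\delta r^{-1-\delta}\Theta$ in Lemma~\ref{lem:15.2.18.14.28} together with $\lambda-q_1\ge c$). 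Third, collect: the leftover negative contributions are supported either near $r\lesssim$ const (absorbed into $-C\chi_n^2\Theta$ after choosing $n$ large enough that $\lambda-q_1\ge c$ on $\{r\ge R_n/2\}$) or are of the form $-C r^{-1-\epsilon}\Theta$ which, for $r\ge R_n$, is dominated by $c\Theta'$ by the same Lemma~\ref{lem:15.2.18.14.28} comparison, and for $r\le R_n$ is again swallowed by $-C\chi_n^2\Theta$; the $Q$-type remainder is handled by \eqref{eq:13.9.7.9.32} exactly as in \eqref{eq:15090983}, producing another bounded-$\gamma$ contribution to $\mathop{\mathrm{Re}}(\gamma(H-z))$ and a further $r^{-1-\epsilon}\Theta$ piece.

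Finally, Proposition~\ref{prop:12.7.2.7.9} is deduced by evaluating \eqref{14.9.26.9.53ffaa} in the state $\phi=R(z)\psi\in\mathcal D(H)$: the left-hand side $\langle\mathop{\mathrm{Im}}(A\Theta(H-z))\rangle_\phi$ becomes $\mathop{\mathrm{Im}}\langle A\phi,\Theta\psi\rangle$ (since $(H-z)\phi=\psi$), which is bounded by $\|A\phi\|_{B^*}\|\Theta\psi\|_B\le C\|A\phi\|_{B^*}\|\psi\|_B$ as $\Theta$ is bounded; the $\mathop{\mathrm{Re}}\langle\gamma(H-z)\rangle_\phi=\mathop{\mathrm{Re}}\langle\gamma\phi,\psi\rangle$ term is $\le C\|\phi\|_{B^*}\|\psi\|_B$; and $\langle\chi_n^2\Theta\rangle_\phi=\|\chi_n\Theta^{1/2}\phi\|^2$ is the local term on the right of \eqref{eq:13.8.22.4.59cc}. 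This yields \eqref{eq:13.8.22.4.59cc} directly. I expect the \textbf{main obstacle} to be the bookkeeping in the second and third steps — ensuring that \emph{every} error term is genuinely either $O(r^{-1-\epsilon}\Theta)$ with $\epsilon>0$ depending only on $\delta,\rho',\tau$ (so that it is beaten by $c\Theta'$ for large $r$, via the delicate lower bound on $\Theta'$ in Lemma~\ref{lem:15.2.18.14.28} which is where the precise form of $\Theta$ matters), or supported in $\{r\le R_n\}$ and hence absorbable into $-C\chi_n^2\Theta$, and in particular that the choice of $n$ can be made independent of $\nu$ and $z$. The subtlety is that $\Theta'$ decays like $r^{-1}\Theta$ only in the region $r\lesssim R_\nu$, while for $r\gg R_\nu$ it decays like $r^{-1-\delta}\Theta$; one must check that the ``gain'' in the manufactured scalar term tracks $\Theta'$ in \emph{both} regimes, which is precisely the content of the two-sided bound $c(\min\{R_\nu,r\})^\delta r^{-1-\delta}\Theta\le\Theta'\le r^{-1}\Theta$.
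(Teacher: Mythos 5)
Your strategy is the same as the paper's: expand $\mathop{\mathrm{Im}}(A\Theta(H-z))$ via Lemmas~\ref{lem:undoing-commutators} and~\ref{lem:12.7.3.0.45}, isolate the positive tensor term through the convexity bound \eqref{eq:15091112}, manufacture a positive scalar term proportional to $(\lambda-q_1)\Theta'$ by re-assembling the radial part into $\mathop{\mathrm{Re}}(\Theta'(H-z))$ plus lower order (the analogue of \eqref{eq:15.2.18.16.9}), and push everything else into $Q$, $\mathop{\mathrm{Re}}(\gamma(H-z))$ or the $-C\chi_n^2\Theta$ cutoff via \eqref{eq:13.9.7.9.32} and Lemma~\ref{lem:15.2.18.14.28}.

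There is, however, one concrete gap. You claim that the contribution from $\mathop{\mathrm{Im}}\,z=\pm\Gamma$ ``contributes a term of favorable sign, or can be dropped.'' This is not so: writing $z=\lambda\pm\mathrm i\Gamma$, the expansion of $\mathop{\mathrm{Im}}(A\Theta(H-z))$ produces the extra term $\mp\Gamma\mathop{\mathrm{Re}}(A\Theta)=\mp\Gamma\,\Theta^{1/2}A\Theta^{1/2}$, which is sign-indefinite ($A$ is a first-order operator, so $\Theta^{1/2}A\Theta^{1/2}$ has no sign), and it is not small uniformly in $z$: it is only $O(\Gamma)$ in $\mathcal B(\mathcal H^1,\mathcal H^{-1})$ while the good terms $c\Theta'+cA\Theta'A+\cdots$ degenerate for large $r$. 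The paper handles this with a dedicated step \eqref{eq:14.10.3.2.16ffaa}: first dominate $\mp\Gamma\Theta^{1/2}A\Theta^{1/2}\ge -C\Gamma\,\Theta^{1/2}(H-\lambda)\Theta^{1/2}-C\Gamma$ via Cauchy--Schwarz, then convert the lone $-C\Gamma$ back to $\pm C\mathop{\mathrm{Im}}(H-z)$, and finally commute $\Theta^{1/2}$ through $H-z$ via \eqref{eq:commform} so that everything enters $\mathop{\mathrm{Re}}(\gamma(H-z))$ with $\gamma$ carrying a piece $C\Gamma\Theta\pm\mathrm iC$. Your plan already has the machinery for absorbing bounded-weight $\mathop{\mathrm{Re}}(\gamma(H-z))$ contributions, so this gap is fixable, but it must be done explicitly; simply dropping the $\Gamma$-term would leave the inequality \eqref{14.9.26.9.53ffaa} unproved.

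A smaller point: when manufacturing the scalar term you propose, following Lemma~\ref{lem:14.10.4.1.17ffaabb} too literally, to produce $(\lambda-q_1)r^{-1}\theta_0^{-\delta}\Theta$ and then compare with $\Theta'$ using the \emph{lower} bound on $\Theta'$ from Lemma~\ref{lem:15.2.18.14.28}; that comparison runs the wrong way. The proof for this weight is in fact simpler: the radial re-assembly \eqref{eq:15.2.18.16.9} directly yields $c_1(\lambda-q_1)\Theta'$, and since $\lambda-q_1\ge c>0$ on $\{r\ge R_n/2\}$ for $n$ large (by the definition \eqref{eq:13.9.30.6.8} of $\lambda_0$ and $I\Subset(\lambda_0,\infty)$), one obtains $c\Theta'$ there without any weight comparison, absorbing the complement into $-C\chi_n^2\Theta$.
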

\begin{proof}
Let $I$ and $\delta$ be as in the assertion.
We are going to prove the lemma 
by bounding the form on the left-hand side of \eqref{14.9.26.9.53ffaa}.
First by Lemmas~\ref{lem:undoing-commutators}, \ref{lem:12.7.3.0.45}, 
\ref{lem:15.2.18.14.28}, \eqref{eq:12.6.27.0.43}, the Cauchy--Schwarz inequality
and \eqref{eq:14.12.27.3.22}
we can bound uniformly in $z=\lambda\pm\mathrm i\Gamma\in I_\pm$ and $\nu\ge 0$
\begin{align}
\begin{split}
\mathop{\mathrm{Im}}\bigl(A\Theta(H-z)\bigr)
&
\ge 
\tfrac12A\Theta' A
+\tfrac12p_i^*\Theta h^{ik}p_k
-\tfrac12\mathop{\mathrm{Im}}\bigl(\Theta'(\mathrm dr)_ih^{ij}p_j\bigr)
\\&\phantom{{}={}}
\mp\Gamma\mathop{\mathrm{Re}}(A\Theta)
-\tfrac12\mathop{\mathrm{Re}}\bigl(|\mathrm dr|^2\Theta'(H-\lambda)\bigr)
-C_1Q
\\&
\ge 
\tfrac12c_1A\tilde\eta\Theta' A
+\tfrac12c_1p_i^*\Theta'\ell^{ik}p_k
\\&\phantom{{}={}}
+\tfrac12A\bigl(1-c_1\tilde\eta\bigr)\Theta' A
+\tfrac14p_i^*\bigl(\Theta h^{ik}-2c_1\Theta'\ell^{ik}\bigr)p_k
\\&\phantom{{}={}}
\mp\Gamma\Theta^{1/2}A\Theta^{1/2}
-\tfrac12\mathop{\mathrm{Re}}\bigl(|\mathrm dr|^2\Theta'(H-z)\bigr)
-C_2Q
,
\end{split}\label{eq:14.9.4.18.22ffaa}
\end{align} 
where $c_1>0$ is a small constant and 
\begin{align*}
Q=r^{-1-\min\{1,\rho',\tau/2\}}\Theta
+p_i^*r^{-1-\min\{1,\rho',\tau/2\}}\Theta g^{ij}p_j.
\end{align*}

We further estimate the terms on the right-hand side
of \eqref{eq:14.9.4.18.22ffaa} as follows.
By Lemma~\ref{lem:15.2.18.14.28}
we can choose and fix $c_1>0$ in \eqref{eq:14.9.4.18.22ffaa} small enough that
the third and fourth terms on the right-hand side satisfy
\begin{align}
\tfrac12A\bigl(1-c_1\tilde\eta\bigr)\Theta' A
+\tfrac14p_i^*\bigl(\Theta h^{ik}-2c_1\Theta'\ell^{ik}\bigr)p_k
\ge 
c_2A\Theta'A+c_2p_i^*\Theta h^{ik}p_k.
\label{eq:14.9.17.12.15ffaa}
\end{align}
Using Lemma~\ref{lem:15.2.18.14.28}, the Cauchy-Schwarz inequality and \eqref{eq:15091010}
we can rewrite and bound the first and second terms of \eqref{eq:14.9.4.18.22ffaa}
as 
\begin{align}
\begin{split}
\tfrac12c_1A\tilde\eta  \Theta' A
+\tfrac12c_1p_i^*\Theta' \ell^{ij}p_j
&
\ge 
\tfrac12c_1\mathop{\mathrm{Re}}
\Bigl[
\Theta' A\tilde\eta  A
+\Theta'  L \Bigr]
-C_3Q
\\&
\ge 
c_1(\lambda-q_1) \Theta' 
+\mathop{\mathrm{Re}}
\bigl(c_1\Theta' (H-z)\bigr)
-C_4Q
.
\end{split}\label{eq:15.2.18.16.9}
\end{align}
To the fifth term of \eqref{eq:14.9.4.18.22ffaa}
we apply the Cauchy-Schwarz inequality,
Lemma~\ref{lem:15.2.18.14.28}
and the general identity 
holding for any real functions $f,h\in C^1(M)$:
\begin{align}\label{eq:commform}
  h\Re (fH_0)h = \Re (h^2fH_0) +\tfrac 12(\partial_i h)
  g^{ij}(\partial_j hf).
\end{align}
Then it follows that
\begin{align}
\begin{split}
\mp\Gamma \Theta^{1/2} A\Theta^{1/2}
&\ge 
-C_5\Gamma \Theta^{1/2} (H-\lambda)\Theta^{1/2}
-
C_6\Gamma 
\\&\ge 
-C_5\Gamma \mathop{\mathrm{Re}}\bigl(\Theta^{1/2}(H-z)\Theta^{1/2}\bigr)
\pm C_6\mathop{\mathrm{Im}}(H-z)
\\&\ge 
-C_7Q
-\mathop{\mathrm{Re}}\bigl((C_5\Gamma\Theta\pm \mathrm iC_6) (H-z)\bigr)
.
\end{split}
\label{eq:14.10.3.2.16ffaa}
\end{align}

We substitute the estimates 
\eqref{eq:14.9.17.12.15ffaa},
\eqref{eq:15.2.18.16.9}
and 
\eqref{eq:14.10.3.2.16ffaa}
 into 
\eqref{eq:14.9.4.18.22ffaa},
and obtain
\begin{align}
\begin{split}
\mathop{\mathrm{Im}}\bigl(A\Theta(H-z)\bigr)
&\ge
c_3\Theta'
+c_2 A\Theta'A
+c_2 p_i^*\Theta h^{ij}p_j
-C_8Q
\\&\phantom{{}={}}
-\mathop{\mathrm{Re}}
\Bigl(\bigl[\bigl(\tfrac12|\mathrm dr|^2-c_1\bigr)\Theta'
+C_5\Gamma\Theta\pm \mathrm iC_6
\bigr](H-z)\Bigr)
.
\end{split}\label{eq:14.9.4.18.23ffaa}
\end{align}
Finally we can use \eqref{eq:13.9.7.9.32} and Lemma~\ref{lem:15.2.18.14.28}
to combine and estimate the first and fourth terms of \eqref{eq:14.9.4.18.23ffaa}: 
For small $c_4>0$ and large $n\ge 0$
\begin{align}
\begin{split}
c_3\Theta'
-C_8Q
&
\ge 
c_4\Theta'
-C_9\chi_n^2\Theta
-2C_8 \mathop{\mathrm{Re}}
\bigl(r^{-1-\min\{1,\rho',\tau/2\}}\Theta(H-z)\bigr)
.
\end{split}
\label{eq:14.9.17.12.17ffaa}
\end{align}
Hence by \eqref{eq:14.9.4.18.23ffaa} and \eqref{eq:14.9.17.12.17ffaa},
if we set 
\begin{align*}
\gamma=\bigl(\tfrac12|\mathrm dr|^2-c_1\bigr)\Theta'
+C_5\Gamma\Theta\pm \mathrm iC_6
+2C_8r^{-1-\min\{1,\rho',\tau/2\}}\Theta,
\end{align*}
then the desired inequality \eqref{14.9.26.9.53ffaa} follows.
\end{proof}

\begin{proof}[Proof of Proposition~\ref{prop:12.7.2.7.9}]
The assertion follows immediately from Lemma~\ref{lem:14.10.4.1.17ffaa}.
\end{proof}

\subsection{Compactness and contradiction}\label{subsec:15.2.14.14.41}
Now we suppose Condition ~\ref{cond:12.6.2.21.13b}, 
and prove Theorem~\ref{thm:12.7.2.7.9} by Proposition~\ref{prop:12.7.2.7.9} and contradiction.
\begin{proof}[Proof of Theorem~\ref{thm:12.7.2.7.9}]
Let $I\subseteq \mathcal I$ be a compact interval.
We prove the assertion only for the upper sign.

\smallskip
\noindent
\textit{Step I.}
We first reduce the bound \eqref{eq:13.8.22.4.59c}
to the single bound
\begin{align}
\|\phi\|_{B^*}\le C_1\|\psi\|_B.
\label{eq:15.2.15.5.2}
\end{align}
In fact, assume \eqref{eq:15.2.15.5.2}.
Then the last term of the left-hand side of \eqref{eq:13.8.22.4.59c}
clearly satisfies the desired estimate by the identity
\begin{align*}
H_0\phi=\psi-(V-z)\phi
\end{align*}
and the fact that $V$ is bounded by Conditions~\ref{cond:12.6.2.21.13}--\ref{cond:10.6.1.16.24}.
Hence it suffices to consider the second and third terms of  \eqref{eq:13.8.22.4.59c}.
Fix any $\delta\in (0, \min\{1,\rho',\tau/2\})$.
Then by Proposition~\ref{prop:12.7.2.7.9} and \eqref{eq:15.2.15.5.2}
there exists $C_2>0$ such that
for any $\phi=R(z)\psi$ with $z\in I_+$ and $\psi\in B$
uniformly in $\epsilon_1\in (0,1)$ and $\nu\ge 0$
\begin{align}
\begin{split}
&
\|\Theta'{}^{1/2}A\phi\|^2
+\langle p_i^*\Theta h^{ij}p_j\rangle_\phi
\le \epsilon_1^{-1}C_2\|\psi\|_B^2
+\epsilon_1 \|A\phi\|_{B^*}^2
.
\end{split}\label{eq:13.8.22.4.59ccd}
\end{align}
In the first term on the left-hand side of \eqref{eq:13.8.22.4.59ccd}
for each $\nu\ge 0$,
noting the expression of $\Theta'$ in \eqref{eq:15.2.15.5.9},
we restrict the integral region to $B_{R_{\nu+1}}\setminus B_{R_\nu}$.
As for the second term on the same side
we look at the estimate \eqref{eq:13.8.22.4.59ccd} 
for any fixed $\nu\ge 0$, say $\nu=0$.
Then we can deduce from \eqref{eq:13.8.22.4.59ccd} that 
\begin{align*}
\begin{split}
&
c_1\|A\phi\|_{B^*}^2
+c_1\langle p_i^*h^{ij}p_j\rangle_\phi
\le 2\epsilon_1^{-1}C_2\|\psi\|_B^2
+2\epsilon_1 \|A\phi\|_{B^*}^2
.
\end{split}
\end{align*}
If we let $\epsilon_1\in (0,c_1/2)$, the rest of \eqref{eq:13.8.22.4.59c} follows
from this estimate and \eqref{eq:12.6.27.0.43}.
Hence \eqref{eq:13.8.22.4.59c}
reduces to \eqref{eq:15.2.15.5.2}.

\smallskip
\noindent
\textit{Step I\hspace{-.1em}I.}
We prove \eqref{eq:15.2.15.5.2} by contradiction.
Assume the opposite, and let $z_k\in I_+$ and $\psi_k\in B$ be such that 
\begin{align}
\lim_{k\to\infty}\|\psi_k\|_B=0,\quad
\|\phi_k\|_{B^*}=1;\quad
\phi_k=R(z_k)\psi_k.
\label{eq:15.2.15.5.51}
\end{align}
Note that then it automatically follows that 
\begin{align}
\|p\phi_k\|_{B^*}+\|H_0\phi_k\|_{B^*}\le C_3.\label{eq:15.2.15.14.0}
\end{align}
In fact, arguing similarly to Step I,
we can deduce from \eqref{eq:15.2.15.5.51} and Proposition~\ref{prop:12.7.2.7.9} 
that 
\begin{align*}
\begin{split}
\|A\phi_k\|_{B^*}^2
+\langle p_i^*h^{ij}p_j\rangle_{\phi_k}
\le C_4,\quad
H_0\phi_k
=\psi_k-(V-z_k)\phi_k
,
\end{split}
\end{align*}
and these combined Condition~\ref{cond:12.6.2.21.13a}, \eqref{eq:12.6.27.0.43} 
and \eqref{eq:15.2.15.5.51} imply \eqref{eq:15.2.15.14.0}.
Now, choosing a subsequence we may assume that $z_k\to z\in I$.
In fact if the imaginary part of the   limit $z$ of a convergent
subsequence is positive, the bounds
\begin{align*}
\|\phi_k\|_{B^*}
\le 
\|\phi_k\|_{\mathcal H}
\le \|R(z_k)\|_{\mathcal B(\mathcal H)}\|\psi_k\|_{\mathcal H}
\le C_5\|R(z_k)\|_{\mathcal B(\mathcal H)}\|\psi_k\|_B
\end{align*}
and \eqref{eq:15.2.15.5.51} contradict the norm continuity of $R(z)\in
\mathcal B(\mathcal H)$. Hence indeed we have the limit 
\begin{align}
\lim_{k\to\infty}z_k=z=\lambda\in I.
\label{eq:15.2.15.5.52}
\end{align}
Let $s>1/2$.  By choosing a further subsequence we may assume that 
$\phi_k$ converges  weakly  to some $\phi$ in $\mathcal H_{-s}$, cf. \cite{Yo}.
But then $\phi_k$ actually converges strongly in $\mathcal H_{-s}$.
To see this let us fix $s'\in (1/2,s)$ and $f\in C^\infty_\c(\mathcal I)$ 
with $f=1$ on a neighborhood of $I$, and decompose
for any  $n\ge 0$
\begin{align*}
r^{-s}\phi_k
&=r^{-s}f(H)(\chi_{n}r^s)(r^{-s}\phi_k)
+(r^{-s}f(H)r^s)(\bar\chi_nr^{s'-s})( r^{-s'}\phi_k)
\\&\phantom{{}={}}
+r^{-s}(1-f(H))R(z_k)\psi_k.
\end{align*}
The last term on the right-hand side converges to $0$ 
in $\mathcal H$ due to \eqref{eq:15.2.15.5.51}, 
and the second term can be taken arbitrarily small in $\mathcal H$ 
by choosing $n\ge 0$ sufficiently large 
since $r^{-s}f(H)r^s$ is a bounded operator.
  It follows from  Condition~\ref{cond:12.6.2.21.13b} that $r^{-s}f(H)$
is compact. Whence
for  fixed $n\ge 0$ the first term converges strongly in $\mathcal H$.
 So  indeed $\phi_k$ converges to $\phi$ strongly in $\mathcal H_{-s}$.
By using \eqref{eq:15.2.15.14.0}
we can see that $p\phi\in \mathcal H_{-s}$, 
and then by using \eqref{eq:13.9.7.9.32}, 
or alternatively the first resolvent equation, 
we can see that the sequence $\{p\phi_k\}$ is a Cauchy sequence in 
$\mathcal H_{-s}$. Whence we have 
\begin{align}
\lim_{k\to\infty}\phi_k=\phi\quad \text{in }\mathcal H_{-s},\quad
\lim_{k\to\infty}p\phi_k=p\phi\quad \text{in }\mathcal H_{-s}.
\label{eq:15.2.15.5.53}
\end{align}
By \eqref{eq:15.2.15.5.51}, \eqref{eq:15.2.15.5.52} and \eqref{eq:15.2.15.5.53}
it follows that 
\begin{align}
\phi\in\mathcal N,\quad
(H-\lambda)\phi=0 \text{ in the distributional sense}
.
\label{eq:15.2.15.14.32}
\end{align}
In addition, we can verify $\phi\in B^*_0$.
In fact, let us apply Proposition~\ref{prop:12.7.2.7.9} with $\delta=2s-1>0$
to $\phi_k=R(z_k)\psi_k$,
and take the limit $k\to\infty$ using \eqref{eq:15.2.15.5.53},
\eqref{eq:15.2.15.5.51}, \eqref{eq:15.2.15.14.0} and Lemma
\ref{lem:15.2.18.14.28}.
We obtain for all $\nu\ge 0$
\begin{align}
\begin{split}
\|\Theta'{}^{1/2}\phi\|
&\le C_6\|\chi_n\Theta^{1/2}\phi\|
\le C_6R_\nu^{-1/2} \|\chi_nr^{1/2}\phi\|
.
\end{split}
\label{eq:15.2.15.12.9}
\end{align}
Letting $\nu\to\infty$ in \eqref{eq:15.2.15.12.9},
we obtain 
$\phi\in B^*_0$,
and then we conclude $\phi=0$ by \eqref{eq:15.2.15.14.32} and Theorem~\ref{thm:13.6.20.0.10}.
But this is a contradiction, because similarly to Step I
we have 
\begin{align*}
1=\|\phi_k\|_{B^*}^2\le C_7\bigl(\|\psi_k\|_B+\|\chi_n\phi_k\|^2\big),
\end{align*}
and, as $k\to \infty$, the right-hand side converges to $0$.
Hence \eqref{eq:15.2.15.5.2} holds.
\end{proof}

\section{Radiation condition}\label{sec:Radiation conditions
  -preliminary}

In this section we discuss the radiation condition bounds and their relevant consequences.
In Subsection~\ref{subsec:Improved radiation conditionsbb} we state and prove 
the main key commutator estimate of the section,
which is somewhat similar to that of Section~\ref{sec:besov-bound-resolv}.
In Subsection~\ref{subsec:15.3.9.14.28}, using this key estimate, 
we prove Theorem~\ref{prop:radiation-conditions}.
Corollaries~\ref{thm:12.7.2.7.9b}--\ref{thm:13.9.9.8.23} are also proved in the same subsection.

Throughout the section we suppose Condition~\ref{cond:12.6.2.21.13bbb},
and prove the statements only for the upper sign for simplicity.

\subsection{Commutator estimate}\label{subsec:Improved radiation conditionsbb}

\begin{lemma}\label{lem:13.9.2.7.18}
Let $I\subseteq \mathcal I$ be a compact interval.
There exists $C>0$ such that uniformly in $z\in I\cup I_+$
\begin{align*}
& |a|\le C,\quad
\mathop{\mathrm{Im}}a\ge -Cr^{-1-\min\{\rho',\rho/2\}},
\quad
q_{21}\mathop{\mathrm{Im}}a\ge -C(\Gamma+r^{-1})r^{-\rho},
\\
&\bigl|p^ra+a^2-2|\mathrm dr|^2(z-q_1)\bigr|
+\bigl|\ell^{\bullet i}\nabla_i a\bigr|
\le Cr^{-1-\min\{\rho/2,\tau/2\}}
.
\end{align*}
\end{lemma}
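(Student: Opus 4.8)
The plan is to verify each of the four stated bounds by direct computation from the explicit formula
\begin{align*}
a=a_z=\eta_\lambda\Bigl[|\mathrm dr|\sqrt{2(z-q_1)}\pm\tfrac14(p^rq_{11})\big/(z-q_1)\Bigr],
\end{align*}
using the decay hypotheses in Condition~\ref{cond:12.6.2.21.13bbb}, the choice of $r_\lambda$ in \eqref{eq:14.6.29.23.46} (which guarantees $\lambda+\lambda_0-2q_1\ge0$, hence $z-q_1$ stays away from the negative real axis on $\mathop{\mathrm{supp}}\eta_\lambda$, so the branch of the square root is well-behaved there), and local compactness of $I$ (so that $|z-q_1|$, $|z-q_1|^{-1}$, $|{\rm Re}\,\sqrt{2(z-q_1)}|$, etc.\ are bounded above and below uniformly in $z\in I\cup I_+$ on the relevant region). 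First I would record the elementary facts: on $\{r\ge r_\lambda/2\}$ one has $\mathop{\mathrm{Re}}(z-q_1)\ge \tfrac12(\lambda-\lambda_0)>0$ if $\lambda$ is bounded away from $\lambda_0$ (and for large $\lambda$ the region is all of $\{r\ge r_0\}$), together with $|\mathrm dr|^2\le C$ from \eqref{eq:13.9.28.13.28}; this gives $|a|\le C$ immediately.

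For the imaginary-part bounds, write $z=\lambda+\mathrm i\Gamma$ with $\Gamma\ge0$, so $\mathop{\mathrm{Im}}\sqrt{2(z-q_1)}\ge0$ (as $\mathop{\mathrm{Im}}(z-q_1)=\Gamma\ge0$ and the principal branch maps the upper half-plane to the upper half-plane); hence the leading term $|\mathrm dr|\sqrt{2(z-q_1)}$ already has nonnegative imaginary part. The correction term $\pm\tfrac14(p^rq_{11})/(z-q_1)=\pm\tfrac14(\nabla^r q_{11})/(z-q_1)$ (recall $p^r=-\mathrm i\nabla^r$ so $p^rq_{11}$ here means the real quantity $\nabla^r q_{11}$, up to the convention in the paper) is controlled in modulus by $|\nabla^r q_{11}|\le Cr^{-(1+\rho/2)/2}$; but to get the sharper $\mathop{\mathrm{Im}}a\ge-Cr^{-1-\min\{\rho',\rho/2\}}$ I would Taylor-expand $\sqrt{2(z-q_1)}$ and use that its imaginary part is $\ge -Cr^{-1-\min\{\rho',\rho/2\}}$ contributions coming only through the $\eta_\lambda$-localization and from bounding $|\nabla^r q_{11}|^2\le Cr^{-1-\rho/2}$ — note $(1+\rho/2)/2\cdot 2=1+\rho/2$, so the square of the correction term is of the right order, and this is exactly why the paper asks for the $r^{-(1+\rho/2)/2}$ bound on $\nabla^r q_{11}$ rather than a first-order bound. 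The third inequality $q_{21}\mathop{\mathrm{Im}}a\ge -C(\Gamma+r^{-1})r^{-\rho}$ follows by combining $|q_{21}|\le Cr^{-\rho}$ with the lower bound on $\mathop{\mathrm{Im}}a$ just obtained, plus separating out the genuinely $\Gamma$-dependent part of $\mathop{\mathrm{Im}}a$ (which is $O(\Gamma)$ and multiplied by $q_{21}=O(r^{-\rho})$ gives the $\Gamma r^{-\rho}$ term) from the $\Gamma$-independent remainder which is $O(r^{-1-\rho/2})$; here one uses $q_{21}\nabla^r q_{11}\le Cr^{-1-\rho}$ to handle the cross term between $q_{21}$ and the correction term in $a$.

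The fourth bound is the substantive one, and I expect it to be the main obstacle: one must show $a$ is an approximate solution of the Riccati equation \eqref{eq:15.3.11.19.35}. I would compute $p^r a = -\mathrm i\nabla^r a$ by differentiating the product $\eta_\lambda[\cdots]$. Differentiating $\eta_\lambda$ produces a term supported in $r_\lambda/2\le r\le r_\lambda$, which is a compact $r$-region (for $\lambda$ in the relatively compact set $I$), hence $O(r^{-N})$ for any $N$ — harmless. Differentiating $|\mathrm dr|\sqrt{2(z-q_1)}$ gives $(\nabla^r|\mathrm dr|)\sqrt{2(z-q_1)} + |\mathrm dr|\cdot\tfrac{-\nabla^r q_1}{\sqrt{2(z-q_1)}}$; using $|\nabla^r|\mathrm dr|^2|=|2|\mathrm dr|\nabla^r|\mathrm dr||\le Cr^{-1-\tau/2}$ and $|\nabla^r q_1|\le|\nabla^r q_{11}|+|\nabla^r q_{12}|\le Cr^{-(1+\rho/2)/2}+Cr^{-1-\rho/2}$, and then the point is that the leading piece $a^2=\eta_\lambda^2[2|\mathrm dr|^2(z-q_1)\pm\tfrac12|\mathrm dr|(p^rq_{11})\sqrt{2(z-q_1)}/(z-q_1)+\cdots]$ exactly cancels $2|\mathrm dr|^2(z-q_1)$ modulo the $(1-\eta_\lambda^2)$ factor (compactly supported) and modulo the cross term, which is designed to cancel against $\pm p^r$ of the leading square-root term. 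Carefully: the correction term $\pm\tfrac14(p^rq_{11})/(z-q_1)$ in $a$ was chosen precisely so that $\pm p^r$ of it cancels the next-order error; the residual is then controlled by $|\mathrm d\nabla^r q_{11}|\le Cr^{-1-\rho/2}$, $|\nabla^r q_{11}|^2\le Cr^{-1-\rho/2}$, and $|\nabla^r|\mathrm dr|^2|\le Cr^{-1-\tau/2}$, yielding $O(r^{-1-\min\{\rho/2,\tau/2\}})$. For the companion estimate $|\ell^{\bullet i}\nabla_i a|\le Cr^{-1-\min\{\rho/2,\tau/2\}}$ I would use $\ell^{\bullet i}\nabla_i r = (1-\eta)\mathrm dr$ from \eqref{eq:14.12.23.13.49} (so the spherical gradient kills the radial dependence up to the $(1-\eta)$ factor, which is compactly supported in $r$), together with the hypotheses $|\ell^{\bullet i}\nabla_i|\mathrm dr|^2|\le Cr^{-1-\tau/2}$ and $|\ell^{\bullet i}\nabla_i q_{11}|\le Cr^{-1-\rho/2}$ from Condition~\ref{cond:12.6.2.21.13bbb}, plus $|\mathrm d q_{12}|\le Cr^{-1-\rho/2}$ for the $q_{12}$ contribution to $q_1$; the remaining term comes from $\ell^{\bullet i}\nabla_i\sqrt{2(z-q_1)}$ and $\ell^{\bullet i}\nabla_i(p^rq_{11})$, bounded using $|\mathrm d\nabla^r q_{11}|\le Cr^{-1-\rho/2}$. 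The bookkeeping of which Condition~\ref{cond:12.6.2.21.13bbb} hypothesis controls which residual term, and verifying that the square-root-branch derivative estimates are uniform in $z\in I\cup I_+$, is the delicate part; everything else is routine once the cancellation in the Riccati equation is set up correctly.
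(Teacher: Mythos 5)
The proposal is essentially correct in structure for the first, third, and fourth bounds (identifying the cancellation mechanism in the Riccati equation, using the one-sided bound $q_{21}\nabla^rq_{11}\le Cr^{-1-\rho}$ for the third estimate, and noting that $|\nabla^rq_{11}|^2\le Cr^{-1-\rho/2}$ is what makes the $b^2$-term in $a^2$ of the right order). However, the argument sketched for the second bound, $\mathop{\mathrm{Im}}a\ge -Cr^{-1-\min\{\rho',\rho/2\}}$, does not work, and there is an underlying sign error.

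You write that $p^rq_{11}$ ``means the real quantity $\nabla^r q_{11}$, up to the convention in the paper''; but $p^r=-\mathrm i\nabla^r$ literally, so $p^rq_{11}=-\mathrm i\nabla^rq_{11}$ is purely imaginary for real $q_{11}$. This matters. Writing $z-q_1=u+\mathrm i\Gamma$ with $u=\lambda-q_1>0$ on $\mathop{\mathrm{supp}}\eta_\lambda$ (by \eqref{eq:14.6.29.23.46}), the imaginary part of the correction term with the upper sign is
\begin{align*}
\mathop{\mathrm{Im}}\Bigl(\tfrac14\tfrac{-\mathrm i\nabla^rq_{11}}{z-q_1}\Bigr)
=-\tfrac{u\,\nabla^rq_{11}}{4|z-q_1|^2},
\end{align*}
which is \emph{linear} in $\nabla^rq_{11}$ with a definite sign of the coefficient. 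This term therefore pulls $\mathop{\mathrm{Im}}a$ down precisely when $\nabla^rq_{11}>0$, and one needs an \emph{upper} bound on $\nabla^rq_{11}$ of order $r^{-1-\min\{\rho',\rho/2\}}$. The absolute bound $|\nabla^rq_{11}|\le Cr^{-(1+\rho/2)/2}$ from Condition~\ref{cond:12.6.2.21.13bbb} is far too weak, and squaring it (as you suggest, invoking $|\nabla^rq_{11}|^2\le Cr^{-1-\rho/2}$) is irrelevant here — the contribution to $\mathop{\mathrm{Im}}a$ is not quadratic in $\nabla^rq_{11}$. Likewise ``Taylor-expanding $\sqrt{2(z-q_1)}$'' accomplishes nothing: $\mathop{\mathrm{Im}}\sqrt{2(z-q_1)}\ge 0$ exactly for $z$ in the closed upper half-plane, so the leading term is harmless with no expansion. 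The missing step, which the paper supplies, is the splitting
\begin{align*}
\nabla^rq_{11}=\nabla^rq_1-\nabla^rq_{12}\le Cr^{-1-\rho'}+Cr^{-1-\rho/2}\le Cr^{-1-\min\{\rho',\rho/2\}},
\end{align*}
which combines the one-sided decay $\nabla^rq_1\le Cr^{-1-\rho'}$ from Condition~\ref{cond:10.6.1.16.24} with the two-sided $|\mathrm dq_{12}|\le Cr^{-1-\rho/2}$ from Condition~\ref{cond:12.6.2.21.13bbb}. Without this one-sided estimate on $\nabla^rq_{11}$, the second inequality cannot be obtained. (The same issue resurfaces in your explanation of the third estimate — you claim the $\Gamma$-independent remainder of $\mathop{\mathrm{Im}}a$ is $O(r^{-1-\rho/2})$, which again would require the above one-sided bound; that you correctly invoke $q_{21}\nabla^rq_{11}\le Cr^{-1-\rho}$ there partially compensates, but the stated size of the remainder is still unsupported by your argument.)
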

\begin{proof}
It is clear by the definition \eqref{eq:13.9.5.7.23} that the function $a$ is bounded.
For the second estimate it suffices to note that 
by Condition~\ref{cond:12.6.2.21.13bbb}
\begin{align*}
\nabla^rq_{11}=\nabla^rq_1-\nabla^rq_{12}\le C_1r^{-1-\min\{\rho',\rho/2\}}.
\end{align*}
The third estimate is also clear by Condition~\ref{cond:12.6.2.21.13bbb}.
Since we can write
\begin{align*}
&p^ra+a^2-2|\mathrm dr|^2(z-q_1)\\
&=
(p^r\eta_\lambda|\mathrm dr|)\sqrt{2(z-q_1)}
+\tfrac14(p^r\eta_\lambda p^rq_{11})\big/(z-q_1)
\\&\phantom{{}={}}
+\tfrac14\eta_\lambda\bigl(1+\tfrac14\eta_\lambda\bigr)(p^rq_{11})^2\big/(z-q_1)^2
+\tfrac14\eta_\lambda(p^rq_{11})(p^rq_{12})\big/(z-q_1)^2
\\&\phantom{{}={}}
-\eta_\lambda|\mathrm dr|\bigl(p^rq_1-\eta_\lambda p^rq_{11}\bigr)\big/\sqrt{2(z-q_1)}
-2(1-\eta_\lambda^2)|\mathrm dr|^2(z-q_1),
\end{align*}
it is clear that this quantity satisfies the
assertion by Condition~\ref{cond:12.6.2.21.13bbb}.
The last bound is also clear.
\end{proof}

To simplify a commutator computation in Lemma~\ref{lem:14.10.22.18.8ff}
we decompose $H-z$ into a sum of  radial and  spherical components
like  \eqref{eq:15091010}.
\begin{lemma}\label{lem:15.1.15.15.16ff}
Let $I\subseteq \mathcal I$ be a compact interval.
Then there exist a complex-valued function $q_3$ and a constant $C>0$ such that 
uniformly in $z\in I\cup I_+$,
as quadratic forms on $\mathcal H^1$,
\begin{align*}
\begin{split}
H-z
&=
\tfrac 12(A+a)\tilde\eta (A-a)+\tfrac12 L  +q_{21}+q_3;\quad 
|q_3|\le Cr^{-1-\min\{\rho/2,\tau/2\}}
.
\end{split}
\end{align*}
\end{lemma}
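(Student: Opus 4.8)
The plan is to expand the putative right-hand side, subtract it from the identity \eqref{eq:15091010} for $H$, and read off $q_3$ by comparison. First I would check that $\tfrac12(A+a)\tilde\eta(A-a)$ is a well-defined bounded quadratic form on $\mathcal H^1$: by Lemma~\ref{lem:13.9.2.7.18} the function $a$ is bounded, and Condition~\ref{cond:12.6.2.21.13bbb} (which gives $q_{11}\in C^2$, $q_1\in C^1$ together with the bounds on $\nabla^rq_{11}$, $\mathrm d\nabla^rq_{11}$ and $\mathrm dq_{12}$) shows that $a\in C^1$ with bounded first derivative, while $(z-q_1)^{-1}$ stays bounded on $\mathop{\mathrm{supp}}\eta_\lambda$ uniformly in $z\in I\cup I_+$ by \eqref{eq:14.6.29.23.46} and $\inf I>\lambda_0$. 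Since $A\colon\mathcal H^1\to\mathcal H$ is bounded, the triple product makes sense as a form on $\mathcal H^1$, and $\tilde\eta a\psi\in\mathcal H^1$ for $\psi\in\mathcal H^1$.

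Next I would use the elementary product rule $A(f\psi)=(p^rf)\psi+fA\psi$, which is immediate from $A=p^r-\tfrac{\mathrm i}2(\Delta r)$ in \eqref{eq:12.6.27.0.43}, applied with $f=\tilde\eta a$, together with the symmetry of $A$ as a form on $\mathcal H^1$, to obtain
\begin{align*}
\tfrac12(A+a)\tilde\eta(A-a)=\tfrac12A\tilde\eta A-\tfrac12\bigl(p^r(\tilde\eta a)\bigr)-\tfrac12\tilde\eta a^2
\end{align*}
as quadratic forms on $\mathcal H^1$; the two cross terms with $A$ in the outer position cancel because the multiplications by $\tilde\eta$ and $a$ commute. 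Subtracting this from $H-z$ in the form \eqref{eq:15091010}, removing $\tfrac12L$ and $q_{21}$, and writing $q=q_1+q_{21}+q_{22}$ and $p^r(\tilde\eta a)=(p^r\tilde\eta)a+\tilde\eta(p^ra)$, this identifies
\begin{align*}
q_3=(q_1+q_{22})-z+\tfrac14(\nabla^r\tilde\eta)(\Delta r)+\tfrac12(p^r\tilde\eta)a+\tfrac12\tilde\eta(p^ra+a^2).
\end{align*}

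The decisive step is the approximate Riccati bound of Lemma~\ref{lem:13.9.2.7.18}: for the upper sign $p^ra+a^2=2|\mathrm dr|^2(z-q_1)+e$ with $|e|\le Cr^{-1-\min\{\rho/2,\tau/2\}}$, so that, using $\tilde\eta|\mathrm dr|^2=\eta$,
\begin{align*}
q_3=(1-\eta)(q_1-z)+q_{22}+\tfrac14(\nabla^r\tilde\eta)(\Delta r)+\tfrac12(p^r\tilde\eta)a+\tfrac12\tilde\eta e.
\end{align*}
Now each term is $O(r^{-1-\min\{\rho/2,\tau/2\}})$ uniformly in $z\in I\cup I_+$: the first is bounded and supported in $\{r\le r_0\}$, where $r^{-1-\min\{\rho/2,\tau/2\}}$ is bounded below; $|q_{22}|\le Cr^{-1-\rho/2}$ by Condition~\ref{cond:12.6.2.21.13bbb}; $|\nabla^r\tilde\eta|\le Cr^{-1-\tau/2}$ by \eqref{eq:13.9.28.13.28} and $|\mathrm dr|\ge c$ on $\mathop{\mathrm{supp}}\eta$, which together with $|\Delta r|\le C$ and $|a|\le C$ (Lemma~\ref{lem:13.9.2.7.18}) handles the middle two terms; and $\tilde\eta$ is bounded on its support. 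This gives the asserted bound on $q_3$. I expect no genuine obstacle—the analytic content sits entirely in Lemma~\ref{lem:13.9.2.7.18}—beyond the mild care needed to keep all the manipulations at the level of quadratic forms on $\mathcal H^1$ rather than merely on $C^\infty_{\mathrm c}(M)$.
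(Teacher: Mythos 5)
Your proposal is correct and follows essentially the same route as the paper's own proof: expand $\tfrac12(A+a)\tilde\eta(A-a)$ against the decomposition \eqref{eq:15091010}, read off $q_3$, and invoke the Riccati bound of Lemma~\ref{lem:13.9.2.7.18} together with the decay of $\nabla^r\tilde\eta$, $q_{22}$ and the compact support of $1-\eta$ to obtain $|q_3|\le Cr^{-1-\min\{\rho/2,\tau/2\}}$. Your resulting formula for $q_3$ agrees term-by-term with the paper's (after rewriting $\tfrac12(p^r\tilde\eta)a=-\tfrac{\mathrm i}2(\nabla^r\tilde\eta)a$ and $(1-\eta)(q_1-z)=-(1-\eta)(z-q_1)$), so no gap remains.
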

\begin{proof}
Using the expression \eqref{eq:15091010} we can write
\begin{align*}
\begin{split}
H-z
&=
\tfrac 12(A+a)\tilde\eta (A-a)
+\tfrac12(p^r\tilde\eta a)
+\tfrac12\tilde\eta a^2
+\tfrac12L  
+q_1+q_2+\tfrac14(\nabla^r\tilde\eta)(\Delta r)
-z
.
\end{split}
\end{align*}
Hence the desired identity is obtained by setting 
\begin{align*}
q_3&=\tfrac 12\tilde\eta\bigl[
(p^ra)+a^2
-2|\mathrm dr|^2(z-q_1)
\bigr]
-(1-\eta)(z-q_1)
\\&\phantom{{}=}
+q_{22}
-\tfrac{\mathrm i}2(\nabla^r\tilde\eta )a
+\tfrac14(\nabla^r\tilde\eta)(\Delta r)
\end{align*}
and applying Lemma~\ref{lem:13.9.2.7.18}.
\end{proof}
\begin{remark}\label{rem:15.3.12.14.9}
The decay rate of $q_3$ depends very much on how accurately we can construct an
approximate solution to the radial Riccati equation \eqref{eq:15.3.11.19.35}.
We note that the change of variable $a=\pm (p^rb)/b$
reduces the equation \eqref{eq:15.3.11.19.35} to the second-order linear differential equation
\begin{align*}
(p^r)^2b-2|\mathrm dr|^2(z-q_1)b=0,
\end{align*}
which is nothing but a one-dimensional Schr\"odinger eigenequation with 
a long-range perturbation.
\end{remark}

Next we state and prove the key commutator estimate of the section
that is needed for our proof of the radiation condition bounds.
Let us introduce the regularized weight
\begin{align*}
\Theta=\Theta_\nu^\delta=\int_0^{r/R_\nu}(1+s)^{-1-\delta}\,\mathrm ds
=\bigl[1-(1+r/R_\nu)^{-\delta}\bigr]\big/\delta
;\quad  \delta>0,\ \nu\ge 0,
\end{align*}
which is the same weight as \eqref{eq:15.2.15.5.8} introduced in
Section~\ref{sec:besov-bound-resolv}.
We denote its derivatives in $r$ by primes such as \eqref{eq:15.2.15.5.9}.
Again we shall use Lemma~\ref{lem:15.2.18.14.28}  although we are going to choose
$\delta>0$ differently from Section~\ref{sec:besov-bound-resolv}.

\begin{lemma}\label{lem:14.10.22.18.8ff}
Let $I\subseteq \mathcal I$ be a compact interval
and fix any $\delta\in (0,\min\{\rho'/2,\rho/4,\tau/4\}]$ 
and $\beta\in (0,\sigma/2)$. 
Then there exist $c,C>0$ such that uniformly in $z\in I\cup I_+$ and $\nu\ge 0$,
as quadratic forms on  $\mathcal D(H)$
\begin{align*}
\begin{split}
\mathop{\mathrm{Im}}
\bigl((A-a)^*\Theta^{2\beta}(H-z)\bigr)
&\ge c(A-a)^*\Theta'\Theta^{2\beta-1}(A-a)
+cp_i^*\Theta^{2\beta}h^{ij}p_j
\\&\phantom{{}={}}
-Cr^{-1-\min\{\rho,\tau\}+2\delta}\Theta^{2\beta}
-\mathop{\mathrm{Re}}\bigl(\gamma\Theta^{2\beta}(H-z)\bigr),
\end{split}
\end{align*}
where $\gamma$ is a complex-valued function satisfying 
$|\gamma|\le Cr^{-\min\{\rho,\tau\}+2\delta}$.
\end{lemma}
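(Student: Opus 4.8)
The plan is to follow the scheme of the proof of Lemma~\ref{lem:14.10.4.1.17ffaa}, the new ingredients being that the asymptotic phase $a$ is treated through the factorisation of Lemma~\ref{lem:15.1.15.15.16ff} and that the extra weight power $\Theta^{2\beta}$ is controlled by the convexity \eqref{eq:15091112}; this last point is exactly where the hypothesis $\beta\in(0,\sigma/2)$ is used.

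First I would insert the identity $H-z=\tfrac12(A+a)\tilde\eta(A-a)+\tfrac12L+q_{21}+q_3$ of Lemma~\ref{lem:15.1.15.15.16ff} into $\mathop{\mathrm{Im}}\bigl((A-a)^*\Theta^{2\beta}(H-z)\bigr)$ and carry out the commutator computation as in the proofs of Lemmas~\ref{lem:12.7.3.0.45} and \ref{lem:12.10.12.8.57}, now with $\Theta^{2\beta}$ in place of $\Theta$. Writing $A+a=(A-a)+2a$, the radial block $\mathop{\mathrm{Im}}\bigl((A-a)^*\Theta^{2\beta}\cdot\tfrac12(A+a)\tilde\eta(A-a)\bigr)$ produces, after integrations by parts and completing the square in $A-a$, a leading contribution proportional to $(A-a)^*(\nabla^r\Theta^{2\beta})\tilde\eta(A-a)$; since $\nabla^r\Theta^{2\beta}=2\beta|\mathrm dr|^2\Theta'\Theta^{2\beta-1}$ and $|\mathrm dr|^2\tilde\eta=\eta\ge c$ on the ends, this gives the required radial positivity $c(A-a)^*\Theta'\Theta^{2\beta-1}(A-a)$ (so $\beta>0$ is used here). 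The spherical block $\mathop{\mathrm{Im}}\bigl((A-a)^*\Theta^{2\beta}\cdot\tfrac12L\bigr)$ is handled by Lemma~\ref{lem:12.10.12.8.57} with $\tilde g=\Theta^{2\beta}\ell$; its principal part is $p_i^*\Theta^{2\beta}(\nabla^2r)^{ij}p_j$ minus $\tfrac12p_i^*(\nabla^r\Theta^{2\beta})\ell^{ij}p_j$, and by \eqref{eq:15091112} the first dominates the second precisely when $\tfrac12\sigma>\beta$, leaving $cp_i^*\Theta^{2\beta}h^{ij}p_j$. The contributions of $a$ to this block, and the cross terms in the radial block involving $p^ra$, are controlled by the bounds $|\ell^{\bullet i}\nabla_ia|\le Cr^{-1-\min\{\rho/2,\tau/2\}}$, $\bigl|p^ra+a^2-2|\mathrm dr|^2(z-q_1)\bigr|\le Cr^{-1-\min\{\rho/2,\tau/2\}}$ and $\mathop{\mathrm{Im}}a\ge -Cr^{-1-\min\{\rho',\rho/2\}}$ of Lemma~\ref{lem:13.9.2.7.18}.

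It then remains to dispose of the error terms. The metric errors collect, as in \eqref{eq:14.9.4.18.22ffaa}, into $Q=r^{-1-\min\{1,\rho',\tau/2\}}\Theta^{2\beta}+p_i^*r^{-1-\min\{1,\rho',\tau/2\}}\Theta^{2\beta}g^{ij}p_j$; the spherical-direction part of $Q$ and of the terms $\mathop{\mathrm{Im}}\bigl((\mathrm dr)_ih^{ij}p_j\bigr)$ and $\mathop{\mathrm{Im}}\bigl((\nabla_i\Delta r)\ell^{ij}p_j\bigr)$ (with their $\Theta$-weights) is thrown by Cauchy--Schwarz onto $cp_i^*\Theta^{2\beta}h^{ij}p_j$, using the lower bound $h\ge\tfrac12\sigma r^{-1}|\mathrm dr|^2\ell$ on the spherical directions, and the remaining $p^*(\cdots)p$ part of $Q$ is converted via \eqref{eq:13.9.7.9.32} into $\mathop{\mathrm{Re}}\bigl((\cdots)(H-z)\bigr)$ plus a scalar remainder. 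The remaining scalar errors split into two kinds: those of size $O(r^{-1-\rho})\Theta^{2\beta}$ coming from $q_{21}$ via Lemma~\ref{lem:13.9.2.7.18} (including $2\beta|\mathrm dr|^2\Theta'\Theta^{2\beta-1}q_{21}$), which already fit into the stated error; and the whole of $\mathop{\mathrm{Im}}\bigl((A-a)^*\Theta^{2\beta}q_3\bigr)$, which carries an $A-a$ and is estimated by Cauchy--Schwarz against $c(A-a)^*\Theta'\Theta^{2\beta-1}(A-a)$ — since $\Theta'{}^{-1}\Theta\le Cr^{1+\delta}$ by Lemma~\ref{lem:15.2.18.14.28}, the factor $|q_3|\le Cr^{-1-\min\{\rho/2,\tau/2\}}$ then yields an error $O\bigl(r^{-1-\min\{\rho,\tau\}+\delta}\bigr)\Theta^{2\beta}$. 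The hypothesis $\delta\le\min\{\rho'/2,\rho/4,\tau/4\}$ guarantees that all the resulting exponents are at least $1+\min\{\rho,\tau\}-2\delta$, so everything fits into $Cr^{-1-\min\{\rho,\tau\}+2\delta}\Theta^{2\beta}$. Finally the $\Gamma$-dependent terms — including the piece $-C\Gamma r^{-\rho}\Theta^{2\beta}$ left over from $q_{21}$, for which one uses $\langle\mathop{\mathrm{Im}}(H-z)\rangle_\psi=-\Gamma\|\psi\|^2$ — are absorbed into $\mathop{\mathrm{Re}}\bigl(\gamma\Theta^{2\beta}(H-z)\bigr)$ exactly as in \eqref{eq:14.10.3.2.16ffaa}, and collecting all contributions to $\gamma$ gives $|\gamma|\le Cr^{-\min\{\rho,\tau\}+2\delta}$.

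The step I expect to be the main obstacle is the square-completion in the radial block: one must extract $(A-a)^*\Theta'\Theta^{2\beta-1}(A-a)$ with a coefficient that is positive and uniform in $z\in I\cup I_+$ and $\nu\ge0$, while showing at the same time that every cross term containing a derivative of the complex phase $a$ — which is controlled only in the averaged Riccati sense of Lemma~\ref{lem:13.9.2.7.18}, not pointwise in the derivatives of its ingredient $q_{11}$ — is absorbed either into this square, into $cp_i^*\Theta^{2\beta}h^{ij}p_j$, into the stated $r$-power error, or into $\mathop{\mathrm{Re}}\bigl(\gamma\Theta^{2\beta}(H-z)\bigr)$. The delicate matching of the weight estimates of Lemma~\ref{lem:15.2.18.14.28} (notably $\Theta'\ge c(\min\{R_\nu,r\})^\delta r^{-1-\delta}\Theta$) with the decay exponents is precisely what the constraint on $\delta$ is designed to make work.
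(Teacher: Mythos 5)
Your proposal matches the paper's own proof in structure and in every essential technical ingredient: you insert the factorisation of Lemma~\ref{lem:15.1.15.15.16ff}, treat the radial block by a square-completion that produces $\beta\Theta'\Theta^{2\beta-1}\eta$ as the coefficient of $(A-a)^*(\cdot)(A-a)$, treat the spherical block via Lemma~\ref{lem:12.10.12.8.57} with $\tilde g=\Theta^{2\beta}\ell$ and use \eqref{eq:15091112} to dominate the $-\tfrac12\beta p^*|\mathrm dr|^2\Theta'\Theta^{2\beta-1}\ell p$ term precisely under $\beta<\sigma/2$, control the $a$-contributions via Lemma~\ref{lem:13.9.2.7.18}, throw $q_3$ onto $\Theta'\Theta^{2\beta-1}$ by weighted Cauchy--Schwarz with the weight comparison $\Theta/\Theta'\le Cr^{1+\delta}$, and dispose of the $\Gamma$-term and the $p^*(\cdot)p$-part of $Q$ by converting to $\mathop{\mathrm{Re}}(\gamma\Theta^{2\beta}(H-z))$, just as in the paper's \eqref{eq:14.9.4.18.22ff}--\eqref{eq:14.9.17.12.17ff}. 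The one thing you leave implicit is the justification for computing the quadratic form on $\mathcal D(H)$ rather than only on $C^\infty_{\mathrm c}(M)$, which is Lemma~\ref{lem:undoing-commutators} (invoked at the start of the paper's proof); since you announce that you follow the scheme of Lemma~\ref{lem:14.10.4.1.17ffaa}, where this step is carried out, that is not a substantive gap.
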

\begin{proof}
Let $I$, $\delta$ and $\beta$ be as in the statement,
and we are going to expand and bound the left-hand side of the asserted inequality.
By Lemma~\ref{lem:undoing-commutators}
we may compute it formally on $C^\infty_{\mathrm c}(M)$, as long as we bound it below.
For the practical computations below we proceed similarly to the proof of 
Lemma~\ref{lem:12.7.3.0.45}
employing Lemma~\ref{lem:15.1.15.15.16ff} instead of \eqref{eq:15091010}.
By Lemmas~\ref{lem:15.1.15.15.16ff}, \ref{lem:13.9.2.7.18} and
\ref{lem:15.2.18.14.28} and the Cauchy--Schwarz inequality
it follows that uniformly in $z\in I\cup I_+$ and $\nu\ge 0$
\begin{align}
\begin{split}
&\mathop{\mathrm{Im}}\bigl((A-a)^*\Theta^{2\beta}(H-z)\bigr)\\
&=
\tfrac12\mathop{\mathrm{Im}}\bigl((A-a)^*\Theta^{2\beta}(A+a)\tilde\eta (A-a)\bigr)
+\tfrac12\mathop{\mathrm{Im}}\bigl((A-a)^*\Theta^{2\beta} L \bigr)
\\&\phantom{={}}
+\mathop{\mathrm{Im}}\bigl((A-a)^*\Theta^{2\beta}q_{21}\bigr)
+\mathop{\mathrm{Im}}\bigl((A-a)^*\Theta^{2\beta}q_3\bigr)
\\
&\ge 
\tfrac12(A-a)^*\bigl(\beta\Theta'-C_1r^{-1-2\delta}\Theta\bigr)\Theta^{2\beta-1}(A-a)
\\&\phantom{={}}
+\tfrac12\mathop{\mathrm{Im}}\bigl(A\Theta^{2\beta}L \bigr)
-\tfrac12\mathop{\mathrm{Im}}\bigl(a^*\Theta^{2\beta}L \bigr)
-C_1\Gamma r^{-\rho}\Theta^{2\beta}
-C_1Q
,
\end{split}\label{eq:14.9.4.18.22ff}
\end{align}
where 
\begin{align*}
Q=r^{-1-\min\{\rho,\tau\}+2\delta}\Theta^{2\beta}
+p_i^*r^{-1-\min\{\rho,\tau\}+2\delta}\Theta^{2\beta}g^{ij}p_j.
\end{align*}

We further estimate the terms on the right-hand side of \eqref{eq:14.9.4.18.22ff}.
By Lemma~\ref{lem:15.2.18.14.28} the first term of  \eqref{eq:14.9.4.18.22ff}
can be bounded as 
\begin{align}
\begin{split}
&\tfrac12(A-a)^*\bigl(\beta\Theta'-C_1r^{-1-2\delta}\Theta\bigr)\Theta^{2\beta-1}(A-a)
\\&
\ge 
c_1(A-a)^*\Theta'\Theta^{2\beta-1}(A-a)-C_2Q.
\end{split}\label{eq:14.9.17.12.16ff}
\end{align}
Reusing parts of computations in \eqref{eq:14.9.4.18.22ffaabbqqe}, 
\eqref{eq:14.9.17.12.15ffaabccbqqe} and \eqref{eq:1509101044},
we can write and bound the second term of \eqref{eq:14.9.4.18.22ff} as,
for any $\epsilon\in (0,1)$,
\begin{align}
\begin{split}
\tfrac12\mathop{\mathrm{Im}}\bigl(A\Theta^{2\beta} L \bigr)
&
=\tfrac14\bigl[p_i^*\Theta^{2\beta}\ell^{ij}p_j,\mathrm i A\bigr]
+\tfrac12\mathop{\mathrm{Re}}\bigl(A (1-\eta)(\Theta^{2\beta})'p^r\bigr)
\\&
=
\tfrac14p_i^*\Theta^{2\beta}\Bigl(2 (\nabla^2 r)^{ik}
+(\nabla^r\tilde\eta)(\mathrm dr\otimes\mathrm dr)^{ik}\Bigr)p_k
\\&\phantom{{}={}}
-\tfrac14p_i^*|\mathrm dr|^2(\Theta^{2\beta})'\ell^{ik}p_k
-\tfrac14\mathop{\mathrm{Im}}\bigl(\Theta^{2\beta} (\nabla_i\Delta r)\ell^{ij}p_i\bigr)
\\&\phantom{{}={}}
+\tfrac12A (1-\eta)(\Theta^{2\beta})'A
-\tfrac18(\nabla^r \eta)(\Delta r)(\Theta^{2\beta})'
\\&\phantom{{}={}}
+\tfrac18(1-\eta)|\mathrm dr|^2(\Delta r)(\Theta^{2\beta})''
\\&
\ge
\tfrac12p_i^*\Theta^{2\beta} h^{ik}p_k
-\tfrac12\beta p_i^*|\mathrm dr|^2\Theta'\Theta^{2\beta-1}\ell^{ik}p_k
\\&\phantom{{}={}}
-\epsilon p_i^*r^{-1}\Theta^{2\beta}\ell^{ij}p_j
-\epsilon^{-1}C_3Q.
\end{split}\label{eq:1510131}
\end{align}
As for the third term of \eqref{eq:14.9.4.18.22ff}
use \eqref{eq:14.12.23.13.49},
Lemma~\ref{lem:13.9.2.7.18} and
the Cauchy--Schwarz inequality, and then we obtain for the 
same $\epsilon\in (0,1)$ as above
\begin{align}
\begin{split}
&
-\tfrac12\mathop{\mathrm{Im}}\bigl(a^*\Theta^{2\beta}L\bigr)
\\
&=
\tfrac12p_i^*(\mathop{\mathrm{Im}}a)\Theta^{2\beta} \ell^{ij}p_j
-\tfrac12\mathop{\mathrm{Re}}\bigl((\nabla_ia)^*\Theta^{2\beta}\ell^{ij}p_j\bigr)
-\beta\mathop{\mathrm{Re}}\bigl((1-\eta)a^*\Theta'\Theta^{2\beta-1}p^r\bigr)
\\
&\ge 
-\epsilon p_i^*r^{-1}\Theta^{2\beta}\ell^{ij}p_j
-\epsilon^{-1}C_4Q
.
\end{split}
\label{eq:14.9.17.12.21ff}
\end{align}
As for the fourth term of \eqref{eq:14.9.4.18.22ff},
we have by the Cauchy--Schwarz inequality
\begin{align}
\begin{split}
-C_1\Gamma r^{-\rho}\Theta^{2\beta}
&= 
-\tfrac12C_1[H,\mathrm ir^{-\rho}\Theta^{2\beta}]
+C_1\mathop{\mathrm{Im}}\bigl(r^{-\rho}\Theta^{2\beta}(H-z)\bigr)
\\
&\ge 
-C_5Q
+C_1\mathop{\mathrm{Im}}\bigl(r^{-\rho}\Theta^{2\beta}(H-z)\bigr)
.
\end{split}
\label{eq:14.10.3.2.17ff}
\end{align}
Now we substitute the bounds 
\eqref{eq:14.9.17.12.16ff},
\eqref{eq:1510131}, 
\eqref{eq:14.9.17.12.21ff},
and 
\eqref{eq:14.10.3.2.17ff}
 into 
\eqref{eq:14.9.4.18.22ff},
and obtain
\begin{align}
\begin{split}
\mathop{\mathrm{Im}}\bigl((A-a)^*\Theta^{2\beta}(H-z)\bigr)
&\ge
c_1(A-a)^*\Theta'\Theta^{2\beta-1}(A-a)
\\&\phantom{={}}
+\tfrac12p_i^*\Theta^{2\beta-1}\Bigl(\Theta h^{ij}-4\epsilon r^{-1}\Theta\ell^{ij}-\beta|\mathrm dr|^2\Theta'\ell^{ij}\Bigr)p_j
\\&\phantom{={}}
-\epsilon^{-1}C_6Q
+C_1\mathop{\mathrm{Im}}\bigl(r^{-\rho}\Theta^{2\beta}(H-z)\bigr).
\end{split}\label{eq:14.9.4.18.23ff}
\end{align}

If we choose $\epsilon>0$ small enough, we have for the second term 
of \eqref{eq:14.9.4.18.23ff}
\begin{align}
\tfrac12p_i^*\Theta^{2\beta-1}\Bigl(h^{ij}-4\epsilon r^{-1}\Theta\ell^{ij}-\beta|\mathrm dr|^2\Theta'\ell^{ij}\Bigr)p_j
\ge c_2p_i^*\Theta^{2\beta}h^{ij}p_j.
\label{1510132}
\end{align}
Hence it finally remains to bound $-Q$ below,
but by \eqref{eq:13.9.7.9.32}
we can estimate it as 
\begin{align}
\begin{split}
-Q
&\ge 
-C_7r^{-1-\min\{\rho,\tau\}+2\delta}\Theta^{2\beta}
-2\mathop{\mathrm{Re}}
\bigl(r^{-1-\min\{\rho,\tau\}+2\delta}\Theta^{2\beta}(H-z)\bigr)
.
\end{split}
\label{eq:14.9.17.12.17ff}
\end{align}
By \eqref{eq:14.9.4.18.23ff}, \eqref{1510132} and \eqref{eq:14.9.17.12.17ff}, if we 
set 
\begin{align*}
\gamma=\mathrm iC_1r^{-\rho}+2\epsilon^{-1}C_6r^{-1-\min\{\rho,\tau\}+2\delta},
\end{align*}
then the assertion follows.
\end{proof}

\subsection{Applications}\label{subsec:15.3.9.14.28}
Now we are going to prove  Theorem~\ref{prop:radiation-conditions}
and Corollaries~\ref{thm:12.7.2.7.9b}--\ref{thm:13.9.9.8.23} in this order.

\subsubsection{Radiation condition bounds for complex spectral parameters}

\begin{proof}[Proof of Theorem~\ref{prop:radiation-conditions}]
Let $I\subseteq \mathcal I$ be a compact interval.
For $\beta=0$ the assertion is obvious by Theorem~\ref{thm:12.7.2.7.9},
and hence we may let $\beta\in (0,\beta_c)$.
We take any  
\begin{align*}
\delta\in \bigl(0,\min\{\rho'/2,\rho/4,\tau/4\}\bigr]
\cap \bigl(0,\min\{\rho/2,\tau/2\}-\beta\bigr).
\end{align*}
By Lemma~\ref{lem:14.10.22.18.8ff}, the Cauchy--Schwarz inequality 
and  Theorem~\ref{thm:12.7.2.7.9}
there exists $C_1>0$ such that for any state
$\phi=R(z) \psi$ with $\psi\in C^\infty_{\mathrm c}(M)$ and $z\in I_+$
\begin{align}
\begin{split}
&\bigl\|\Theta'{}^{1/2}\Theta^{\beta-1/2}(A-a)\phi\bigr\|^2
+\bigl\langle p_i^*\Theta^{2\beta}h^{ij}p_j\bigr\rangle_\phi\\
&\le 
C_1\Bigl[\|\Theta^\beta(A-a)\phi\|_{B^*}\|\Theta^\beta\psi\|_B
+
\bigl\|r^{-(1+\min\{\rho,\tau\})/2+\delta}\Theta^\beta\phi\bigr\|^2
\\&\phantom{=C_1\Bigl[}
+\bigl\|r^{(1-\min\{\rho,\tau\})/2+\delta}\Theta^\beta\psi\bigr\|^2\Bigr]\\
&\le C_2R_\nu^{-2\beta}\Bigl[\|r^\beta(A-a)\phi\|_{B^*}\|r^\beta\psi\|_B
+\|r^\beta\psi\|_B^2\Bigr].
\end{split}
\label{eq:14.10.4.1.18}
\end{align}
Here we note that $r^\beta(A-a)\phi\in B^*$ for each $z\in I_+$ and hence the quantity
on the right-hand side of \eqref{eq:14.10.4.1.18} is finite.
In fact, this can be verified by commuting $R(z)$ and powers of $r$ sufficiently many times 
and using the fact that $\psi\in C^\infty_{\mathrm c}(M)$.
Then by \eqref{eq:14.10.4.1.18} it follows 
\begin{align}
\begin{split}
&R_\nu^{2\beta}\bigl\|\Theta'{}^{1/2}\Theta^{\beta-1/2}(A-a)\phi\bigr\|^2
+R_\nu^{2\beta}\bigl\langle p_i^*\Theta^{2\beta}h^{ij}p_j\bigr\rangle_\phi\\
&\le
C_2\Bigl[\|r^\beta(A-a)\phi\|_{B^*}\|r^\beta\psi\|_B
+\|r^\beta\psi\|_B^2\Bigr].
\end{split}
\label{eq:14.10.4.1.19}
\end{align}
In the first term on the left-hand side of \eqref{eq:14.10.4.1.19}
we take the
supremum in $\nu\ge 0$ noting \eqref{eq:15.2.15.5.9},
and then obtain
\begin{align*}
\begin{split}
c_1\|r^\beta(A-a)\phi\|_{B^*}^2
\le
C_2\Bigl[\|r^\beta(A-a)\phi\|_{B^*}\|r^\beta\psi\|_B
+\|r^\beta\psi\|_B^2\Bigr],
\end{split}
\end{align*}
which implies
\begin{align}
\begin{split}
\|r^\beta(A-a)\phi\|_{B^*}^2
\le
C_3\|r^\beta\psi\|_B^2.
\end{split}
\label{eq:14.10.4.1.20}
\end{align}
As for the second term on the left-hand side of
\eqref{eq:14.10.4.1.19} we use \eqref{eq:14.10.4.1.20}, the concavity
of $\Theta$ and Lebesgue's monotone convergence theorem 
and then obtain by letting $\nu\to\infty$
\begin{align*}
\begin{split}
\langle p_i^*r^{2\beta}h^{ij}p_j\rangle_\phi
\le
C_4\|r^\beta\psi\|_B^2.
\end{split}
\end{align*}
Hence we are done.
\end{proof}

\subsubsection{Limiting absorption principle}

\begin{proof}[Proof of Corollary~\ref{thm:12.7.2.7.9b}]
Let $s>1/2$ and $\epsilon\in (0,\min\{s-1/2,\beta_c,(2+\rho)/4\})$ be as in the assertion. 
Let $s'=s-\epsilon$.
We decompose for $n\ge 0$ and $z, z'\in I_+$
\begin{align}
\begin{split}
R(z)-R(z')
&=
\chi_nR(z)\chi_n
-\chi_nR(z')\chi_n
\\&\phantom{{}=}
+\bigl(R(z)-\chi_nR(z)\chi_n\bigr)
-\bigl(R(z')-\chi_nR(z')\chi_n\bigr).
\end{split}\label{eq:14.12.30.20.42}
\end{align}
We estimate the last two terms of \eqref{eq:14.12.30.20.42} as follows:
By Theorem~\ref{thm:12.7.2.7.9}
we have uniformly in $n\ge 0$ and $z, z'\in I_+$
\begin{align}
\begin{split}
&
\|R(z)-\chi_nR(z)\chi_n\|_{\mathcal B(\mathcal H_s,\mathcal H_{-s})}
\\&
\le 
\|r^{-s}R(z)\bar\chi_nr^{-s}\|_{\mathcal B(\mathcal H)}
+\|r^{-s}\bar\chi_nR(z)\chi_nr^{-s}\|_{\mathcal B(\mathcal H)}
\\&
\le C_1 R_n^{s'-s}=C_1R_n^{-\epsilon},
\end{split}
\end{align}
and, similarly,
\begin{align}
\|R(z')-\chi_nR(z')\chi_n\|_{\mathcal B(\mathcal H_s,\mathcal H_{-s})}
\le C_2 R_n^{-\epsilon}.
\end{align}
As for the first and second terms on the right-hand side of 
\eqref{eq:14.12.30.20.42},
using the expressions \eqref{eq:12.6.27.0.43} and 
\begin{align}
\mathrm{i}[H,\chi_{n+1}]
=\mathop{\mathrm{Re}}(\chi_{n+1}'p^r)
=\mathop{\mathrm{Re}}(\chi_{n+1}'A)
\label{eq:1}
\end{align}
and noting the identity $\overline{a_{\bar z}}=a_z$,
we write for $n\ge 0$
\begin{align}
\begin{split}
&\chi_nR(z)\chi_n-\chi_nR(z')\chi_n
\\&=\chi_nR(z)\bigl\{\chi_{n+1}(H-z')-(H-z)\chi_{n+1}\bigr\}R(z')\chi_n
\\&
=
\tfrac{\mathrm i}2\chi_nR(z)\chi_{n+1}'(A-a_{z'})R(z')\chi_n
+\tfrac{\mathrm i}2\chi_nR(z)(A+a_{\bar z})^*\chi_{n+1}'R(z')\chi_n.
\\&\phantom{{}=}
-\tfrac{\mathrm i}2\chi_nR(z)(a_z-a_{z'})\chi_{n+1}'R(z')\chi_n
-(z-z')\chi_nR(z)\chi_{m}R(z')\chi_n
\\&\phantom{{}=}
-(z-z')\chi_nR(z)\chi_{m,n+1}(a_z+a_{z'})^{-1}(A-a_{z'})R(z')\chi_n
\\&\phantom{{}=}
+(z-z')\chi_nR(z)(A+a_{\bar z})^*\chi_{m,n+1}(a_z+a_{z'})^{-1}R(z')\chi_n
\\&\phantom{{}=}
-(z-z')\chi_nR(z)\bigl[A,\chi_{m,n+1}(a_z+a_{z'})^{-1}\bigr]R(z')\chi_n.
\end{split}
\label{eq:190531}
\end{align}
Here $m\ge 0$ is fixed so that $(a_z+a_{z'})^{-1}$ is not singular on $\mathop{\mathrm{supp}}\bar\chi_m$.
Then by Theorems~\ref{thm:12.7.2.7.9} and \ref{prop:radiation-conditions}
we have uniformly in $n\ge 0$ and $z,z'\in I_+$
\begin{align}
\|\chi_nR(z)\chi_n-\chi_nR(z')\chi_n\|_{\mathcal B(\mathcal H_s,\mathcal H_{-s})}
\le C_3R_n^{-\epsilon}+C_4R_n^{\max\{0,1-\epsilon\}}|z-z'|.
\label{eq:14.12.30.20.43}
\end{align}
Note that the upper bound $\epsilon<(2+\rho)/4$ is used to control the 
last term of \eqref{eq:190531}.

Summing up \eqref{eq:14.12.30.20.42}--\eqref{eq:14.12.30.20.43},
we obtain uniformly in $n\ge 0$ and $z,z'\in I_+$ 
\begin{align*}
\|R(z)-R(z')\|_{\mathcal B(\mathcal H_s,\mathcal H_{-s})}
=C_5 R_n^{-\epsilon}+C_5R_n^{\max\{0,1-\epsilon\}}|z-z'|.
\end{align*}
Now, if $|z-z'|\le 1$, then we choose $R_n\le |z-z'|^{-1}<2R_n$,
and then obtain 
\begin{align}
\|R(z)-R(z')\|_{\mathcal B(\mathcal H_s,\mathcal H_{-s})}
\le C_6|z-z'|^{\min\{\epsilon,1\}}.
\label{eq:ho}
\end{align}
The same bound is trivial for $|z-z'|\ge 1$, and hence 
the H\"older continuity \eqref{eq:14.12.30.21.52} for $R(z)$ follows from
\eqref{eq:ho}.  The H\"older continuity \eqref{eq:14.12.30.21.52} for
$pR(z)$ follows by using in addition \eqref{eq:13.9.7.9.32} or
alternatively the first resolvent equation.

The existence of the limits of \eqref{eq:14.12.30.21.53}
is an immediate consequence of \eqref{eq:14.12.30.21.52}.
By Theorem~\ref{thm:12.7.2.7.9} the limits $p^\alpha R(\lambda+\mathrm i0)$
actually map into $B^*$,
and moreover they extend  continuously to  maps $B\to B^*$ by a density argument.
Finally, since $R(z)$ for $z\in I_+$ maps into $\mathcal N$, 
it follows by \eqref{eq:14.12.30.21.52} and approximation
arguments 
that $R(\lambda\pm\mathrm i0)$ map into $\mathcal N$.
Hence we are done.
\end{proof}

\subsubsection{Radiation condition bounds for real spectral parameters}
\label{subsec:Application}

\begin{proof}[Proof of Corollary~\ref{thm:radiation-conditions}]
 The corollary 
follows from  Theorem~\ref{prop:radiation-conditions},
Corollary~\ref{thm:12.7.2.7.9b} 
and approximation
arguments. Note the elementary property 
\begin{align*}
  \|\psi\|_{B^*}=\sup_{n\geq 0} \|\chi_n \psi\|_{B^*};\quad \psi\in B^*.
\end{align*}
Hence we are done.
\end{proof}

\subsubsection{Sommerfeld uniqueness result}

\begin{proof}[Proof of Corollary~\ref{thm:13.9.9.8.23}]
Let $\lambda\in\mathcal I$, 
$\phi\in \mathcal H_{\mathrm{loc}}$ and $\psi\in r^{-\beta}B$ with
$\beta\in [0,\beta_c)$.
We first assume $\phi=R(\lambda+\mathrm i0)\psi$.
Then \ref{item:13.7.29.0.29} and \ref{item:13.7.29.0.28} of 
the corollary obviously hold by
Corollaries~\ref{thm:12.7.2.7.9b} and \ref{thm:radiation-conditions}.
Conversely, assume \ref{item:13.7.29.0.29} and \ref{item:13.7.29.0.28}
of the corollary, and let 
\begin{align*}
\phi'=\phi-R(\lambda+\mathrm i0)\psi.
\end{align*}
Then by Corollaries~\ref{thm:12.7.2.7.9b} and \ref{thm:radiation-conditions}
it follows that 
$\phi'$ 
satisfies \ref{item:13.7.29.0.29} and \ref{item:13.7.29.0.28} 
of the corollary with $\psi=0$.
In addition, we can verify $\phi'\in B^*_0$ by the virial-type argument.
In fact noting the identity
\begin{align*}
2\mathop{\mathrm{Im}}\bigl(\chi_\nu(H-\lambda)\bigr)
=(\mathop{\mathrm{Re}}a)\chi'_\nu+\mathop{\mathrm{Re}}\bigl(\chi'_\nu(A-a)\bigr),
\end{align*}
cf.\ \eqref{eq:12.6.27.0.43} and \eqref{eq:1},
we conclude that 
\begin{align}
0\le \bigl\langle (\mathop{\mathrm{Re}}a)\bar\chi'_\nu\bigr\rangle_{\phi'}
\le 
\mathop{\mathrm{Re}}\bigl\langle \chi'_\nu(A-a)\bigr\rangle_{\phi'}.
\label{eq:13.9.8.10.55}
\end{align}
Taking the limit $\nu\to \infty$ and using $\phi'\in r^\beta B^*$ and 
$(A-a)\phi'\in r^{-\beta}B^*_0$ 
in (\ref{eq:13.9.8.10.55}),
indeed 
we obtain $\phi\in B^*_0$.
 By Theorem~\ref{thm:13.6.20.0.10} 
it follows $\phi'=0$, and hence $\phi=R(\lambda+\mathrm i0)\psi$.
\end{proof}

\end{document}